\theoremstyle{plain}
\newtheorem{theorem}{Theorem}
\newtheorem{definition}[theorem]{Definition}
\newtheorem{lemma}[theorem]{Lemma}
\newtheorem{example}[theorem]{Example}
\newtheorem{assumption}[theorem]{Assumption}
\newtheorem{claim}[theorem]{Claim}
\newcommand\ol{\overline}
\newcommand\sB{{\mathcal B}}
\newcommand\RR{{\mathbb R}}
\newcommand\NN{{\mathbb N}}
\newcommand\si{\sigma}
\newcommand\sN{{\mathcal N}}
\renewcommand\ell{l}
\newcommand\bW{\mathbf{W}}
\newcommand\bA{\mathbf{A}}
\newcommand\bK{\mathbf{K}}
\newcommand\pr{\mathrm{Pr}}
\newcounter{mycount}
\numberwithin{equation}{section}
\numberwithin{theorem}{section}
\numberwithin{figure}{section}
\title[Community Detection in k-Community Gaussian Mixture Model]{Exact Recovery of Community Detection in k-Community Gaussian Mixture Model}
\author{Zhongyang Li}
\address{Department of Mathematics,
University of Connecticut,
Storrs, Connecticut 06269-3009, USA}
\email{zhongyang.li@uconn.edu}
\urladdr{\url{https://mathzhongyangli.wordpress.com}}
\begin{document}
\maketitle

\begin{abstract}
We study the community detection problem on a Gaussian mixture model, in which vertices are divided into $k\geq 2$ distinct communities. The major difference in our model is that the intensities for Gaussian perturbations are different for different entries in the observation matrix, and we do not assume that every community has the same number of vertices.  We explicitly find the threshold for the exact recovery of the maximum likelihood estimation. Applications include the community detection on hypergraphs.
\end{abstract}

\section{Introduction}

Community structures are ubiquitous in graphs modeling natural and social phenomena. In natural sciences, atoms form molecules so that atoms in the same molecule have stronger connections compared to those in different molecules. In social sciences, individuals form groups in such a way that individuals in the same group have more communications compared to individuals in different groups. The main aim for community detection is to determine the specific groups that specific individuals belong to based on observations of (random) connections between individuals. Identifying different communities in the stochastic block model is a central topic in many fields of science and technology; see \cite{EA18} for a summary.

In this paper we study the community detection problem for the Gaussian mixture model, in which there are $n$ vertices belonging to $k$ ($k\geq 2$) different communities. We observe a $p\times 1$ vector for each one of the $n$ vertices, perturbed by a $p\times 1$ Gaussian vector with independent (but not necessarily identically distributed), mean-0 entries. More precisely, each entry of the $p\times n$ perturbation matrix is obtained by a multiple of a standard Gaussian random variable, while the intensities of different entries are different. Given such an observation, we find the maximum likelihood estimation (MLE) for the community assignment, and study the probability that the MLE equals the true community assignment as the number of vertices $n\rightarrow\infty$. If this probability tends to $1$ as $n\rightarrow\infty$, we say exact recovery occurs. 
Heuristically, it is natural to conjecture that exact recovery may occur when the intensities of the perturbations are small but does not occur when these intensities are large. The major theme of the paper is to investigate how small the intensities of the perturbations are needed in order to ensure the exact recovery, and how large the intensities are required to stop the occurrences of the exact recovery.

Clustering problems in the Gaussian mixture model has been studied extensively, see \cite{JB67,DLR77,DP81,FR02} for an incomplete list. We mention some recent related work here.

The Gaussian mixture model when all the entries of the perturbation matrix are i.i.d was studied in \cite{CY20}, in which a condition for the exact recovery of the semi-definite programming is proved. When all the communities have the same size, a condition that exact recovery does not occur was also proved in \cite{CY20} when the number of communities $k\leq \log n$. The case of unbalanced communities was investigated in \cite{GV18}. In this paper, we obtain conditions when the exact recovery happens and does not happen for the more general Gaussian mixture model when the entries of the perturbation matrix are not necessarily identically distributed. Our result can be applied to the special case when intensities of the Gaussian perturbations are all equal, and in particular, we obtain a condition that the exact recovery of MLE does not occur when the number of communities $k$ is $e^{o(\log n)}$ in the hypergraph model, for any fixed constant $C_2$ independent of $n$.

When $p=n$ in our model, we may consider the rows and columns of the observation matrix are indexed by the $n$ vertices, and each entry represents an edge. In this case we obtain the community detection problem on a graph.  When $p=n^s$ with $s\geq 2$, we may consider the rows of the observation matrix are indexed by ordered $s$-tuples of vertices, and each entry of the observation matrix represents a $(s+1)$-hyperedge. In this case we obtain the community detection problem on a hypergraph. Community detections on hypergraphs with Gaussian perturbations were studied in \cite{KBG}, where the vertices are divided into two equal-sized communities, and a weight-1 $(d+1)$-hyperedge exists if and only if all the vertices are in the same group.
The results proved in this paper can be applied to the community detection problems on hypergraphs with Gaussian perturbation to obtain necessary and sufficient conditions for the exact recovery, in which the number of communities is arbitrary and communities are not necessarily equal-sized; moreover the hyperedges have general weights as represented in the (unperturbed) observation matrix. Community detection problems on random graphs were also studied in \cite{HLL83,DF89,MNS13,LM14,ABH15,AS18}.

The organization of the paper is as follows. In section \ref{bmr}, we review the definition of the Gaussian mixture models and hypergraphs, and state the main results proved in this paper. In section \ref{gmec}, we prove conditions for the exact recovery of the Gaussian mixture model when the number of vertices in each community is unknown. In Section \ref{hg1}, we apply the results proved in section \ref{gmec} to the exact recovery of the community detection in hypergraphs, and also prove conditions when exact recovery does not occur in hypergraphs under the assumption that the number of vertices in each community is unknown. In section \ref{gm2}, we prove conditions for the exact recovery of the Gaussian mixture model when the number of vertices in each community is known and fixed. In Section \ref{hg2}, we prove conditions when exact recovery does not occur in hypergraphs under the assumption that the number of vertices in each community is known and fixed. In Section \ref{adl}, we prove a lemma used to obtain the main results of the paper.

\section{Backgrounds and Main Results}\label{bmr}

In this section, we review the definition of the Gaussian mixture models and hypergraphs, and state the main results proved in this paper. 

\subsection{Gaussian mixture model}

Let $n\geq k\geq 2$ be positive integers. Let 
\begin{eqnarray*}
[n]=\{1,2,\ldots,n\}
\end{eqnarray*}
be a set of $n$ vertices divided into $k$ different communities. Let
\begin{eqnarray*}
[k]:=\{1,\ldots,k\}
\end{eqnarray*}
be the set of communities. A mapping $x: [n]\rightarrow [k]$ which assigns a unique community represented by an integer in $[k]$ to each one of the $n$ vertices in $[n]$ is called a community assignment mapping.  Let $\Omega$ be the set consisting of all the possible  mappings from $[n]$ to $[k]$; i.e.
\begin{eqnarray*}
\Omega:=\{x:[n]\rightarrow[k]\}.
\end{eqnarray*}
Each mapping in $\Omega$ is a community assignment mapping.

Let $p\geq 1$ be a positive integer. Let 
\begin{eqnarray*}
\theta:\Omega\times[p]\times [k]\rightarrow\RR
\end{eqnarray*}
be a function on the set $\Omega\times [p]\times [k]$ taking real values.

For a community assignment mapping $x\in \Omega$, let $\mathbf{A}_x$ be a $p\times n$ matrix whose entries are given by 
\begin{eqnarray}
(\mathbf{A}_x)_{i,j}=\theta(x,i,x(j)),\ \forall i\in[p], j\in[n].\label{dA}
\end{eqnarray}
Let $\Sigma$ be a $p\times n$ matrix with positive real entries defined by
\begin{eqnarray*}
\Sigma:=(\sigma_{i,j})_{i\in[p],j\in[n]}\in (\RR^{+})^{p\times n}
\end{eqnarray*}

Let $P,Q$ be two $p\times n$ matrices. Define the inner product of $P,Q$ by
\begin{eqnarray*}
\langle P,Q\rangle=\sum_{i\in[p]}\sum_{j\in[n]} P_{i,j}Q_{i,j}.
\end{eqnarray*}
The norm $\|P\|$ for a matrix $P$ is defined by
\begin{eqnarray*}
\|P\|=\sqrt{\langle P, P\rangle}.
\end{eqnarray*}
Let $P*Q$ be a $p\times n$ matrix defined by
\begin{eqnarray*}
P*Q:=(P_{i,j}Q_{i,j})_{i\in[p],j\in[n]}
\end{eqnarray*}

Define a random observation matrix $\bK_x$ by
\begin{eqnarray}
\mathbf{K}_x=\mathbf{A}_x+ \Sigma*\bW;\label{dkx}
\end{eqnarray}
where $\bW$ is a random $p\times n$ matrix with i.i.d.~standard Gaussian entries. Note that if the entries of $\Sigma$ are not all equal, the perturbation matrix $\Sigma*W$ has independent but not identically distributed entries.

Let $y\in \Omega$ be the true community assignment mapping. Given the observation $\bK_y$, the goal is to determine the true community assignment mapping $y$. We shall apply the maximum likelihood estimation (MLE).  

Let $n_1,\ldots,n_k$ be positive integers satisfying
\begin{eqnarray*}
\sum_{i=1}^{k}n_i=n. 
\end{eqnarray*}
and
\begin{eqnarray*}
|y^{-1}(i)|=n_i,\ \forall i\in [k];
\end{eqnarray*}
i.e. $n_i$ is the number of vertices in community $i$ for each $i\in[k]$ under the mapping $y$.

Let
\begin{eqnarray*}
\Omega_{n_1,\ldots,n_k}:=\{x\in \Omega: |x^{-1}(i)|=n_i,\ \forall i\in[k]\}
\end{eqnarray*}
be the set of all the community assignment mappings such that there are exactly $n_i$ vertices in the community $i$, for each $i\in[k]$.

For each real number $c\in(0,1)$, let
\begin{eqnarray*}
\Omega_c:=\left\{x\in \Omega: \frac{|x^{-1}(i)|}{\sum_{j\in[k]}|x^{-1}(j)|}\geq c,\ \forall i\in[k]\right\},
\end{eqnarray*}
i.e. $\Omega_c$ consists of all the community assignment mappings such that the percentage of the numbers of vertices in each community is at least $c$.

Assume the true community assignment mapping $y\in \Omega_c$ for some $c\in(0,1)$. Let $\Phi$ be an $p\times n$ matrix whose entries are given by
\begin{eqnarray*}
(\Phi)_{i,j}=\frac{1}{\sigma_{i,j}},\ \forall i\in[p], j\in[n];
\end{eqnarray*}
in other words, the $(i,j)$-entry of $\Phi$ is the reciprocal of the $(i,j)$-entry of $\Sigma$.
Define
\begin{eqnarray}
\hat{y}:=\mathrm{argmin}_{x\in \Omega_{\frac{2c}{3}}}\|\Phi*(\bK_y-\bA_x)\|^2\label{dhy}
\end{eqnarray}
and
\begin{eqnarray}
\check{y}:=\mathrm{argmin}_{x\in \Omega_{n_1,\ldots,n_k}}\|\Phi*(\bK_y-\bA_x)\|^2\label{dcy}
\end{eqnarray}

Then we have the following lemma 
\begin{lemma}
$\hat{y}$ is the MLE with respect to the observation $\bK_y$ in $\Omega_{\frac{2c}{3}}$. $\check{y}$ is the MLE with respect to the observation $\bK_y$ in $\Omega_{n_1,\ldots,n_k}$. 
\end{lemma}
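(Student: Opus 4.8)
The plan is to write out the likelihood of the observation $\bK_y$ under each candidate community assignment $x$ and to recognize that maximizing this likelihood over a prescribed subset of $\Omega$ is exactly the minimization in \eqref{dhy} (respectively \eqref{dcy}). The parameter being estimated is the assignment $x\in\Omega$, while $\theta$ and $\Sigma$ are treated as known; this is what makes the normalization in the Gaussian density harmless.

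The first step is to record the law of $\bK_y$ under the hypothesis that $x$ is the true assignment: by \eqref{dkx} and the fact that $\bW$ has i.i.d.\ standard Gaussian entries, the entries $(\bK_y)_{i,j}$, $(i,j)\in[p]\times[n]$, are mutually independent and $(\bK_y)_{i,j}$ is Gaussian with mean $(\bA_x)_{i,j}=\theta(x,i,x(j))$ and variance $\sigma_{i,j}^2$. Hence the likelihood factorizes over $(i,j)$ into one-dimensional Gaussian densities. The second step is to take logarithms: the product $\prod_{i,j}(\sqrt{2\pi}\,\sigma_{i,j})^{-1}$ contributes an additive constant that does \emph{not} depend on $x$ (because $\Sigma$ does not), so maximizing the log-likelihood over $x$ is the same as minimizing $\sum_{i\in[p]}\sum_{j\in[n]}\sigma_{i,j}^{-2}\bigl((\bK_y)_{i,j}-(\bA_x)_{i,j}\bigr)^2$. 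The third step is to identify this sum, using $(\Phi)_{i,j}=1/\sigma_{i,j}$ together with the definitions of $*$, $\langle\cdot,\cdot\rangle$ and $\|\cdot\|$, as $\|\Phi*(\bK_y-\bA_x)\|^2$. The final step is purely formal: restricting the maximization to $\Omega_{\frac{2c}{3}}$ yields $\hat y$ as defined in \eqref{dhy}, and restricting it to $\Omega_{n_1,\ldots,n_k}$ yields $\check y$ as defined in \eqref{dcy}.

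I do not expect a genuine obstacle in this argument; it is essentially the standard translation of a Gaussian MLE into weighted least squares. The only two points that warrant a sentence of care are: (i) making explicit that the only unknown is the community assignment, so that the $\sigma_{i,j}$-dependent normalizing constant is independent of $x$ and may be discarded; and (ii) observing that the two assertions of the lemma are the same statement applied to two different feasible sets, $\Omega_{\frac{2c}{3}}$ and $\Omega_{n_1,\ldots,n_k}$, so no separate work is needed for $\check y$.
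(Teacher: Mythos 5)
Your argument is correct and is essentially the same as the paper's: both write the Gaussian density of $\bK_y$ under a candidate assignment $x$, observe that the $\Sigma$-dependent normalizing constant is independent of $x$, and identify the (negative) exponent with $\tfrac12\|\Phi*(\bK_y-\bA_x)\|^2$, so that maximizing the density over the given feasible set is minimizing this norm. No further comment is needed.
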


\begin{proof}By definition, the MLE with respect to the observation $\bK_y$ in $\Omega_{\frac{2c}{3}}$ (resp.\ $\Omega_{n_1,\ldots,n_k}$) should maximize the probability density of the observation $\bK_y$ among all $x\in \Omega_{\frac{2c}{3}}$ (resp.\ $x\in \Omega_{n_1,\ldots,n_k}$). If the true community assignment mapping $y=x$, we may consider $\bK_{y}$ as a random matrix with mean value $\bA_x$ and independent entries, such that variance of its $(i,j)$-entry is $\sigma_{i,j}^2$. Therefore the probability density of $\bK_y$ is given by
\begin{eqnarray*}
\left(\prod_{i\in[p],j\in[n]}\frac{1}{\sqrt{2\pi}\si_{i,j}}\right)e^{-\sum_{i\in[p],j\in[n]}\frac{(\bK_y-\bA_x)_{i,j}^2}{2\si_{i,j}^2}},
\end{eqnarray*}
where the exponent is exactly
\begin{eqnarray*}
-\frac{1}{2}\|\Phi*(\bK_y-\bA_x)\|^2.
\end{eqnarray*}
It is straightforward to check that the minimizer of $\|\Phi*(\bK_y-\bA_x)\|^2$ is exactly the maximizer of the probability density. Then the lemma follows.
\end{proof}

We shall investigate under which conditions we have $\check{y}=y$ and $\hat{y}=y$ with high probability.

To state the main theorems proved in this paper, we first introduce a few assumptions.

For $x,y\in \Omega$, let
\begin{eqnarray}
L_{\Phi}(x,y):=\|\Phi*(\bA_x-\bA_y)\|^2.\label{lxy}
\end{eqnarray}

For $x\in \Omega$, let
\begin{eqnarray*}
n_i(x)=|x^{-1}(i)|,\ \forall\ i\in [k];
\end{eqnarray*}
then $n_i(x)$ is the number of vertices in community $i$ under the community assignment mapping $x$.
It is straightforward to check that
\begin{eqnarray*}
\sum_{i=1}^k n_i(x)=n.
\end{eqnarray*}

For $i,j\in[k]$ and $x,z\in\Omega$, 
let $t_{i,j}(x,z)$ be a nonnegative integer given by
\begin{eqnarray*}
t_{i,j}(x,z)=|x^{-1}(i)\cap z^{-1}(j)|.
\end{eqnarray*}
That is, $t_{i,j}(x,z)$ is the number of vertices in $[n]$ which are in community $i$ under the mapping $x$ and in community $j$ under the mapping $z$.
Then
\begin{eqnarray}
\sum_{j\in[k]}t_{i,j}(x,z)=n_i(x);\qquad \sum_{i\in[k]}t_{i,j}(x,z)=n_j(z);\label{tn}
\end{eqnarray}

Define a set 
 \begin{eqnarray}
 \mathcal{B}:=\left\{(t_{1,1},t_{1,2},\ldots,t_{k,k})\in\{0,1,2,\ldots,n\}^{k^2}:\sum_{i=1}^{k}t_{i,j}=n_j\right\}.\label{dsb}
 \end{eqnarray}
 For $\epsilon>0$, define a set $\sB_{\epsilon}$ consisting of all the $(t_{1,1},t_{1,2},\ldots,t_{k,k})\in \sB$ satisfying all the following conditions:
 \begin{enumerate}
\item $\forall\ i\in[k],\  \max_{j\in[k]}t_{j,i}\geq n_i-n\epsilon$.
\item For $i\in[k]$, let $t_{w(i),i}=\max_{j\in[k]}t_{j,i}$. Then $w$ is a bijection from $[k]$ to $[k]$.
\item $w$ is $\theta$-preserving, i.e. for any $x\in\Omega$, $i\in[p]$ and $a\in[k]$, we have
\begin{eqnarray*}
 \theta(x,i,a)=\theta(w\circ x,i,w(a)).
 \end{eqnarray*}
\end{enumerate}

We may assume $\theta$ and $\Sigma$ satisfy the following assumptions.
 
 \begin{assumption}\label{ap27}
 \begin{enumerate}
 \item There exists $B_1>0$, such that for all $i,j\in[p]\times n$, we have
 \begin{eqnarray*}
 |\si_{i,j}|\leq B_1.
 \end{eqnarray*}
 \item Assume $\epsilon\in(0,\frac{2c}{3k})$, $x\in \Omega_{\frac{2c}{3}}$ and $y\in \Omega_c$. Then for all $x,y\in \Omega$, and
 \begin{eqnarray}
 (t_{1,1}(x,y),t_{1,2}(x,y),\ldots,t_{k,k}(x,y))\in \mathcal{B}\setminus \mathcal{B}_{\epsilon},\label{tcd}
 \end{eqnarray}
 we have
 \begin{eqnarray*}
 \sum_{i\in[p],j\in[n]}(\theta(x,i,x(j))-\theta(y,i,y(j)))^2\geq T(n)
 \end{eqnarray*}
 \end{enumerate}
 \end{assumption}
 
 We now introduce an equivalence condition on $\Omega$.

\begin{definition}\label{dfeq}
For $x\in \Omega$, let $C(x)$ consist of all the $x'\in \Omega$ such that $x'$ can be obtained from $x$ by a $\theta$-preserving bijection of communities.  More precisely, $x'\in C(x)\subset \Omega$ if and only if the following conditions hold
\begin{enumerate}
\item for $i,j\in[n]$, $x(i)=x(j)$ if and only if $x'(i)=x'(j)$; and
\item for $i\in[p]$ and $j\in[n]$, $\theta(x,i,x(j))=\theta(x',i,x'(j))$.
\end{enumerate}
Note that condition (1) above is equivalent of saying that there is a bijection $\eta:[k]\rightarrow [k]$, such that
\begin{eqnarray*}
x=\eta\circ x'
\end{eqnarray*}
where $\circ$ denotes the composition of two mappings.

We define an equivalence relation on $\Omega$ as follows: we say $x,z\in \Omega$ are equivalent if and only if $x\in C(z)$. Let $\ol{\Omega}$ be the set of all the equivalence classes in $\Omega$.
\end{definition}
 
 \begin{assumption}\label{ap214}Assume $\epsilon\in(0,\frac{2c}{3k})$, $x\in \Omega_{\frac{2c}{3}}$ and $y\in \Omega_c$. Assume there exists $\Delta>0$ such that:

Let $y_1,y_2\in\Omega_{\frac{2c}{3}}$ and $a,b\in[k]$ and $a\neq b$. Let $i,j\in[n]$ such that $i\in y_1^{-1}(a)\cap x^{-1}(b)$. Let $y_2:[n]\rightarrow[k]$ be defined as follows
\begin{eqnarray*}
y_2(j):=\begin{cases}b&\mathrm{if}\ j=i\\ y_1(j)&\mathrm{if}\ j\in[n]\setminus\{i\} \end{cases}.
\end{eqnarray*}
When
\begin{eqnarray*}
\left(t_{1,1}(x,y_1),t_{1,2}(x,y_1),\ldots,t_{k,k}(x,y_1)\right)\in \mathcal{B}_{\epsilon}
\end{eqnarray*}
such that for all $i\in[k]$
\begin{eqnarray*}
t_{i,i}=\max_{j\in[k]}t_{j,i}(x,y_1);
\end{eqnarray*}
$\epsilon\in\left(0,\frac{2c}{3k}\right)$; and
\begin{eqnarray*}
y_1\notin C(x);\ 
\end{eqnarray*}
we have
\begin{eqnarray*}
L_{\Phi}(x,y_1)-L_{\Phi}(x,y_2)\geq \Delta(1+o(1)).
\end{eqnarray*}
where $o(1)\rightarrow 0$, as $n\rightarrow\infty$.

\end{assumption}

We may assume that $\theta$ satisfies the following assumption.

\begin{assumption}\label{ap24}Let $x,z\in \Omega$. If for any $i\in[p]$ and $j\in[n]$, 
\begin{eqnarray}
\theta(x,i,x(j))=\theta(z,i,z(j));\label{sxze}
\end{eqnarray}
then $x\in C(z)$.
\end{assumption}

Assumption \ref{ap24} actually says that for two community assignment mappings $x$ and $z$, if they are not equivalent then $\theta\circ x$ and $\theta \circ z$ are different. In other words, it assumes that $\theta$ can distinguish different equivalence classes in $\Omega$. See Section \ref{ep24} for examples.

\begin{theorem}\label{p215}Assume $y\in \Omega_c$ is the true community assignment mapping. Suppose that Assumptions \ref{ap24}, \ref{ap27} and \ref{ap214} hold. Let $\epsilon\in(0,\frac{2c}{3k})$. If
\begin{eqnarray}
\lim_{n\rightarrow\infty} n\log k-\frac{T(n)}{8B_1^2}=-\infty,\label{ld1}
\end{eqnarray}
and for any constant $\delta>0$ independent of $n$,
\begin{eqnarray}
\lim_{n\rightarrow\infty}\log k+\log n-\frac{\Delta(1-\delta)}{8}=-\infty,\label{ld2}
\end{eqnarray}
then $\lim_{n\rightarrow\infty}\Pr(\hat{y}\in C(y))=1$.
\end{theorem}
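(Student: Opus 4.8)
The plan is to show that $\hat y \in C(y)$ fails with probability tending to $0$ by union-bounding over the (finitely many) equivalence classes $\bar x \in \bar\Omega$ with $\bar x \neq \bar y$ and estimating, for each such class, the probability that some representative $x \in \Omega_{2c/3}$ achieves $\|\Phi*(\bK_y - \bA_x)\|^2 \le \|\Phi*(\bK_y - \bA_y)\|^2$. Expanding the square and using $\bK_y = \bA_y + \Sigma*\bW$, the event $\|\Phi*(\bK_y-\bA_x)\|^2 \le \|\Phi*(\bK_y-\bA_y)\|^2$ is equivalent to
\begin{eqnarray*}
L_\Phi(x,y) \le 2\langle \Phi*(\Sigma*\bW),\ \Phi*(\bA_x - \bA_y)\rangle = 2\sum_{i\in[p],j\in[n]} \frac{W_{i,j}}{\sigma_{i,j}}\,(\bA_x-\bA_y)_{i,j},
\end{eqnarray*}
since $\Phi * \Sigma$ is the all-ones matrix. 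The right-hand side is a centered Gaussian with variance $4\sum_{i,j}(\bA_x-\bA_y)_{i,j}^2 / \sigma_{i,j}^2 \le 4 B_1^2 \cdot \sum_{i,j}(\theta(x,i,x(j))-\theta(y,i,y(j)))^2 \cdot B_1^{-2}$... more carefully, using $|\sigma_{i,j}| \le B_1$ so $\sigma_{i,j}^{-2} \ge B_1^{-2}$ goes the wrong way; instead I bound the variance by $4 \sum_{i,j}(\bA_x-\bA_y)_{i,j}^2/\sigma_{i,j}^2 = 4 L_\Phi(x,y)$ and also relate $L_\Phi(x,y) \ge B_1^{-2}\sum_{i,j}(\theta(x,i,x(j))-\theta(y,i,y(j)))^2$. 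So by the standard Gaussian tail bound, the probability that a centered Gaussian of variance $4 L_\Phi(x,y)$ exceeds $L_\Phi(x,y)$ is at most $\exp(-L_\Phi(x,y)/8)$.

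Next I would split the sum over classes $\bar x \neq \bar y$ according to whether the overlap vector $(t_{i,j}(x,y))$ lies in $\sB_\eps$ or in $\sB \setminus \sB_\eps$. For the "far" case $(t_{i,j}(x,y)) \in \sB\setminus\sB_\eps$, Assumption \ref{ap27}(2) gives $\sum_{i,j}(\theta(x,i,x(j))-\theta(y,i,y(j)))^2 \ge T(n)$, hence $L_\Phi(x,y) \ge T(n)/B_1^2$, so each term is at most $\exp(-T(n)/(8B_1^2))$; there are at most $k^n$ representatives $x$, so this part of the union bound is controlled by $\exp(n\log k - T(n)/(8B_1^2))$, which tends to $0$ by \eqref{ld1}. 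For the "close" case $(t_{i,j}(x,y)) \in \sB_\eps$ with $x \notin C(y)$, I cannot use the crude $k^n$ count; instead I argue via a path/swap argument. By the bijection property in the definition of $\sB_\eps$, after relabeling communities (which does not change the equivalence class) we may assume $t_{i,i}(x,y) = \max_j t_{j,i}(x,y)$ for all $i$, and then $x$ differs from $y$ on at most $kn\eps$ vertices. From $y$ one reaches $x$ by successively moving one misassigned vertex at a time; applying Assumption \ref{ap214} at each step (with $y_1$ the current mapping and $y_2$ obtained by one reassignment toward $x$) shows that moving a vertex away from its $y$-value strictly increases $L_\Phi$ by at least $\Delta(1+o(1))$, so $L_\Phi(x,y) \ge \Delta(1+o(1))$ for every such $x$; more precisely, the first step away from $y$ along any such path already forces $L_\Phi(x,y)\ge \Delta(1+o(1))$. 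Then for each such $x$ the bad-event probability is at most $\exp(-\Delta(1-\delta)/8)$ for large $n$, and there are at most $n \cdot k \le n k$ "single-vertex neighbors" of the optimum to worry about once we reduce to the first-step analysis — but to be safe I bound the number of relevant $x$ by $k^n$ again and note the union bound is $\exp(n\log k - \Delta(1-\delta)/8)$... which is not obviously small.

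This last point is the main obstacle, and I expect the intended argument to be subtler: rather than union-bounding over all $x$ in the close regime, one should use that $\hat y$, being a minimizer, must in particular beat each of its own single-vertex perturbations, so if $\hat y \notin C(y)$ then some mapping $y_1 \in \Omega_{2c/3}\setminus C(y)$ in the $\sB_\eps$ regime satisfies $\|\Phi*(\bK_y - \bA_{y_1})\|^2 \le \|\Phi*(\bK_y - \bA_{y_2})\|^2$ for the specific one-vertex move $y_2$ that decreases $L_\Phi$ toward $y$. The event $\|\Phi*(\bK_y-\bA_{y_1})\|^2 \le \|\Phi*(\bK_y - \bA_{y_2})\|^2$ rewrites, after the same Gaussian expansion, as a centered Gaussian in $\Phi*(\bA_{y_1}-\bA_{y_2})$ exceeding $L_\Phi(x,y_1)-L_\Phi(x,y_2) \ge \Delta(1+o(1))$ (using Assumption \ref{ap214}), with variance $4(L_\Phi(x,y_1)-L_\Phi(x,y_2))$ plus lower-order corrections, giving probability at most $\exp(-\Delta(1-\delta)/8)$; since the pair $(y_1,y_2)$ ranges over at most $|\Omega| \cdot n \cdot k$... but crucially $y_2$ is determined by $y_1$ and a choice of vertex, and after fixing the moved vertex $i$ and the relevant communities, the number of candidate $y_1$ that matter is polynomial in $n$ because all but $O(n\eps)$ coordinates agree with $y$ — a counting bound of the form $\binom{n}{O(n\eps)} k^{O(n\eps)} \le \exp(O(n\eps\log(k/\eps)))$, which combined with $\exp(-\Delta(1-\delta)/8)$ and condition \eqref{ld2} (note $\log n$, not $n$, appears there) forces us to take $\eps$ small enough that $\Delta \gg n\eps\log(k/\eps)$. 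I would flag that reconciling the hypothesis \eqref{ld2}, which only has $\log k + \log n$, with any $\exp(\Theta(n\eps))$ counting term requires the close-regime analysis to genuinely reduce to a polynomial-in-$n$ number of events — which I believe is exactly what Assumption \ref{ap214}'s single-vertex formulation and a careful accounting via $\olOmega$ are designed to deliver, and this reduction is the technical heart of the argument. The remaining steps (the Gaussian expansion, the tail bound, assembling the two regimes) are routine once this counting is in place.
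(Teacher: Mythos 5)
Your Gaussian expansion, the tail bound $\exp(-L_{\Phi}(x,y)/8)$ for each class, and the treatment of the far regime $(t_{i,j}(x,y))\in\mathcal{B}\setminus\mathcal{B}_{\epsilon}$ (via $L_{\Phi}(x,y)\ge T(n)/B_1^2$ and the crude $k^n$ count, killed by (\ref{ld1})) all match the paper. The genuine gap is exactly where you flagged it: the close regime. The paper neither reduces to a polynomial-in-$n$ family of events nor uses the local-optimality argument ("$\hat y$ must beat its own single-vertex perturbations") that you sketch and leave unfinished. Instead it iterates Assumption \ref{ap214} along an entire path: after replacing $y$ by $y^{*}=w\circ y\in C(y)$ so that $t_{i,i}(x,y^{*})\ge n_i(y^{*})-n\epsilon$, it constructs $y^{*}=y_0,y_1,\dots,y_h=x$ by reassigning one misplaced vertex at a time toward $x$, verifies that every intermediate mapping stays in the $\mathcal{B}_{\epsilon}$ regime so that Assumption \ref{ap214} applies at every step, and telescopes to get $L_{\Phi}(x,y)\ge h\,\Delta(1+o(1))$ with $h=D_{\Omega}(x,y^{*})$ --- a lower bound growing linearly in the distance, not merely the single increment $\Delta$ you extract from the first step. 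That linear growth is what makes the counting close: the number of $x$ at distance $l$ from $y^{*}$ is at most $(nk)^{l}$ (choose the $l$ vertices that differ and their new labels), so the close-regime sum is bounded by $\sum_{l\ge 1}(nk)^{l}e^{-l\Delta(1+o(1))/8}$, a geometric series with ratio $e^{\log k+\log n-\Delta(1+o(1))/8}\to 0$ under (\ref{ld2}). Your worry about reconciling a $\binom{n}{O(n\epsilon)}k^{O(n\epsilon)}$ count with the $\log k+\log n$ in (\ref{ld2}) dissolves once the exponent scales with $l$: distance-$l$ mappings pay $e^{-l\Delta/8}$, not $e^{-\Delta/8}$.

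As written, your proposal therefore does not prove the theorem: the bound $L_{\Phi}(x,y)\ge\Delta(1+o(1))$ alone is insufficient against any of the counts you consider, and the alternative route via comparing $\hat y$ with its one-vertex moves is only conjectured, not carried out. The missing idea is the repeated application of Assumption \ref{ap214} along the community-changing path (together with the check that the intermediate mappings remain admissible for that assumption), yielding $L_{\Phi}(x,y)\gtrsim D_{\Omega}(x,y^{*})\,\Delta$, combined with the stratified-by-distance union bound $(nk)^{l}$.
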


Theorem \ref{p215} gives a sufficient condition for the exact recovery of MLE in the Gaussian mixture model. It is proved in Section \ref{gmec}. An application of Theorem \ref{p215} on the exact recovery of community detection on hypergraphs is discussed in Section \ref{at42}.

We also obtain a condition for the exact recovery when the sample space of the MLE is restricted to $\Omega_{n_1,\ldots,n_k}$; i.e. the number of vertices in each community is known and fixed.

\begin{assumption}\label{ap46} Assume $x,y_m,y_h\in \Omega$ such that
\begin{enumerate}
\item $D_{\Omega}(y_m,y_h)=j$, where $j\geq 2$ is a positive integer; and
\item There exist $u_1,\ldots,u_j\in[n]$, such that 
\begin{enumerate}
\item $y_m(v)=y_h(v)$, for all $v\in [n]\setminus \{u_1,\ldots,u_j\}$; and 
\item $y_m(u_i)\neq y_h(u_i)=x(u_i)=y_m(u_{i-1})$ for all $i\in[j]$.
\item $(t_{1,1}(x,y_m),t_{1,2}(x,y_m),\ldots,t_{k,k}(x,y_m))\in \mathcal{B}_{\epsilon}$ with $\epsilon\in\left(0,\frac{2c}{3k}\right)$ and $w(i)=i$.
\end{enumerate}
\end{enumerate}
Then 
\begin{eqnarray}
L_{\Phi}(x,y_m)- L_{\Phi}(x,y_h)\geq j\Delta(1+o(1))\label{l46d}
\end{eqnarray}
\end{assumption}

\begin{theorem}\label{m27}Suppose that Assumptions \ref{ap27} \ref{ap46}, (\ref{ld1}) and (\ref{ld2}) hold. Then $\lim_{n\rightarrow\infty}\Pr(\check{y}\in C(y))=1$.
\end{theorem}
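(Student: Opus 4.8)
The plan is to follow the same union-bound strategy as in Theorem~\ref{p215}, but now restricted to the sample space $\Omega_{n_1,\ldots,n_k}$, so that the ``one-vertex move'' of Assumption~\ref{ap214} is replaced by the ``$j$-cycle move'' of Assumption~\ref{ap46}, which is the natural minimal perturbation preserving the community sizes. Write $\check y = \mathrm{argmin}_{x\in\Omega_{n_1,\ldots,n_k}}\|\Phi*(\bK_y-\bA_x)\|^2$ and recall $\bK_y=\bA_y+\Sigma*\bW$. A short computation gives, for any $x\in\Omega_{n_1,\ldots,n_k}$,
\begin{eqnarray*}
\|\Phi*(\bK_y-\bA_x)\|^2-\|\Phi*(\bK_y-\bA_y)\|^2 = L_{\Phi}(x,y) - 2\langle \Phi*\Phi*(\Sigma*\bW),\, \bA_x-\bA_y\rangle,
\end{eqnarray*}
so that the event $\{\check y\notin C(y)\}$ is contained in the union over $x\in\Omega_{n_1,\ldots,n_k}\setminus C(y)$ of the events $\{2\langle \Phi*\Phi*(\Sigma*\bW),\bA_x-\bA_y\rangle \ge L_{\Phi}(x,y)\}$. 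The inner product $\langle \Phi*\Phi*(\Sigma*\bW),\bA_x-\bA_y\rangle$ is a mean-zero Gaussian with variance $\sum_{i,j}\sigma_{i,j}^{-2}(\bA_x-\bA_y)_{i,j}^2 \le B_1^{-2}\cdot\big(\text{something}\big)$; more precisely, using Assumption~\ref{ap27}(1) one bounds the variance by $B_1^2\,\|\Phi*\Phi*(\bA_x-\bA_y)\|^2$ controlled against $L_\Phi(x,y)$, giving a Gaussian tail of the form $\exp(-L_\Phi(x,y)/(8B_1^2))$ for the corresponding bad event. This is exactly the mechanism used in Section~\ref{gmec}; I would cite the lemma of Section~\ref{adl} for the precise large-deviation estimate.

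The union bound then splits into two regimes according to whether $(t_{1,1}(x,y),\ldots,t_{k,k}(x,y))\in\sB_\epsilon$ or not, mirroring the proof of Theorem~\ref{p215}. For the ``far'' regime, $(t_{i,j}(x,y))\in\sB\setminus\sB_\epsilon$: there are at most $k^n$ such $x$, Assumption~\ref{ap27}(2) gives $L_\Phi(x,y)\ge T(n)/B_1^2$ (after accounting for the $\Phi$-weights via $|\sigma_{i,j}|\le B_1$), so the contribution is at most $\exp\big(n\log k - T(n)/(8B_1^2)\big)$, which tends to $0$ by hypothesis~(\ref{ld1}). For the ``near'' regime, $(t_{i,j}(x,y))\in\sB_\epsilon$ with the diagonal realizing the maxima and $w=\mathrm{id}$ (which we may assume by replacing $x$ with an equivalent mapping in $C(x)$, using that $w$ is $\theta$-preserving and Assumption~\ref{ap24}): here $x$ differs from $y$ on relatively few vertices, and since both $x$ and $y$ have the same community-size profile $(n_1,\ldots,n_k)$, the symmetric difference decomposes into disjoint cyclic ``rotations'' of labels. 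This is the combinatorial heart of the argument.

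Concretely, when $x,y\in\Omega_{n_1,\ldots,n_k}$ agree outside a set $S$ of size $m=\sum_{i\ne i'} t_{i,i'}(x,y)$, the bipartite ``transport'' structure forces $x|_S$ to be obtainable from $y|_S$ by composing disjoint cycles; decomposing such a cycle of length $j$ into $j$ elementary steps, each step being a move of the type in Assumption~\ref{ap46}, and telescoping~(\ref{l46d}) along the whole of $S$, yields $L_\Phi(x,y)\ge m\Delta(1+o(1))$ where $m=D_\Omega(x,y)$ is the number of vertices on which $x$ and $y$ disagree. Equivalently one shows that within $\sB_\epsilon$, $L_\Phi(x,y)\gtrsim D_\Omega(x,y)\cdot\Delta$. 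The number of $x\in\Omega_{n_1,\ldots,n_k}$ with $D_\Omega(x,y)=m$ is at most $\binom{n}{m}(k-1)^m \le (nk)^m$, so the contribution of all such $x$ is at most
\begin{eqnarray*}
\sum_{m\ge 2} (nk)^m \exp\!\left(-\frac{m\Delta(1-\delta)}{8}\right) = \sum_{m\ge 2}\exp\!\left(m\big(\log k+\log n - \tfrac{\Delta(1-\delta)}{8}\big)\right),
\end{eqnarray*}
and by~(\ref{ld2}) the bracketed exponent tends to $-\infty$, so this geometric-type sum tends to $0$. Adding the two regimes gives $\Pr(\check y\notin C(y))\to 0$, i.e.\ $\Pr(\check y\in C(y))\to 1$.

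The main obstacle I anticipate is the combinatorial reduction in the third paragraph: justifying rigorously that any size-preserving mapping $x$ close to $y$ (in $\sB_\epsilon$ with $w=\mathrm{id}$) can be connected to $y$ through a chain of elementary cyclic moves each satisfying the precise hypotheses of Assumption~\ref{ap46} — in particular, that the intermediate mappings $y_1,\dots$ stay in $\Omega_{\frac{2c}{3}}$ (or wherever the assumption requires) and that the $t$-profile stays in $\sB_\epsilon$ along the chain so that the telescoped estimate~(\ref{l46d}) is applicable at every step — together with controlling the accumulated $o(1)$ errors uniformly over all $\binom{n}{m}(k-1)^m$ choices. Handling the case of very large $m$ (comparable to $n$, i.e.\ the boundary between the ``near'' and ``far'' regimes) may require a slightly careful patching, but the $T(n)$ bound of Assumption~\ref{ap27}(2) is designed precisely to cover that range, so the two estimates should dovetail.
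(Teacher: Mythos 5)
Your proposal is correct and follows essentially the same route as the paper's proof: a union bound with Gaussian tail $e^{-L_{\Phi}(x,y)/8}$, the split according to whether $(t_{1,1}(x,y),\ldots,t_{k,k}(x,y))\in\sB_{\epsilon}$, the far-regime bound $k^n e^{-T(n)/(8B_1^2)}$ via Lemma \ref{l13} and (\ref{ld1}), and in the near regime the cycle decomposition (Lemma \ref{lm33}) with Assumption \ref{ap46} telescoped to give $L_{\Phi}(x,y)\geq D_{\Omega}(x,y)\,\Delta(1+o(1))$, followed by the $(nk)^m$ count and the geometric series controlled by (\ref{ld2}). One minor bookkeeping remark: the tail is exactly $e^{-L_{\Phi}(x,y)/8}$ since the variance of $\langle \bW,\Phi*(\bA_x-\bA_y)\rangle$ equals $L_{\Phi}(x,y)$ with no $B_1$ involved, and $B_1$ enters only in converting Assumption \ref{ap27}(2) into $L_{\Phi}(x,y)\geq T(n)/B_1^2$, which is precisely how your stated exponent $n\log k - T(n)/(8B_1^2)$ arises.
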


Indeed, Assumption \ref{ap214} implies Assumption \ref{ap46}; see Lemma \ref{lm46}. Theorem \ref{m27} is proved in Section (\ref{gm2}).

\subsection{Hypergraphs}

A special case for the Gaussian mixture model is the hypergraph model. Let $s,s_1,s_2$ be positive integers satisfying
\begin{eqnarray*}
2\leq s_1\leq s\leq s_2.
\end{eqnarray*}
 A hypergraph $H=(V,E)$ has vertex set $V:=[n]$ and hyper-edge set $E$ defined as follows: 
 \begin{eqnarray*}
 E:=\{(a_1,\ldots,a_s): a_1,\ldots,a_s\in [n], s\in\{s_1,s_1+1,\ldots,s_2\}\}
 \end{eqnarray*}
 
 Let $\phi:\cup_{s=s_1}^{s_2}[k]^{s}\rightarrow [0,\infty)$ be a function which assigns a unique real number $\phi(c_1,\ldots,c_s)$ to each $s$-tuple of communities $(c_1,\ldots,c_s)\in [k]^s$, and $s\in [s_1,s_2]$. 
 
 For a community assignment mapping $x$, the weighted adjacency tensor $\mathbf{A}_x$ is defined by 
\begin{eqnarray}
(\mathbf{A}_x)_{a_1,\ldots,a_s}=\begin{cases}\phi(x(a_1),\ldots,x(a_s)),&\mathrm{if}\ (a_1,\ldots,a_s)\in E\\ 0&\mathrm{otherwise}.\end{cases}\label{dA}
\end{eqnarray}
and
\begin{eqnarray*}
\Sigma_{(a_1,\ldots,a_s)}:=\sigma_{(a_1,\ldots,a_s)}
\end{eqnarray*}
Define a random tensor $\bK_x$ as in (\ref{dkx}). Recall that $y\in \Omega_c$ is the true community assignment mapping. Define $\hat{y}$ and $\check{y}$ as in (\ref{dhy})
and (\ref{dcy}).

Recall that $y\in \Omega$ is the true community assignment mapping satisfying $|y^{-1}(i)|=n_i$, for all $i\in[k]$. 
 Let $a\in[n]$. 
Let $y^{(a)}\in\Omega$ be defined by
\begin{eqnarray}
y^{(a)}(i)=\begin{cases} y(i)&\mathrm{if}\ i\in[n],\ \mathrm{and}\ i\neq a\\ y^{(a)}(a)&\mathrm{if}\ i=a.\end{cases}\label{dya}
\end{eqnarray}
such that
\begin{eqnarray*}
y(a)\neq y^{(a)}(a)\in[k].
\end{eqnarray*}

\begin{theorem}\label{p31} Assume
\begin{eqnarray}
\lim_{n\rightarrow\infty}\min_{i\in[k]}n_i=\infty.\label{nii}
\end{eqnarray}
Suppose that there exists a subset $H\subset [n]$ satisfying all the following conditions
\begin{enumerate}
\item $|H|=h=o(n)$;
\item $\lim_{n\rightarrow\infty}\frac{\log h}{\log n}=1$;
\item For each $g\in H$,
\begin{eqnarray*}
&&\sum_{s=s_1}^{s_2}\sum_{j=1}^{s}\sum_{(i_1,\ldots,\widehat{i}_j,\ldots,i_s)\in([n]\setminus H)^{s-1}}\frac{1}{\sigma_{(i_1,\ldots,i_{j-1},g,i_{j+1},\ldots,i_s)}^2}\\
&&\times(\phi(y(i_1),\ldots,y^{(g)}(g),\ldots,y(i_s))-\phi(y(i_1),\ldots,y(g),\ldots,y(i_s)))^2\\
&=&(1+o(1))L_{\Phi}(y^{(g)},y)
\end{eqnarray*}
\item there exists a constant $\beta>0$ independent of $n$, such that
\begin{eqnarray*}
\frac{\max_{a\in H}L_{\Phi}(y^{(a)},y)}{\min_{a\in H} L_{\Phi}(y^{(a)},y)}\leq \beta^2,\ \forall n.
\end{eqnarray*}
\end{enumerate}
If there exists a constant $\delta>0$ independent of $n$, such that
\begin{eqnarray}
\max_{a\in H}L_{\Phi}(y^{(a)},y)<8(1-\delta)\log n\label{as31}
\end{eqnarray}
Then $\lim_{n\rightarrow\infty}\Pr(\hat{y}\in C(y))=0$.
\end{theorem}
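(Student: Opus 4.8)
The plan is to show that with probability tending to $1$, there exists at least one vertex $a \in H$ whose reassignment $y^{(a)}$ beats the truth $y$ in likelihood, so that $\hat y$ cannot lie in $C(y)$. First I would expand the likelihood difference. For each $a \in H$, a routine computation using \eqref{dkx} and the definition $L_\Phi(x,y)=\|\Phi*(\bA_x-\bA_y)\|^2$ gives
\begin{eqnarray*}
\|\Phi*(\bK_y-\bA_{y^{(a)}})\|^2 - \|\Phi*(\bK_y-\bA_y)\|^2 = L_\Phi(y^{(a)},y) - 2 Z_a,
\end{eqnarray*}
where $Z_a := \langle \Phi*\Phi*(\bA_{y^{(a)}}-\bA_y), \Sigma*\bW\rangle = \langle \Phi*(\bA_{y^{(a)}}-\bA_y), \bW\rangle$ is a centered Gaussian with variance exactly $L_\Phi(y^{(a)},y)$. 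Thus $y^{(a)}$ has strictly smaller objective than $y$ precisely when $Z_a > \tfrac12 L_\Phi(y^{(a)},y)$, i.e. when a standard normal $N_a := Z_a/\sqrt{L_\Phi(y^{(a)},y)}$ exceeds $\tfrac12\sqrt{L_\Phi(y^{(a)},y)}$. By hypothesis \eqref{as31}, $\tfrac12\sqrt{L_\Phi(y^{(a)},y)} < \sqrt{2(1-\delta)\log n}$, so by the standard Gaussian lower tail bound each such event has probability at least $n^{-(1-\delta)+o(1)}$, and since $h = |H|$ with $\log h/\log n \to 1$, the expected number of "good" vertices in $H$ diverges like $n^{\delta+o(1)}$.

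The core of the argument is then a second-moment (Paley–Zygmund) computation to upgrade this divergent expectation into a high-probability statement. Let $X := \#\{a \in H : N_a > \tfrac12\sqrt{L_\Phi(y^{(a)},y)}\}$; I want $\Pr(X \geq 1) \to 1$, which follows from $\EE X \to \infty$ together with $\EE[X^2] \leq (1+o(1))(\EE X)^2$. The variance estimate requires controlling $\mathrm{Cov}$ of the indicator events for $a \neq b$, and this is where conditions (3) and (4) of the theorem enter: condition (3) says that, for $g \in H$, the "mass" of $L_\Phi(y^{(g)},y)$ coming from hyperedges touching the small set $H$ is negligible — all but a $o(1)$ fraction of the sum defining $L_\Phi(y^{(g)},y)$ comes from hyperedges with all other endpoints in $[n]\setminus H$. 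Consequently, for distinct $a,b \in H$, the Gaussians $Z_a$ and $Z_b$ are \emph{nearly uncorrelated}: their covariance is a sum over hyperedges containing both $a$ and $b$, whose $\Phi$-weighted $\phi$-contributions are dominated by the full sums, so $\mathrm{Corr}(N_a, N_b) = o(1)$ uniformly. Condition (4) provides the uniform comparability $\max_a L_\Phi(y^{(a)},y) \leq \beta^2 \min_a L_\Phi(y^{(a)},y)$, which keeps all the thresholds $\tfrac12\sqrt{L_\Phi(y^{(a)},y)}$ within a constant factor of one another, so that the per-vertex probabilities are comparable and the pairwise covariance of indicators (expanded via a bivariate-normal estimate in the small-correlation regime) is bounded by $o(1)$ times the product of the marginals. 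Assembling these gives $\EE[X^2] = \EE X + \sum_{a\neq b}\Pr(\cdots) \leq \EE X + (1+o(1))(\EE X)^2$, hence $\Pr(X=0) \leq \mathrm{Var}(X)/(\EE X)^2 \to 0$.

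Finally, on the event $\{X \geq 1\}$ there is some $a\in H$ with $\|\Phi*(\bK_y - \bA_{y^{(a)}})\|^2 < \|\Phi*(\bK_y-\bA_y)\|^2$; since \eqref{nii} guarantees $\min_i n_i \to \infty$ so that $y^{(a)} \in \Omega_{2c/3}$ for $n$ large (moving a single vertex changes each community fraction by $O(1/n)$), $y$ is not the minimizer in \eqref{dhy}, and moreover $y^{(a)} \notin C(y)$ because $y^{(a)}$ differs from $y$ at exactly one vertex (so the two cannot be related by a bijection of communities, each class of which has size $\to\infty$). Hence $\hat y \notin C(y)$ on this event, giving $\Pr(\hat y \in C(y)) \to 0$. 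The main obstacle is the covariance bound in the second-moment step: one must show that the small-correlation Gaussian-tail estimate $\Pr(N_a > t, N_b > t) \leq (1+o(1))\Pr(N_a>t)\Pr(N_b>t)$ holds \emph{uniformly} over the $\binom{h}{2}$ pairs, where $t$ grows like $\sqrt{\log n}$; this needs a quantitative bivariate normal tail bound showing that a correlation $\rho_{a,b} = o(1)$ inflates the joint tail by at most $e^{o(\log n)} = n^{o(1)}$, which in turn forces one to extract from condition (3) an explicit rate $\rho_{a,b} = O(1/\log n)$ (or better) rather than mere $o(1)$ — so the precise quantitative form of the $(1+o(1))$ in condition (3) must be tracked carefully through the estimate.
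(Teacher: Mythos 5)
Your outer scaffolding — reducing to the single-vertex flips $y^{(a)}$, $a\in H$, writing $f(y^{(a)})-f(y)=L_{\Phi}(y^{(a)},y)-2Z_a$ with $Z_a=\langle \bW,\Phi*(\bA_{y^{(a)}}-\bA_y)\rangle$ centered Gaussian of variance $L_{\Phi}(y^{(a)},y)$, and checking via (\ref{nii}) that $y^{(a)}\in\Omega_{\frac{2c}{3}}$ and $y^{(a)}\notin C(y)$ — is the same as the paper's. The gap is in the probabilistic core. Your second-moment/Paley--Zygmund step needs the joint tails $\Pr(N_a>t,\,N_b>t)$ to exceed the product of marginals by a factor whose total contribution over the $\binom{h}{2}$ pairs is $o\bigl((\EE X)^2\bigr)$; for bivariate normals at level $t\asymp\sqrt{\log n}$ the inflation factor is of order $e^{\rho_{ab}t^2}$. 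But condition (3) only gives $\mathrm{Var}(\mathcal{Z}_a)=o(1)\,L_{\Phi}(y^{(a)},y)$, hence only $\rho_{ab}=o(1)$ with no rate, so $e^{\rho_{ab}t^2}=n^{o(1)}$ may diverge (e.g.\ $\rho_{ab}\sim 1/\log\log n$); then Chebyshev gives nothing and Paley--Zygmund only yields $\Pr(X\geq 1)\geq n^{-o(1)}$, not $\to 1$. You flag this yourself, but the rate $\rho_{ab}=O(1/\log n)$ you would need is not a hypothesis of the theorem, so the step cannot be completed as written; a normal-comparison (Slepian/Berman-type) bound fails for the same reason, since its error term is of order $h^2e^{-t^2}\approx n^{2\delta+o(1)}$.

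The paper closes exactly this hole by trading ``small correlation'' for exact independence plus a low-variance remainder: it partitions the hyperedge index set by how many coordinates lie in $H$ and writes $\langle\bW,\Phi*(\bA_{y^{(a)}}-\bA_y)\rangle=\mathcal{X}_a+\mathcal{Y}_a+\mathcal{Z}_a$, where $\mathcal{X}_a=0$, the family $\{\mathcal{Y}_a\}_{a\in H}$ is exactly independent (the $\mathcal{Y}_a$ use pairwise disjoint entries of $\bW$, namely tuples whose unique $H$-coordinate is $a$), and condition (3) says precisely $\mathrm{Var}(\mathcal{Z}_a)=o(1)L_{\Phi}(y^{(a)},y)$. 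Lemma \ref{mg} then gives a lower bound for $\max_{a\in H}\mathcal{Y}_a$ (independence is needed only here) and a union-bound upper bound for $\max_{a\in H}(-\mathcal{Z}_a)$ (no independence, only the variance bound, so the unquantified $o(1)$ suffices); condition (4) keeps the normalizations comparable, and (\ref{as31}) with $\log h\sim\log n$ makes the resulting lower bound exceed $1$. If you want to salvage your route, you would have to apply your counting (or maximum) argument to the independent variables $\mathcal{Y}_a$ and treat $\mathcal{Z}_a$ as a uniformly negligible perturbation — at which point you have essentially reproduced the paper's proof.
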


Theorem \ref{p31} is proved in Section \ref{hg1}. An example is given in Section \ref{at41}.

Let $a,b\in[n]$ such that $y(a)\neq y(b)$. Let $y^{(ab)}\in \Omega_{n_1,\ldots,n_k}$ be the community assignment mapping defined by
\begin{eqnarray}
y^{(ab)}(i)=\begin{cases}y(i)& \mathrm{if}\ i\in[n]\setminus\{a,b\}\\ y(b)&\mathrm{if}\ i=a\\ y(a)& \mathrm{if}\ i=b \end{cases}\label{yab1}
\end{eqnarray}
In other words, $y^{(ab)}$ is obtained from $y$ by exchanging $y(a)$ and $y(b)$.

We also prove a condition when the exact recovery does not occur if the sample space of the MLE is restricted in $\Omega_{n_1,\ldots,n_k}$.

\begin{theorem}\label{p210}Assume
\begin{eqnarray}
\lim_{n\rightarrow\infty}\min_{i\in[k]}n_i=\infty.\label{nii1}
\end{eqnarray}
Suppose that there exist two subsets $H_1,H_2\subset [n]$ satisfying all the following conditions
\begin{enumerate}
\item $|H_1|=|H_2|=h=o(n)$;
\item $\lim_{n\rightarrow\infty}\frac{\log h}{\log n}=1$;
\item For any $u_1,u_2\in H_1$ and $v_1,v_2\in H_2$, 
\begin{eqnarray*}
y(u_1)=y(u_2)\neq y(v_1)=y(v_2);
\end{eqnarray*}
\item For any $u\in H_1$ and $v\in H_2$
\begin{eqnarray*}
&&\sum_{s=s_1}^{s_2}\sum_{j=1}^{s}\sum_{(i_1,\ldots,\widehat{i}_j,\ldots,i_s)\in([n]\setminus (H_1\cup H_2))^{s-1}}\left(\frac{1}{\sigma_{(i_1,\ldots,i_{j-1},u,i_{j+1},\ldots,i_s)}^2}+\frac{1}{\sigma_{(i_1,\ldots,i_{j-1},v,i_{j+1},\ldots,i_s)}^2}\right)\\
&&(\phi(y(i_1),\ldots,y(v),\ldots,y(i_s))-\phi(y(i_1),\ldots,y(u),\ldots,y(i_s)))^2\\
&=&(1+o(1))L_{\Phi}(y^{(uv)},y)
\end{eqnarray*}
\item For any $g\in H_1\cup H_2$, the quantity
\begin{eqnarray*}
&&\sum_{s=s_1}^{s_2}\sum_{j=1}^{s}\sum_{(i_1,\ldots,\widehat{i}_j,\ldots,i_s)\in([n]\setminus (H_1\cup H_2))^{s-1}}\frac{1}{\sigma_{(i_1,\ldots,i_{j-1},g,i_{j+1},\ldots,i_s)}^2}\\
&&(\phi(y(i_1),\ldots,y(b),\ldots,y(i_s))-\phi(y(i_1),\ldots,y(a),\ldots,y(i_s)))^2
\end{eqnarray*}
is a constant and is independent of $g$.
\end{enumerate}
If there exists a constant $\delta>0$ independent of $n$, such that
\begin{eqnarray}
\max_{u\in H_1,v\in H_2}L_{\Phi}(y^{(uv)},y)<16(1-\delta)\log n,\label{as51}
\end{eqnarray} 
$\lim_{n\rightarrow\infty}\Pr(\check{y}\in C(y))=0$.
\end{theorem}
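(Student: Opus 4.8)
The plan is to exhibit, with probability bounded away from zero, a community assignment mapping $z\in\Omega_{n_1,\ldots,n_k}$ with $z\notin C(y)$ and $\|\Phi*(\bK_y-\bA_z)\|^2\leq \|\Phi*(\bK_y-\bA_y)\|^2$, which forces $\check{y}\notin C(y)$. The natural candidates are the swapped mappings $y^{(uv)}$ with $u\in H_1$, $v\in H_2$: since $y(u)\ne y(v)$ these lie in $\Omega_{n_1,\ldots,n_k}$ and, as long as $k\geq 2$ and the communities are not forced to coincide under a $\theta$-preserving bijection, $y^{(uv)}\notin C(y)$. First I would expand
\begin{eqnarray*}
\|\Phi*(\bK_y-\bA_{y^{(uv)}})\|^2-\|\Phi*(\bK_y-\bA_y)\|^2
= L_{\Phi}(y^{(uv)},y)-2\langle \Phi*\Phi*(\Sigma*\bW),\,\bA_{y^{(uv)}}-\bA_y\rangle,
\end{eqnarray*}
using $\bK_y-\bA_y=\Sigma*\bW$ and the definition \eqref{lxy}. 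Write $X_{uv}:=\langle \Phi*(\Sigma*\bW),\,\Phi*(\bA_{y^{(uv)}}-\bA_y)\rangle$; this is a centered Gaussian with variance exactly $L_{\Phi}(y^{(uv)},y)$, because $\Phi*\Sigma$ is the all-ones matrix and $\bW$ has i.i.d.\ standard Gaussian entries. Thus $\check y\notin C(y)$ whenever $2X_{uv}\geq L_{\Phi}(y^{(uv)},y)$ for at least one pair, i.e.\ whenever the standardized variable $X_{uv}/\sqrt{L_{\Phi}(y^{(uv)},y)}$ exceeds $\tfrac12\sqrt{L_{\Phi}(y^{(uv)},y)}$. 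By hypothesis \eqref{as51} this threshold is at most $\sqrt{4(1-\delta)\log n}=\sqrt{(1-\delta)}\cdot 2\sqrt{\log n}$, so each individual event has probability at least $n^{-(1-\delta)+o(1)}$ by the standard Gaussian tail bound.

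The core of the argument is then a second-moment / Paley–Zygmund estimate over the family $\{u\in H_1,\ v\in H_2\}$, of size $h^2=n^{2-o(1)}$. Let $N:=\sum_{u\in H_1,v\in H_2}\id\{2X_{uv}\geq L_{\Phi}(y^{(uv)},y)\}$. Condition (1)–(2) give $\EE N\geq h^2\cdot n^{-(1-\delta)+o(1)}=n^{1+\delta-o(1)}\to\infty$. For the second moment I need to control $\mathrm{Cov}(\id_{uv},\id_{u'v'})$ for distinct pairs. When $\{u,v\}\cap\{u',v'\}=\varnothing$, the two Gaussians $X_{uv}$ and $X_{u'v'}$ have correlation coming only from hyperedges meeting both $\{u,v\}$ and $\{u',v'\}$; since all hyperedge ``legs'' other than the swapped ones are summed over $[n]\setminus(H_1\cup H_2)$ per condition (4), and $|H_1\cup H_2|=o(n)$, this correlation is $o(1)$, uniformly — this is precisely where conditions (4) and (5), together with $\min_i n_i\to\infty$, are used: condition (5) makes the ``diagonal block'' contribution the same for every $g$, so the normalized covariance is a genuine $o(1)$ and not merely bounded. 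A bivariate Gaussian tail computation then shows $\Pr(\id_{uv}=\id_{u'v'}=1)\leq (1+o(1))\Pr(\id_{uv}=1)\Pr(\id_{u'v'}=1)$ for disjoint pairs, and the number of non-disjoint pairs is $O(h^3)=o(h^4)$, negligible relative to $(\EE N)^2=n^{2+2\delta-o(1)}$ once we also use the uniform comparability in condition — so $\EE N^2\leq (1+o(1))(\EE N)^2$. By Paley–Zygmund, $\Pr(N\geq 1)\geq (\EE N)^2/\EE N^2\to 1$.

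The main obstacle I anticipate is the covariance control for overlapping pairs and, more delicately, making the ``$o(1)$'' in the pairwise correlation uniform over all $\binom{h^2}{2}$ pairs: a bivariate Gaussian upper tail estimate of the form $\Pr(G_1\geq t,\ G_2\geq t)\leq (1+o(1))\Pr(G_1\geq t)\Pr(G_2\geq t)$ when $t\asymp\sqrt{\log n}$ requires the correlation to be $o(1/\log n)$ or so, not merely $o(1)$; I would either strengthen the use of conditions (4)–(5) to get a quantitative $o(1/\log n)$ decay (feasible since the overlap involves only $o(n)$ of the summation variables while the full sum has $\Theta(n^{s-1})$ terms per hyperedge size, giving a ratio that is a negative power of $n$), or else restrict $N$ to a sub-collection of pairwise-disjoint pairs of size $\Theta(h)$, for which independence is essentially exact and the first moment $\Theta(h)\cdot n^{-(1-\delta)}=n^{\delta-o(1)}\to\infty$ still suffices for a first/second-moment argument. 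Condition (3) (the three index sets have the prescribed community structure) is used to guarantee $y^{(uv)}\notin C(y)$ and to ensure the swap genuinely changes the assignment; the comparability hypothesis analogous to condition (4) of Theorem \ref{p31} controls the fluctuation of $L_{\Phi}(y^{(uv)},y)$ across the family so that a single threshold $\delta$ works uniformly. Assembling these, $\Pr(\check y\in C(y))\to 0$.
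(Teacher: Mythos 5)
Your reduction to the event $\max_{u\in H_1,v\in H_2}2\langle\bW,\Phi*(\bA_{y^{(uv)}}-\bA_y)\rangle/\|\Phi*(\bA_{y^{(uv)}}-\bA_y)\|^2>1$ is the same starting point as the paper, but the core of your argument, a Paley--Zygmund second-moment bound over all $h^2$ pairs, has a genuine gap. First, the individual tail probability is miscomputed: the standardized threshold is $\tfrac12\sqrt{L_{\Phi}(y^{(uv)},y)}\le 2\sqrt{(1-\delta)\log n}$, so the Gaussian tail gives $\exp\bigl(-\tfrac12\cdot 4(1-\delta)\log n\bigr)=n^{-2(1-\delta)+o(1)}$, not $n^{-(1-\delta)+o(1)}$; hence $\EE N=n^{2\delta-o(1)}$, not $n^{1+\delta-o(1)}$. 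Second, and more seriously, the correlations between overlapping pairs are not $o(1)$: by the very decomposition you invoke (and conditions (4)--(5)), $X_{uv}\approx Y_u+Y_v$ with $\mathrm{Var}(Y_u)\approx\mathrm{Var}(Y_v)\approx\tfrac12 L_{\Phi}(y^{(uv)},y)$, so for $v\neq v'$ the variables $X_{uv}$ and $X_{uv'}$ share the component $Y_u$ and have correlation about $\tfrac12$. The bivariate tail then gives a joint probability of order $n^{-8(1-\delta)/3}$ for each of the roughly $h^3=n^{3-o(1)}$ overlapping pairs, and this contribution dominates $(\EE N)^2=n^{4\delta-o(1)}$ whenever $\delta<\tfrac14$; Paley--Zygmund then yields a lower bound tending to $0$, so the method fails exactly in the regime the theorem must cover (arbitrarily small $\delta>0$). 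Your fallback of restricting to $\Theta(h)$ pairwise-disjoint pairs is weaker still: with the corrected tail the first moment is $h\cdot n^{-2(1-\delta)}=n^{2\delta-1-o(1)}\to 0$ for $\delta<\tfrac12$; equivalently, a family of only $h$ essentially independent candidates can only defeat thresholds up to $8\log n$, not the claimed $16\log n$.

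The missing idea is that the factor $16$ comes from the additive structure of the family, not from its cardinality, and this is how the paper proceeds. One writes $\langle\bW,\Phi*(\bA_{y^{(uv)}}-\bA_y)\rangle=\mathcal{Y}_{uv}+\mathcal{Z}_{uv}$ with $\mathcal{Y}_{uv}=Y_u+Y_v$, where $\{Y_g\}_{g\in H_1\cup H_2}$ are independent Gaussians with a common variance $\approx\tfrac12 L_{\Phi}(y^{(uv)},y)$ (this is precisely the role of conditions (4) and (5)) and $\mathrm{Var}(\mathcal{Z}_{uv})=o(1)L_{\Phi}(y^{(uv)},y)$. Then $\max_{u,v}(Y_u+Y_v)=\max_{u\in H_1}Y_u+\max_{v\in H_2}Y_v$, and each of these two maxima, over $h$ independent variables, concentrates at $(1-\epsilon)\sqrt{2\cdot\tfrac12 L_{\Phi}\log h}$ by Lemma \ref{mg}; their sum is $\approx 2\sqrt{L_{\Phi}\log h}$, and $2\cdot 2\sqrt{L_{\Phi}\log h}>L_{\Phi}$ exactly when $L_{\Phi}<16(1+o(1))\log h$, which is where (\ref{as51}) and condition (2) enter. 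If you wish to keep a moment-method flavor, you must apply it separately to the families $\{Y_u\}_{u\in H_1}$ and $\{Y_v\}_{v\in H_2}$, where independence is genuine, and then add the two maxima --- which is essentially the paper's argument.
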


Theorem \ref{p210} is proved in Section \ref{hg2}.

\section{Community Detection on K-Community Gaussian Mixture Models}\label{gmec}

In this section, we consider the MLE when the number of vertices in each community is unknown. We shall obtain a sufficient condition for the occurrence of the exact recovery. The main goal is prove Theorem \ref{p215}.

Recall that we defined an equivalence relation on $\Omega$ in Definition \ref{dfeq}. It is straightforward to check that
\begin{eqnarray*}
\bK_y=\bK_{y'},\ \mathrm{and}\ \bA_y=\bA_{y'},\qquad\mathrm{if}\ y'\in C(y).
\end{eqnarray*}
Therefore, the MLE based on the observation $\bK_y$ can only recover the community assignment mapping up to equivalence.

Note that
\begin{eqnarray}
\langle \Phi*\bA_x, \Phi*\bA_z\rangle&=&\sum_{i\in[p],j\in[n]}\frac{(\bA_x)_{i,j}(\bA_z)_{i,j}}{\sigma_{i,j}^2}\label{lak}\\
&=&\sum_{i\in[p],j\in[n]}\frac{\theta(x,i,x(j))\theta(z,i,z(j))}{\si_{i,j}^2}
\notag
\end{eqnarray}
In particular for each $x\in\Omega$ we have
\begin{eqnarray}
\|\Phi*\bA_x\|^2=\sum_{i\in[p];j\in[n]}\frac{(\theta(x,i,x(j)))^2} {\si_{i,j}^2}\label{ax}
\end{eqnarray}

Recall that $y\in \Omega_{n_1,\ldots,n_k}$  is the true community assignment mapping. Note that
\begin{eqnarray}
\|\Phi*(\bK_y-\bA_x)\|^2&=&\|\Phi*\bK_y\|^2-2\langle \Phi*\bK_y, \Phi*\bA_x\rangle+\|\Phi*\bA_x\|^2\label{kymx}
\end{eqnarray}
For each fixed observation $\bK_y$, $\|\Phi*\bK_y\|^2$ is fixed and independent of $x\in \Omega$. Therefore
\begin{eqnarray*}
\hat{y}:&=&\mathrm{argmin}_{x\in \Omega_{\frac{2c}{3}}}\|\Phi*(\bK_y-\bA_x)\|^2\\
&=&\mathrm{argmin}_{x\in \Omega_{\frac{2c}{3}}}\left(-2\langle \Phi*\bK_y, \Phi*\bA_x\rangle+\|\Phi*\bA_x\|^2\right)
\end{eqnarray*}
For $x\in \Omega$, define
\begin{eqnarray}
f(x):=-2\langle \Phi*\bK_y, \Phi*\bA_x\rangle+\|\Phi*\bA_x\|^2\label{df}
\end{eqnarray}
Then 
\begin{eqnarray}
f(x)-f(y)\label{fxmy}&=&\|\Phi*\bA_x\|^2-\|\Phi*\bA_y\|^2\\
&&-2\langle \Phi*\bA_y, \Phi*(\bA_x-\bA_y)\rangle
-2\langle \bW, \Phi*(\bA_x-\bA_y) \rangle\notag\\
&=&\|\Phi*(\bA_x-\bA_y)\|^2-2\langle \bW, \Phi*(\bA_x-\bA_y)\notag,
\end{eqnarray}
where we use the identity
\begin{eqnarray*}
\Phi*(\Sigma*\bW)=\bW.
\end{eqnarray*}
Then $f(x)-f(y)$ is a Gaussian random variable with mean value
\begin{eqnarray*}
\mathbb{E}\left(f(x)-f(y)\right)=L_{\Phi}(x,y);
\end{eqnarray*}
and variance
\begin{eqnarray*}
\mathrm{Var}(f(x)-f(y))
&=&4L_{\Phi}(x,y).
\end{eqnarray*}

\begin{lemma}\label{lfe}For $x,z\in \Omega$. If $x\in C(z)$, then
\begin{eqnarray*}
f(x)=f(z).
\end{eqnarray*}
\end{lemma}
\begin{proof}By Definition \ref{dfeq}, if $x\in C(z)$, then for any $i\in[p]$ and $j,h\in[n]$, $x(j)=x(h)$ if and only if $z(j)=z(h)$ and $\theta(x,i,x(j))=\theta(z,i,z(j))$, then $\bA_x=\bA_z$ by (\ref{dA}). Moreover, since for any $i\in [p]$ and $j\in[n]$, 
\begin{eqnarray*}
(\bA_x)_{i,j}=(\bA_z)_{i,j};
\end{eqnarray*}
we have
\begin{eqnarray*}
\frac{(\bA_x)_{i,j}}{\si_{i,j}}=\frac{(\bA_z)_{i,j}}{\si_{i,j}};
\end{eqnarray*}
this implies 
\begin{eqnarray}
\Phi*\bA_x=\Phi*\bA_z. \label{paez}
\end{eqnarray}
Then the lemma follows from (\ref{df}).
\end{proof}

Define
\begin{eqnarray*}
p(\hat{y};\sigma):&=&\mathrm{Pr}\left(\hat{y}\in C(y)\right)=\pr\left(f(y)<\min_{C(x)\in \ol{\Omega}_{\frac{2c}{3}},C(x)\neq C(y)}f(x)\right)
\end{eqnarray*}
Then
\begin{eqnarray*}
1-p(\hat{y};\sigma)&\leq& \sum_{C(x)\in\ol{\Omega}_{\frac{2c}{3}}:C(x)\neq C(y)}\pr(f(x)-f(y)\leq 0)\\
&=& \sum_{C(x)\in\ol{\Omega}_{\frac{2c}{3}}:C(x)\neq C(y)}\pr_{\xi\in\mathcal{N}(0,1)}\left(\xi\leq \frac{-\sqrt{L_{\Phi}(x,y)}}{2 }\right)\\
&\leq& \sum_{C(x)\in\ol{\Omega}_{\frac{2c}{3}}:C(x)\neq C(y)} e^{-\frac{(L_{\Phi}(x,y))^2}{8}}.
\end{eqnarray*}

\begin{lemma}\label{l12}Let $x,y,x',y'\in \Omega$, such that $x'\in C(x)$ and $y'\in C(y)$, then 
\begin{eqnarray*}
L_{\Phi}(x,y)=L_{\Phi}(x',y').
\end{eqnarray*}
\end{lemma}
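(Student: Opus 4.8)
The plan is to reduce the claim to the observation, already extracted in the proof of Lemma~\ref{lfe}, that the weighted matrix $\Phi*\bA_x$ is an invariant of the equivalence class $C(x)$. Concretely, recall that $(\bA_x)_{i,j}=\theta(x,i,x(j))$ for $i\in[p]$, $j\in[n]$, so $\bA_x$ depends on $x$ only through the array of values $\big(\theta(x,i,x(j))\big)_{i\in[p],\,j\in[n]}$. If $x'\in C(x)$, then condition~(2) of Definition~\ref{dfeq} says precisely $\theta(x,i,x(j))=\theta(x',i,x'(j))$ for all such $i,j$, hence $\bA_x=\bA_{x'}$ entry by entry, and therefore $\Phi*\bA_x=\Phi*\bA_{x'}$ (this is exactly \eqref{paez}). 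Likewise, from $y'\in C(y)$ we get $\Phi*\bA_y=\Phi*\bA_{y'}$.

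Given this, the remaining step is purely formal: $L_{\Phi}$ is defined in \eqref{lxy} as $L_{\Phi}(x,y)=\|\Phi*(\bA_x-\bA_y)\|^2$, and by linearity of $P\mapsto \Phi*P$ we have
\begin{eqnarray*}
\Phi*(\bA_x-\bA_y)=\Phi*\bA_x-\Phi*\bA_y=\Phi*\bA_{x'}-\Phi*\bA_{y'}=\Phi*(\bA_{x'}-\bA_{y'}).
\end{eqnarray*}
Taking squared norms of both sides yields $L_{\Phi}(x,y)=L_{\Phi}(x',y')$, which is the assertion.

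I do not expect any genuine obstacle here; the statement is a direct unwinding of Definition~\ref{dfeq} together with the linearity of the Hadamard-type product $\Phi*(\cdot)$. The only point worth stating carefully is that the bijection of communities realizing $x'\in C(x)$ need not be the same as the one realizing $y'\in C(y)$ — but this is irrelevant, because each equivalence merely guarantees equality of the associated matrices $\bA_{x}=\bA_{x'}$ and $\bA_{y}=\bA_{y'}$ separately, and these equalities are all that enter the computation of $L_{\Phi}$. (One could equally well invoke Lemma~\ref{lfe}'s intermediate conclusion directly, since its proof already records $\Phi*\bA_x=\Phi*\bA_z$ whenever $x\in C(z)$.)
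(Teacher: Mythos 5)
Your proposal is correct and is essentially the paper's own argument: both invoke the equality $\Phi*\bA_x=\Phi*\bA_{x'}$ (and likewise for $y,y'$) established in the proof of Lemma \ref{lfe} as \eqref{paez}, and then conclude directly from the definition \eqref{lxy} of $L_{\Phi}$. Your added remark that the two community bijections may differ is a harmless clarification, not a deviation.
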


\begin{proof}By (\ref{paez}) we obtain that when $x'\in C(x)$ and $y'\in C(y)$,
\begin{eqnarray*}
\Phi*\bA_x=\Phi*\bA_{x'};\qquad \Phi*\bA_y=\Phi*\bA_{y'}.
\end{eqnarray*}
Then the lemma follows from (\ref{lxy}).
\end{proof}

\begin{lemma}\label{lgz}For $x,y\in\Omega$, $L_{\Phi}(x,y)\geq 0$. Moreover
\begin{enumerate}
\item If $x\in C(y)$, then $L_{\Phi}(x,y)=0$.
\item If $\theta$ satisfies Assumption \ref{ap24} and $L_{\Phi}(x,y)=0$, then $x\in C(y)$.
\end{enumerate}
\end{lemma}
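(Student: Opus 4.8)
The plan is to unwind the definition $L_{\Phi}(x,y)=\|\Phi*(\bA_x-\bA_y)\|^2$ from (\ref{lxy}) and then exploit the fact that every entry of $\Phi$ is a strictly positive real number (since $\Sigma\in(\RR^{+})^{p\times n}$ forces $\Phi_{i,j}=1/\sigma_{i,j}>0$).

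\emph{Nonnegativity.} This is immediate from the definition of the norm: $L_{\Phi}(x,y)=\sum_{i\in[p],j\in[n]}(\Phi_{i,j})^{2}\big((\bA_x)_{i,j}-(\bA_y)_{i,j}\big)^{2}\ge 0$, being a sum of squares.

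\emph{Part (1).} Here I would simply invoke the computation already carried out inside the proof of Lemma \ref{lfe}: equation (\ref{paez}) shows that $x\in C(y)$ implies $\Phi*\bA_x=\Phi*\bA_y$, hence $\Phi*(\bA_x-\bA_y)=0$ and therefore $L_{\Phi}(x,y)=0$.

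\emph{Part (2).} Suppose $L_{\Phi}(x,y)=0$. Then every term in the sum-of-squares expression above vanishes, i.e. $(\Phi_{i,j})^{2}\big((\bA_x)_{i,j}-(\bA_y)_{i,j}\big)^{2}=0$ for all $i\in[p]$, $j\in[n]$; since $\Phi_{i,j}\neq 0$ this forces $(\bA_x)_{i,j}=(\bA_y)_{i,j}$ for all $i,j$. By the defining relation (\ref{dA}) this says exactly that $\theta(x,i,x(j))=\theta(y,i,y(j))$ for all $i\in[p]$, $j\in[n]$, which is the hypothesis (\ref{sxze}) of Assumption \ref{ap24} with $z=y$; Assumption \ref{ap24} then yields $x\in C(y)$, as required. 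I do not anticipate a genuine obstacle here; the only point that has to be recorded carefully is the strict positivity of the entries of $\Phi$, which is what licenses the passage from $\Phi*(\bA_x-\bA_y)=0$ to $\bA_x-\bA_y=0$.
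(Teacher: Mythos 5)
Your proposal is correct and follows essentially the same route as the paper: expand $L_{\Phi}(x,y)$ as a weighted sum of squares, use strict positivity of the entries of $\Sigma$ (hence of $\Phi$) to see that vanishing forces $\theta(x,i,x(j))=\theta(y,i,y(j))$ for all $i,j$, and invoke Assumption \ref{ap24} for part (2). The only cosmetic difference is that for part (1) you cite the identity (\ref{paez}) from Lemma \ref{lfe}, whereas the paper argues directly via the $\theta$-preserving bijection $\eta$ with $x=\eta\circ y$; the two are interchangeable.
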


\begin{proof}From (\ref{lxy}),  it is straightforward to check that $L_{\Phi}(x,y)\geq 0$ for any $x,y\in\Omega$.
Moreover, from (\ref{lxy}) we obtain
\begin{eqnarray*}
L_{\Phi}(x,y)=\sum_{i\in [p],j\in[n]}\si_{i,j}^2(\theta(x,i,x(j))-\theta(y,i,y(j)))^2
\end{eqnarray*}
By the fact that $\si_{i,j}>0$ for all $i\in[p],j\in[n]$, we obtain that $L_{\Phi}(x,y)=0$ if and only if
\begin{eqnarray}
\theta(x,i,x(j))=\theta(y,i,y(j)),\ \qquad\forall i\in[p], j\in[n].\label{eqz}
\end{eqnarray}
If $x\in C(y)$, then there exists a $\theta$-preserving bijection $\eta:[k]\rightarrow[k]$, such that $x=\eta\circ y$. Then (\ref{eqz}) holds by the $\theta$-preserving property of $\eta$, then we obtain Part(1).

On the other hand, if $L_{\Phi}(x,y)=0$, we have (\ref{eqz}) holds. Then $x\in C(y)$ follows from Assumption \ref{ap24}.
 \end{proof}

  \begin{lemma}\label{ll56}Assume that $y\in\Omega_c$ and $x\in \Omega_{\frac{2c}{3}}$.
For $i\in [k]$,  let
\begin{eqnarray}
t_{w(i),i}(x,y)=\max_{j\in [k]}t_{j,i}(x,y),\label{twi}
\end{eqnarray}
where $w(i)\in [k]$. When $\epsilon\in\left(0,\frac{2c}{3k}\right)$ and $(t_{1,1}(x,y),t_{1,2}(x,y),\ldots, t_{k,k}(x,y))\in \RR^{k^2}$ satisfies
\begin{eqnarray*}
\max_{j\in[k]}t_{j,i}(x,y)\geq n_i-n\epsilon,\ \forall i\in [k]
\end{eqnarray*}
$w$ is a bijection from $[k]$ to $[k]$.
\end{lemma}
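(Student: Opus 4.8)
The plan is to prove that $w$ is a bijection of $[k]$ by showing it is \emph{surjective}; since $w$ maps the finite set $[k]$ to itself, surjectivity is equivalent to bijectivity, so this suffices.

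First I would record the elementary consequence of the hypothesis that in each ``column'' of the array one entry dominates the column sum. Fix $i\in[k]$. By the second identity in (\ref{tn}) (applied with $z=y$), $\sum_{j\in[k]}t_{j,i}(x,y)=n_i(y)=n_i$. Since by assumption $t_{w(i),i}(x,y)=\max_{j\in[k]}t_{j,i}(x,y)\geq n_i-n\epsilon$, it follows that for every $j\neq w(i)$,
\[
t_{j,i}(x,y)\leq n_i-t_{w(i),i}(x,y)\leq n\epsilon .
\]

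Next I would argue by contradiction. Suppose $w$ is not surjective, so there is some $j_0\in[k]$ with $j_0\neq w(i)$ for all $i\in[k]$. Then the bound above applies to every entry in the $j_0$-th ``row'' of the array, and using the first identity in (\ref{tn}),
\[
n_{j_0}(x)=\sum_{i\in[k]}t_{j_0,i}(x,y)\leq k\,n\epsilon .
\]
On the other hand $x\in\Omega_{\frac{2c}{3}}$ gives $n_{j_0}(x)\geq\frac{2c}{3}n$, while $\epsilon\in\left(0,\frac{2c}{3k}\right)$ gives $k\,n\epsilon<\frac{2c}{3}n$ --- a contradiction. Hence $w$ is surjective, and therefore bijective.

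The only genuine idea here is to route the proof through surjectivity, i.e.\ to exploit the lower bound $n_{j_0}(x)\geq\frac{2c}{3}n$ coming from $x\in\Omega_{\frac{2c}{3}}$, rather than to attempt to contradict injectivity directly: a direct injectivity argument (assuming $w(i_1)=w(i_2)$ for $i_1\neq i_2$) would only bound a community size of $x$ from above by the trivial estimate $n$, which is too weak when $c\leq\frac34$. Once one commits to surjectivity, the proof is just two applications of the marginal identities (\ref{tn}) together with the hypothesis $\epsilon<\frac{2c}{3k}$, and I expect no serious obstacle.
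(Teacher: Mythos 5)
Your proof is correct: the two marginal identities in (\ref{tn}) are applied with the right orientation (column sums give $n_i(y)=n_i$, row sums give $n_{j_0}(x)$), the bound $t_{j,i}(x,y)\leq n\epsilon$ for $j\neq w(i)$ follows from the hypothesis, and the contradiction $\frac{2c}{3}n\leq n_{j_0}(x)\leq kn\epsilon<\frac{2c}{3}n$ is exactly what the assumptions $x\in\Omega_{\frac{2c}{3}}$ and $\epsilon<\frac{2c}{3k}$ deliver; routing through surjectivity (rather than injectivity) is indeed the right move, and your argument never even needs $y\in\Omega_c$. The paper itself gives no proof here, deferring to Lemma 5.6 of \cite{ZL19}, so your self-contained counting argument is, if anything, more informative than what the paper provides, and it is the natural argument one would expect the cited lemma to use.
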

\begin{proof}See Lemma 5.6 of \cite{ZL19}.
\end{proof}

\begin{definition}\label{df16}Define the distance function $D_{\Omega}:\Omega\times\Omega\rightarrow [n]$ as follows
\begin{eqnarray*}
D_{\Omega}(x,y)=\sum_{i,j\in[k],i\neq j}t_{i,j}(x,y).
\end{eqnarray*}
for $x,y\in\Omega$.
\end{definition}

From Definition \ref{df16}, it is straightforward to check that
\begin{eqnarray*}
D_{\Omega}(x,y)=n-\sum_{i\in[k]}t_{i,i}(x,y)
\end{eqnarray*}

 \begin{lemma}\label{l13} Assume that $\theta,\Sigma$ satisfies Assumptions \ref{ap27}.  Then for all the $x,y\in\Omega$ such that (\ref{tcd}) holds, we have
 \begin{eqnarray*}
 L_{\Phi}(x,y)\geq \frac{T(n)}{B_1^2}.
 \end{eqnarray*}
 \end{lemma}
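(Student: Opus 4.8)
The plan is to expand $L_{\Phi}(x,y)$ entrywise and then apply the two parts of Assumption \ref{ap27} in succession. By the definition (\ref{lxy}), the fact that the $(i,j)$-entry of $\Phi$ is $1/\sigma_{i,j}$, and (\ref{dA}), we have
\begin{eqnarray*}
L_{\Phi}(x,y)=\|\Phi*(\bA_x-\bA_y)\|^2=\sum_{i\in[p],j\in[n]}\frac{\left(\theta(x,i,x(j))-\theta(y,i,y(j))\right)^2}{\sigma_{i,j}^2}.
\end{eqnarray*}

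Next I would invoke Assumption \ref{ap27}(1): since $0<\sigma_{i,j}\leq B_1$ for every $i\in[p]$ and $j\in[n]$, we have $\tfrac{1}{\sigma_{i,j}^2}\geq\tfrac{1}{B_1^2}$ for each entry. As every summand $\left(\theta(x,i,x(j))-\theta(y,i,y(j))\right)^2$ is nonnegative, replacing the weight $1/\sigma_{i,j}^2$ by the smaller constant $1/B_1^2$ can only decrease the total, so
\begin{eqnarray*}
L_{\Phi}(x,y)\geq\frac{1}{B_1^2}\sum_{i\in[p],j\in[n]}\left(\theta(x,i,x(j))-\theta(y,i,y(j))\right)^2.
\end{eqnarray*}
Finally, because $(t_{1,1}(x,y),\ldots,t_{k,k}(x,y))\in\mathcal{B}\setminus\mathcal{B}_{\epsilon}$ by the hypothesis (\ref{tcd}), Assumption \ref{ap27}(2) applies and gives $\sum_{i\in[p],j\in[n]}\left(\theta(x,i,x(j))-\theta(y,i,y(j))\right)^2\geq T(n)$; combining this with the preceding display yields $L_{\Phi}(x,y)\geq T(n)/B_1^2$, as claimed.

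There is no substantial obstacle in this argument: the lemma is a direct consequence of Assumption \ref{ap27}, and the computation is routine. The only point deserving a moment of care is the direction of the inequality on the weights — Assumption \ref{ap27}(1) bounds $\sigma_{i,j}$ from \emph{above}, which is precisely what is needed to bound the reciprocal weights $1/\sigma_{i,j}^2$ appearing in $L_{\Phi}$ from below. (I would also note that the side conditions in Assumption \ref{ap27}(2), namely $\epsilon\in(0,\frac{2c}{3k})$, $x\in\Omega_{\frac{2c}{3}}$ and $y\in\Omega_c$, are the standing assumptions of this section, so no further verification is required.)
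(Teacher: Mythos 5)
Your proof is correct and follows essentially the same route as the paper: both arguments reduce to the termwise bound $1/\sigma_{i,j}^2\geq 1/B_1^2$ from Assumption \ref{ap27}(1) (the paper phrases this by factoring $L_{\Phi}(x,y)$ as the unweighted sum times a weighted ratio, which is the same estimate) followed by Assumption \ref{ap27}(2) applied to the unweighted sum. No gaps.
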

 
 \begin{proof}Note that
 \begin{eqnarray*}
 L_{\Phi}(x,y)=\left(\sum_{i\in[p],j\in[n]}\left(\theta(x,i,x(j))-\theta(y,i,y(j))\right)^2\right)\left(\frac{\sum_{i\in[p],j\in[n]}\frac{1}{\si_{i,j}^2}\left(\theta(x,i,x(j))-\theta(y,i,y(j))\right)^2}{\sum_{i\in[p],j\in[n]}\left(\theta(x,i,x(j))-\theta(y,i,y(j))\right)^2}\right)
 \end{eqnarray*}
 By Assumption \ref{ap27}(1), we have
 \begin{eqnarray*}
\frac{\sum_{i\in[p],j\in[n]}\frac{1}{\si_{i,j}^2} \left(\theta(x,i,x(j))-\theta(y,i,y(j))\right)^2}{\sum_{i\in[p],j\in[n]}\left(\theta(x,i,x(j))-\theta(y,i,y(j))\right)^2}\geq \frac{1}{B_1^2}
 \end{eqnarray*}
 Then the lemma follows from Assumption \ref{ap27}(2).
 \end{proof}

\noindent{\textbf{Proof of Theorem \ref{p215}}.}
 Note that
\begin{eqnarray*}
\sum_{C(x)\in\ol{\Omega}\setminus C(y)}e^{-\frac{L_{\Phi}(x,y)}{8}}\leq I_1+I_2
\end{eqnarray*}

where 
\begin{eqnarray*}
I_1&=&\sum_{C(x)\in\ol{\Omega}_{\frac{2c}{3}}:\left(t_{1,1}(x,y),\ldots,t_{k,k}(x,y)\right)\in[\mathcal{B}\setminus \mathcal{B}_{\epsilon}], C(x)\neq C(y)}e^{\frac{-(L_{\Phi}(x,y))^2}{8}}
\end{eqnarray*}
and
\begin{eqnarray*}
I_2=\sum_{C(x)\in\ol{\Omega}_{\frac{2c}{3}}:\left(t_{1,1}(x,y),\ldots, t_{k,k}(x,y)\right)\in \mathcal{B}_{\epsilon}, C(x)\neq C(y)}e^{\frac{-(L_{\Phi}(x,y))^2}{8}}.
\end{eqnarray*}
and $\epsilon\in\left(0,\frac{2c}{3k}\right)$.

By Lemma \ref{l13}, when Assumption \ref{ap27} holds, we have
\begin{eqnarray*}
I_1&\leq& k^n e^{-\frac{T(n)}{8B_1^2}}
\end{eqnarray*}
When (\ref{ld1}) holds, we obtain
\begin{eqnarray}
\lim_{n\rightarrow\infty}I_1=0.\label{pr1}
\end{eqnarray}

Now let us consider $I_2$. Let $w$ be the bijection from $[k]$ to $[k]$ as defined in (\ref{twi}). Let $y^*\in \Omega$ be defined by
\begin{eqnarray*}
y^*(z)=w(y(z)),\ \forall z\in [n].
\end{eqnarray*}
Then $y^*\in C(y)$ since $w$ is $\theta$-preserving by the definition of $\mathcal{B}_{\epsilon}$. Moreover, $x$ and $y^*$ satisfies
\begin{eqnarray}
t_{i,i}(x,y^*)\geq n_i(y^*)-n\epsilon,\ \forall i\in[k].\label{dxys}
\end{eqnarray}

We consider the following community changing process to obtain $x$ from $y^*$.
\begin{enumerate}
\item If for all $(j,i)\in [k]^2$, and $j\neq i$, $t_{j,i}(x,y^*)=0$, then $x=y^*$.
\item If (1) does not hold, find the least $(j,i)\in[k]^2$ in lexicographic order such that $j\neq i$ and $t_{j,i}(x,y^*)>0$. Choose an arbitrary vertex $u\in\left\{ x^{-1}(j)\cap (y^*)^{-1}(i)\right\}$. Define $y_1\in \Omega$ as follows
\begin{eqnarray*}
y_1(z)= \begin{cases}j&\mathrm{if}\ z=u\\ y^*(z)&\mathrm{if}\ z\in [n]\setminus \{u\} \end{cases}
\end{eqnarray*}
\end{enumerate}

Then we have
\begin{eqnarray}
&&t_{j,i}(x,y_1)=t_{j,i}(x,y^*)-1\label{tjid}\\
&&t_{j,j}(x,y_1)=t_{j,j}(x,y^*)+1\label{tjjd}\\
&&t_{a,b}(x,y_1)=t_{a,b}(x,y^*)\ \forall (a,b)\in \left([k]^2\setminus\{(j,i),(j,j)\}\right).\notag\\
&&n_i(y_1)=n_i(y^*)-1\notag\\
&&n_j(y_1)=n_j(y^*)+1\notag\\
&&n_{l}(y_1)=n_l(y^*)\ \forall l\in [k]\setminus\{i,j\}.\notag
\end{eqnarray}
Therefore $x$, $y_1$ and $y^*$ satisfy
\begin{eqnarray*}
&&t_{l,l}(x,y_1)\geq n_l(y_1)-n\epsilon;\\
&&t_{l,l}(x,y_1)\geq n_l(y^*)-n\epsilon;\\
&&n_l(y_1)\geq n_l(y^*)-n\epsilon;
\end{eqnarray*}
for all $l\in[k]$.

From Assumption \ref{ap214} and Lemma \ref{l12} we obtain
\begin{eqnarray*}
L_{\Phi}(x,y_1)-L_{\Phi}(x,y)&=&L_{\Phi}(x,y_1)-L_{\Phi}(x,y^*)\leq -\Delta(1+o(1)).
\end{eqnarray*}
Therefore 
\begin{eqnarray}
e^{-\frac{L_{\Phi}(x,y)}{8}}\leq e^{-\frac{L_{\Phi}(x,y_1)}{8}}e^{-\frac{\Delta(1+o(1))}{8}}\label{idi}
\end{eqnarray}

In general, if we have constructed $y_l\in \Omega$ ($r\geq 1$) satisfying all the following conditions:
\begin{eqnarray}
&&t_{l,l}(x,y_r)\geq n_l(y_r)-n\epsilon;\notag\\
&&t_{l,l}(x,y_r)\geq n_l(y^*)-n\epsilon;\notag\\
&&n_l(y_r)\geq n_l(y^*)-n\epsilon;\label{rs}
\end{eqnarray}
for all $l\in[k]$.
We now construct $y_{r+1}\in \Omega$ as follows.
\begin{enumerate}[label=(\alph*)]
\item If for all $(j,i)\in [k]^2$, and $j\neq i$, $t_{j,i}(x,y_r)=0$, then $x=y_r$; then the construction process stops at this step.
\item If (a) does not hold, find the least $(j,i)\in[k]^2$ in lexicographic order such that $j\neq i$ and $t_{j,i}(x,y_r)>0$. Choose an arbitrary vertex $u\in \left\{x^{-1}(j)\cap y_r^{-1}(i)\right\}$. Define $y_{r+1}\in \Omega$ as follows
\begin{eqnarray*}
y_{r+1}(z)= \begin{cases}j&\mathrm{if}\ z=u\\ y_r(z)&\mathrm{if}\ z\in [n]\setminus \{u\} \end{cases}
\end{eqnarray*}
\end{enumerate}
Then it is straightforward to check that
\begin{eqnarray*}
&&t_{l,l}(x,y_{r+1})\geq n_l(y_{r+1})-n\epsilon;\\
&&t_{l,l}(x,y_{r+1})\geq n_l(y^*)-n\epsilon;\\
&&n_l(y_{r+1})\geq n_l(y^*)-n\epsilon;
\end{eqnarray*}
for all $l\in[k]$.

Then if (\ref{dxys}) holds with $y^*$ replaced by $y_r$, then (\ref{dxys}) holds with $y^*$ replaced by $y_{r+1}$. By Assumption \ref{ap214} we obtain
\begin{eqnarray*}
e^{-\frac{L_{\Phi}(x,y_r)}{8}}&\leq& e^{-\frac{L_{\Phi}(x,y_{r+1})}{8}} e^{-\frac{\Delta(1+o(1))}{8}}.
\end{eqnarray*}

Recall that the distance $D_{\Omega}$ in $\Omega$ is defined in Definition \ref{df16}. From the constructions of $y_{r+1}$ we have
\begin{eqnarray*}
D_{\Omega}(x,y_{r+1})=D_{\Omega}(x,y_r)-1.
\end{eqnarray*}
Therefore there exists $h\in [n]$, such that $y_h=x$. By (\ref{idi}) and Assumption \ref{ap214} we obtain
\begin{eqnarray*}
e^{-\frac{L_{\Phi}(x,y)}{8}}\leq e^{-\frac{h\Delta(1+o(1))}{8}}.
\end{eqnarray*}

Since any $x$ in $\mathcal{B}_{\epsilon}$ can be obtained from $y$ by the community changing process described above, we have
\begin{eqnarray}
I_2\leq \sum_{l=1}^{\infty} (nk)^{l} e^{-\frac{l\Delta(1+o(1))}{8}};\label{ifb}
\end{eqnarray}
The right hand side of (\ref{ifb}) is the sum of geometric series with both initial term and common ratio equal to 
\begin{eqnarray}
V:=e^{\log k+\log n-\frac{\Delta(1+o(1))}{8}}\label{ves}
\end{eqnarray}

When (\ref{ld2}) holds, we obtain
\begin{eqnarray}
\lim_{n\rightarrow\infty}I_2=0\label{pr2}
\end{eqnarray}
Then the proposition follows from (\ref{pr1}) and (\ref{pr2}).
$\hfill\Box$

\section{Community Detection on k-Community Hypergraphs}\label{hg1}

In this section, we apply the results proved in section \ref{gmec} to the exact recovery of the community detection in hypergraphs, and also prove conditions when exact recovery does not occur in hypergraphs under the assumption that the number of vertices in each community is unknown.

In the case of a hypergraph, from (\ref{lak}), when 
\begin{eqnarray*}
&&i=(i_1,i_2,\ldots,i_{s-1})\in[n]^{s-1};\\
&&\theta(x,i,a)=\phi(x(i_1),\ldots,x(i_{s-1}),a);
\end{eqnarray*}
we obtain for $x,z\in \Omega$
\begin{eqnarray}
&&\langle \Phi*\bA_x, \Phi*\bA_z\rangle\label{abxz}\\
&=&\sum_{s=s_1}^{s_2}\sum_{(i_1,\ldots,i_s)\in[n]^s} \frac{(\bA_x)_{(i_1,\ldots,i_s)}(\bA_z)_{(i_1,\ldots,i_s)}}{\si_{(i_1,\ldots,i_s)}^2}\notag\\
&=&\sum_{s=s_1}^{s_2}\sum_{(i_1,\ldots,i_s)\in[n]^s} \frac{\phi(x(i_1),\ldots,x(i_s))\phi(z(i_1),\ldots,z(i_s))}{\si^2_{(i_1,\ldots,i_s)}}\notag
\end{eqnarray}
In particular,
\begin{eqnarray*}
\|\Phi*\bA_x\|^2&=&\langle\Phi* \bA_x,\Phi*\bA_x \rangle\\
&=&\sum_{s=s_1}^{s_2}\sum_{(i_1,\ldots,i_s)\in[n]^s} \frac{(\phi(x(i_1),\ldots,x(i_s)))^2}{\si^2_{(i_1,\ldots,i_s)}}
\end{eqnarray*}
Recall that $y\in \Omega_c$ is the true community assignment mapping. Then
\begin{eqnarray*}
\hat{y}&=&\mathrm{argmin}_{x\in \Omega_{\frac{2c}{3}}}\|\Phi*(\mathbf{K}_y-\bA_x)\|^2=\mathrm{argmin}_{x\in \Omega_{\frac{2c}{3}}}f(x)
\end{eqnarray*}
where $f(x)$ is given by (\ref{df}).

By (\ref{fxmy}), we obtain that in the hypergraph case
\begin{eqnarray}
&&f(x)-f(y)\label{fhxy}\\
&=&\|\Phi*(\bA_x-\bA_y)\|^2-2\langle \bW, \Phi*(\bA_x-\bA_y)\rangle\notag\\
&=&\sum_{s=s_1}^{s_2}\sum_{(i_1,\ldots,i_s)\in[n]^s}\frac{(\phi(x(i_1),\ldots,x(i_s))-\phi(y(i_1),\ldots,y(i_s)))^2}{\si_{(i_1,\ldots,i_s)}^2}\notag\\
&&-2\langle \bW, \Phi*(\bA_x-\bA_y)\rangle\notag
\end{eqnarray}
Then $f(x)-f(y)$ is a Gaussian random variable with mean value $L_{\Phi}(x,y)$ and variance $4L_{\Phi}(x,y)$, where $L_{\Phi}(x,y)$ is defined by (\ref{lxy}).

\bigskip

\noindent{\textbf{Proof of Theorem \ref{p31}.} When $y^{(a)}\in \Omega$ is defined by (\ref{dya}),
\begin{eqnarray}
&&t_{y^{(a)}(a),y(a)}(y^{(a)},y)=1;\label{t1}\\
&&t_{y(a),y(a)}(y^{(a)},y)=n_{y(a)}-1;\label{t2}\\
&&t_{i,i}(y^{(a)},y)=n_i;\ \forall\ i\in [k]\setminus \{y(a)\};\label{t3}\\
&&t_{i,j}(y^{(a)},y)=0;\ \forall (i,j)\in [k]^2\setminus\{(y^{(a)}(a),y(a))\},\ \mathrm{and}\ i\neq j.\label{t4}
\end{eqnarray}
and
\begin{eqnarray*}
&&n_{y^{(a)}(a)}(y^{(a)})=n_{y^{(a)}(a)}+1;\\
&&n_{y(a)}(y^{(a)})=n_{y(a)}-1;\\
&&n_i(y^{(a)})=n_i;\ \forall\ i\in[k]\setminus \{y^{(a)}(a),y(a)\}.
\end{eqnarray*}
Moreover,
\begin{eqnarray*}
1-p(\hat{y};\sigma)\geq \mathrm{Pr}\left(\cup_{a\in[n]}\{f(y^{(a)})-f(y)<0\}\right)
\end{eqnarray*}
Since any of the event $\{f(y^{(a)})-f(y)<0\}$ implies $\hat{y}\neq y$.

Let $H\subset [n]$ be given as in the assumptions of the proposition.
Under Assumption (3) of the proposition when $a\in H$ we have 
\begin{eqnarray*}
&&\|\Phi*(\bA_{y^{(a)}}-\bA_y)\|^2\\
&=&\sum_{s=s_1}^{s_2}\sum_{(i_1,\ldots,i_s)\in[n]^s}\frac{\left(\phi(y^{(a)}(i_1)),\ldots,y^{(a)}(i_s))-\phi(y(i_1),\ldots,y(i_s))\right)^2}{\si^2_{(i_1,\ldots,i_s)}}\\
&=&L_{\Phi}(y^{(a)},y)\\
&=&(1+o(1))\left\{\sum_{s=s_1}^{s_2}\sum_{j=1}^{s}\sum_{(i_1,\ldots,\widehat{i}_j,\ldots,i_s)\in([n]\setminus H)^{s-1}}\frac{1}{\sigma_{(i_1,\ldots,i_{j-1},a,i_{j+1},\ldots,i_s)}^2}\right.\\
&&\left.\times(\phi(y(i_1),\ldots,y^{(a)}(a),\ldots,y(i_s))-\phi(y(i_1),\ldots,y(a),\ldots,y(i_s)))^2\right\}
\end{eqnarray*}

Then from (\ref{fhxy}) we have
\begin{eqnarray*}
&&f(y^{(a)})-f(y)\\
&=&-2\langle\mathbf{W},\Phi*\mathbf{A}_{y^{(a)}}-\mathbf{A}_{y} \rangle+(1+o(1))\left\{\sum_{s=s_1}^{s_2}\sum_{j=1}^{s}\sum_{(i_1,\ldots,\widehat{i}_j,\ldots,i_s)\in([n]\setminus H)^{s-1}}\frac{1}{\sigma_{(i_1,\ldots,i_{j-1},a,i_{j+1},\ldots,i_s)}^2}\right.\\
&&\left.\times(\phi(y(i_1),\ldots,y^{(a)}(a),\ldots,y(i_s))-\phi(y(i_1),\ldots,y(a),\ldots,y(i_s)))^2\right\}.
\end{eqnarray*}

 Then $1-p(\hat{y};\sigma)$ is at least
\begin{eqnarray*}
&&\mathrm{Pr}\left(\cup_{a\in[n]}\left\{f(y^{(a)})-f(y)<0\right\}\right)\\
&\geq &\mathrm{Pr}\left(\mathrm{max}_{a\in[n]}\frac{2\langle\bW,\Phi*(\mathbf{A}_{y^{(a)}}-\mathbf{A}_{y}) \rangle}{\|\Phi*(\bA_{y^{(a)}}-\bA_y)\|^2}>1\right)\\
&\geq &\mathrm{Pr}\left(\mathrm{max}_{a\in H}\frac{2\langle \bW,\Phi*(\mathbf{A}_{y^{(a)}}-\mathbf{A}_{y}) \rangle}{\|\Phi*(\bA_{y^{(a)}}-\bA_y)\|^2}>1\right)
\end{eqnarray*}

Let $(\mathcal{X},\mathcal{Y},\mathcal{Z})$ be a partition of $\cup_{s=s_1}^{s_2}[n]^s$ defined by
\begin{eqnarray*}
&&\mathcal{X}=\{\alpha=(\alpha_1,\alpha_2,\ldots,\alpha_s)\in\cup_{s=s_1}^{s_2} [n]^s, \{\alpha_1,\ldots,\alpha_s\}\cap H=\emptyset\}\\
&&\mathcal{Y}=\{\alpha=(\alpha_1,\alpha_2,\ldots,\alpha_s)\in\cup_{s=s_1}^{s_2} [n]^s, |\{i\in[s]: \alpha_i\in H|=1\}\\
&&\mathcal{Z}=\{\alpha=(\alpha_1,\alpha_2,\ldots,\alpha_s)\in \cup_{s=s_1}^{s_2}[n]^s, |\{i\in[s]: \alpha_i\in H|\geq 2\}
\end{eqnarray*}
For $\eta\in\{\mathcal{X},\mathcal{Y},\mathcal{Z}\}$, define the random tensor $\mathbf{W}_{\eta}$ from the entries of $\mathbf{W}$ as follows
\begin{eqnarray*}
(\mathbf{W}_{\eta})_{(i_1,i_2,\ldots,i_s)}=\begin{cases}0&\mathrm{if}\ (i_1,\ldots,i_s)\notin \eta\\ (\mathbf{W})_{(i_1,\ldots,i_s)},&\mathrm{if}\ (i_1,\ldots,i_s)\in \eta\end{cases}
\end{eqnarray*}
 For each $a\in H$, let
\begin{eqnarray*}
&&\mathcal{X}_{a}=\langle\mathbf{W}_{\mathcal{X}},\Phi*(\mathbf{A}_{y^{(a)}}-\mathbf{A}_{y}) \rangle\\
&&\mathcal{Y}_{a}=\langle\mathbf{W}_{\mathcal{Y}},\Phi*(\mathbf{A}_{y^{(a)}}-\mathbf{A}_{y}) \rangle\\
&&\mathcal{Z}_{a}=\langle\mathbf{W}_{\mathcal{Z}},\Phi*(\mathbf{A}_{y^{(a)}}-\mathbf{A}_{y}) \rangle
\end{eqnarray*}
For $s\in\{s_1,s_1+1,\ldots,s_2\}$, let
\begin{eqnarray*}
J_s:=(j_1,\ldots,j_s)\subset [n]^s
\end{eqnarray*}
Explicit computations show that
\begin{eqnarray}
&&(\mathbf{A}_{y^{(a)}})_{J_s}-(\mathbf{A}_{y})_{J_s}\label{ama}\\
&=&\begin{cases}\phi(y^{(a)}(j_1),\ldots,y^{(a)}(j_s))-\phi(y(j_1),\ldots,y(j_s))&\mathrm{if}\ a\in\{j_1,\ldots,j_s\}\\0 &\mathrm{otherwise}.\end{cases}\notag
\end{eqnarray}
\begin{claim}\label{c19}The followings are true:
\begin{enumerate}
\item $\mathcal{X}_{a}=0$ for $a\in H$.
\item For each $a\in H$, the variables $\mathcal{Y}_{a}$ and $\mathcal{Z}_{a}$ are independent.
\end{enumerate}
\end{claim}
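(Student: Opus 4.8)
The plan is to unpack the two claims directly from the definitions of $\mathcal{X}_a,\mathcal{Y}_a,\mathcal{Z}_a$ together with the support identity (\ref{ama}), which says that the tensor $\Phi*(\mathbf{A}_{y^{(a)}}-\mathbf{A}_y)$ is supported only on those tuples $J_s=(j_1,\ldots,j_s)$ that contain the coordinate $a$. For part (1), I would observe that $a\in H$, so every tuple $J_s$ on which $\Phi*(\mathbf{A}_{y^{(a)}}-\mathbf{A}_y)$ is nonzero contains a coordinate (namely $a$) lying in $H$; hence no such tuple belongs to $\mathcal{X}$, on which all coordinates avoid $H$. Since $\mathbf{W}_{\mathcal{X}}$ vanishes outside $\mathcal{X}$, the inner product $\mathcal{X}_a=\langle \mathbf{W}_{\mathcal{X}},\Phi*(\mathbf{A}_{y^{(a)}}-\mathbf{A}_y)\rangle$ is a sum over the empty set, so $\mathcal{X}_a=0$.

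For part (2), the key point is that $\mathbf{W}$ has independent entries, so $\mathbf{W}_{\mathcal{Y}}$ and $\mathbf{W}_{\mathcal{Z}}$ are independent random tensors because $\mathcal{Y}$ and $\mathcal{Z}$ are disjoint. Consequently $\mathcal{Y}_a=\langle \mathbf{W}_{\mathcal{Y}},\Phi*(\mathbf{A}_{y^{(a)}}-\mathbf{A}_y)\rangle$ and $\mathcal{Z}_a=\langle \mathbf{W}_{\mathcal{Z}},\Phi*(\mathbf{A}_{y^{(a)}}-\mathbf{A}_y)\rangle$, being (deterministic linear) functions of $\mathbf{W}_{\mathcal{Y}}$ and $\mathbf{W}_{\mathcal{Z}}$ respectively, are independent. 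I would make this precise by noting that each of $\mathcal{Y}_a,\mathcal{Z}_a$ is a finite linear combination of entries of $\mathbf{W}$ indexed by tuples in $\mathcal{Y}$, resp.\ $\mathcal{Z}$, and these index sets are disjoint; independence of the underlying Gaussian entries then gives independence of the two linear combinations.

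I do not expect a genuine obstacle here: the statement is essentially a bookkeeping consequence of the partition $(\mathcal{X},\mathcal{Y},\mathcal{Z})$ and of (\ref{ama}). The only mild subtlety to be careful about is that in part (2) one should note the fixed index $a$ is itself a coordinate of every tuple in the support of $\Phi*(\mathbf{A}_{y^{(a)}}-\mathbf{A}_y)$; this does not break disjointness of $\mathcal{Y}$ and $\mathcal{Z}$, since those sets are defined by the \emph{number} of coordinates in $H$, and the single tuples contributing to $\mathcal{Y}_a$ (exactly one coordinate in $H$, forced to be $a$) are distinct from those contributing to $\mathcal{Z}_a$ (at least two coordinates in $H$). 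Once this is observed, both assertions follow immediately.
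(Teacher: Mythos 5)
Your proposal is correct and follows the same route as the paper, which simply notes that (1) is straightforward from the support identity (\ref{ama}) and that (2) holds because $\mathcal{Y}\cap\mathcal{Z}=\emptyset$ together with the independence of the entries of $\mathbf{W}$. Your write-up just spells out these same two observations in more detail, so there is nothing to change.
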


\begin{proof}It is straightforward to check (1). (2) holds because $\mathcal{Y}\cap\mathcal{Z}=\emptyset$.
\end{proof}

For $g\in H$, let $\mathcal{Y}^{g}\subseteq \mathcal{Y}$ be defined by
\begin{eqnarray*}
\mathcal{Y}^g=\{\alpha=(\alpha_1,\alpha_2,\ldots,\alpha_s)\in \mathcal{Y}: s\in\{s_1,s_1+1,\ldots,s_2\},\ \exists l\in[s],\ \mathrm{s.t.}\ \alpha_l=g\}.
\end{eqnarray*}
Note that for $g_1,g_2\in H$ and $g_1\neq g_2$, $\mathcal{Y}^{g_1}\cap \mathcal{Y}^{g_2}=\emptyset$. Moreover, $\mathcal{Y}=\cup_{g\in H}\mathcal{Y}^g$. Therefore
\begin{eqnarray*}
\mathcal{Y}_{a}=\sum_{g\in H}\langle\mathbf{W}_{\mathcal{Y}^g},\Phi*(\mathbf{A}_{y^{(a)}}-\mathbf{A}_{y}) \rangle
\end{eqnarray*}
Note also that $\langle\mathbf{W}_{\mathcal{Y}^g},\Phi*(\mathbf{A}_{y^{(a)}}-\mathbf{A}_{y}) \rangle=0$, if $g\neq a$. Hence
\begin{eqnarray*}
\mathcal{Y}_{a}=\sum_{\alpha\in\mathcal{Y}^a}\frac{(\mathbf{W})_{\alpha}\cdot\{(\mathbf{A}_{y^{(a)}}-\mathbf{A}_{y})_{\alpha}\}}{\sigma_{\alpha}}
\end{eqnarray*}

So by (\ref{ama}) we obtain, 
\begin{eqnarray*}
&&\sum_{\alpha\in \mathcal{Y}^a}\frac{(\mathbf{W})_{\alpha}\cdot\{(\mathbf{A}_{y^{(a)}}-\mathbf{A}_{y})_{\alpha}\}}{\si_{\alpha}}
=\sum_{s=s_1}^{s_2}\sum_{j=1}^{s}\sum_{(i_1,\ldots,\widehat{i}_j,\ldots,i_s)\in([n]\setminus H)^{s-1}}\frac{1}{\sigma_{(i_1,\ldots,i_{j-1},a,i_{j+1},\ldots,i_s)}}\\
&&\left\{(\phi(y(i_1),\ldots,y^{(a)}(a),\ldots,y(i_s))-\phi(y(i_1),\ldots,y(a),\ldots,y(i_s)))(\bW)_{(i_1,\ldots,i_{j-1},a,i_{j+1},\ldots,i_s)}\right\}
\end{eqnarray*}
Then $\{\mathcal{Y}_g\}_{g\in H}$ is a collection of independent centered Gaussian random variables. Moreover, the variance of $\mathcal{Y}_g$ is equal to 
\begin{eqnarray}
&&\sum_{s=s_1}^{s_2}\sum_{j=1}^{s}\sum_{(i_1,\ldots,\widehat{i}_j,\ldots,i_s)\in([n]\setminus H)^{s-1}}\frac{1}{\sigma_{(i_1,\ldots,i_{j-1},g,i_{j+1},\ldots,i_s)}^2}\label{vyg}\\
&&(\phi(y(i_1),\ldots,y^{(g)}(g),\ldots,y(i_s))-\phi(y(i_1),\ldots,y(g),\ldots,y(i_s)))^2\notag\\
&=&(1+o(1))L_{\Sigma}(y^{(g)},y)\notag
\end{eqnarray}
by Assumption (3) of the proposition.

By Claim \ref{c19}, we obtain
\begin{eqnarray*}
\frac{2\langle\mathbf{W},\Phi*(\mathbf{A}_{y^{(a)}}-\mathbf{A}_{y}) \rangle}{\|\Phi*(\bA_y^{(a)}-\bA_y)\|^2}=\frac{2\mathcal{Y}_a}{\|\Phi*(\bA_{y^{(a)}}-\bA_y)\|^2}+\frac{2\mathcal{Z}_{a}}{\|\Phi*(\bA_{y^{(a)}}-\bA_y)\|^2}
\end{eqnarray*}
Moreover, 
\begin{eqnarray*}
\max_{a\in H}\frac{2(\mathcal{Y}_a+\mathcal{Z}_{a})}{\|\Phi*(\bA_{y^{(a)}}-\bA_{y})\|^2}&\geq& \max_{a\in H}\frac{ 2\mathcal{Y}_a}{\|\Phi*(\bA_{y^{(a)}}-\bA_{y})\|^2}-\max_{a\in H}\frac{-2\mathcal{Z}_{a}}{\|\Phi*(\bA_{y^{(a)}}-\bA_{y})\|^2}
\end{eqnarray*}

Recall that 
\begin{eqnarray*}
\|\Phi*(\bA_{y^{(a)}}-\bA_y)\|^2=L_{\Phi}(y^{(a)},y)
\end{eqnarray*}

By  Lemma \ref{mg} about the tail bound result of the maximum of Gaussian random variables, if (\ref{epn}) holds with $N$ replaced by $h$, the event
\begin{eqnarray*}
E_1:=\left\{\max_{a\in H}\frac{2\mathcal{Y}_a}{\|\Phi*(\bA_{y^{(a)}}-\bA_{y})\|^2}\geq (1-\epsilon)\sqrt{2\min_{a\in H}\frac{4}{L_{\Phi}(y^{(a)},y)}\log h}\right\}
\end{eqnarray*}
has probability at least $1-e^{-h^{\epsilon}}$; and the event
\begin{eqnarray*}
E_2:=\left\{\max_{a\in H}\frac{2\mathcal{Z}_{a}}{\|\Phi*(\bA_{y^{(a)}}-\bA_y)\|^2}\leq (1+\epsilon)\sqrt{2\log h\cdot \max_{a\in H} \frac{4\mathrm{Var}(\mathcal{Z}_{a})}{(L_{\Phi}(y^{(a)},y))^2}}\right\}
\end{eqnarray*}
has probability $1-h^{-\epsilon}$. 

 Moreover, by Assumption (3) of the Proposition and (\ref{vyg}),
\begin{eqnarray*}
\mathrm{Var} \mathcal{Z}_{a}&=&\|\Phi*(\mathbf{A}_{y^{(a)}}-\mathbf{A}_{y})\|^2-\mathrm{Var}(\mathcal{Y}_a)=o(1)L_{\Phi}(y^{(a)},y)
\end{eqnarray*}
Define an event $E$ by 
\begin{small}
\begin{eqnarray*}
&&E:=\left\{\max_{a\in H}\frac{2\mathcal{Y}_a+2\mathcal{Z}_{a}}{\|\Phi*(\mathbf{A}_{y^{(a)}}-\bA_{y})\|^2}\geq\left(1-\epsilon-(1+\epsilon)o(1)\sqrt{\frac{\max_{a\in H}L_{\Phi}(y^{(a)},y)}{\min_{a\in H}L_{\Phi}(y^{(a)},y)}}\right)\right.\\
&&\left.\times\sqrt{8\log h\min_{a\in H}\frac{1}{L_{\Phi}(y^{(a)},y)}}\right\}
\end{eqnarray*}
\end{small}
Then $E_1\cap E_2\subseteq E$. 

When $n$ is large, and (\ref{as31}) holds
\begin{eqnarray*}
&&\mathrm{Pr}\left(\mathrm{max}_{a\in H}\frac{2\langle\mathbf{W},\Phi*(\mathbf{A}_{y^{(a)}}-\mathbf{A}_{y}) \rangle}{\|\Phi*(\bA_{y^{(a)}}-\bA_y)\|^2}>1\right)\\
&&\geq \mathrm{Pr}(E)\geq \Pr(E_1\cap E_2)\geq 1-\Pr(E_1^c)-\Pr(E_2^c)\rightarrow 1,
\end{eqnarray*}
as $n\rightarrow\infty$. Then the proposition follows.
$\hfill\Box$

\subsection{Examples of Assumption \ref{ap24}}\label{ep24}

We shall see some examples of the function $\theta: \Omega\times [k]\times[k]\rightarrow \RR$ satisfying assumption \ref{ap24}. We first see an example when $\theta$ can uniquely determine the community assignment mapping in $\Omega$. 

\begin{example}Assume $p=k$. For $a,b\in[k]$, $x\in\Omega$
\begin{eqnarray*}
\theta(x,a,b)=\begin{cases}1&\mathrm{if}\ a=b\\ 0&\mathrm{otherwise}.\end{cases}
\end{eqnarray*} 
Then if for all $a\in[p]$ and $j\in[n]$, (\ref{sxze}) holds, we have $x(j)=a$ if and only if $z(j)=a$, then $x=z$.
\end{example}

We now see an example when $\theta$ cannot uniquely determine the community assignment mapping, but determines the community assignment mappings up to the equivalent class as defined in Definition \ref{dfeq}. 

\begin{example}Assume $p=n$. 
\begin{eqnarray*}
\theta(x,i,a)=\begin{cases}1&\mathrm{if}\ x(i)=a;\\ 0&\mathrm{otherwise}.\end{cases}
\end{eqnarray*} 
Then if for all $i,j\in[n]$, (\ref{sxze}) holds, we have $x(j)=x(i)$ if and only if $z(j)=z(i)$, then $x\in C(z)$.
\end{example}

\begin{example}Assume $p=n$, $a\in[k]$ and $i\in[n]$. 
\begin{eqnarray*}
\theta(x,i,a)=x(i)-a;
\end{eqnarray*} 
Then if for all $i,j\in[n]$, (\ref{sxze}) holds, we have $x(i)-x(j)=z(i)-z(j)$. This implies that $x(i)=x(j)$ if and only if $z(i)=z(j)$, therefore $x\in C(z)$. If both $x$ and $z$ are surjective onto $[k]$, then $x=z$.
\end{example}

\subsection{Example of Theorem \ref{p31}}\label{at41}
\begin{example}\label{ep45}Here we see an example about how to apply Theorem \ref{p31} to the exact recovery of community detection on hypergraphs. Let $y\in\Omega_{n_1,\ldots,n_k}$ be the true community assignment mapping. Assume that for any $s\in\{s_1,\ldots,s_2\}$, $(i_1,i_2,\ldots,i_s), (j_1,j_2,\ldots,j_s)\in[n]^s$, we have
\begin{eqnarray*}
\sigma_{(i_1,i_2,\ldots,i_s)}=\sigma_{(j_1,j_2,\ldots,j_s)}
\end{eqnarray*}
whenever
\begin{eqnarray*}
y(i_r)=y(j_r),\ \forall r\in[s];
\end{eqnarray*}
i.e., $\sigma_{(i_1,\ldots,i_s)}$ depends only on the communities of $(i_1,\ldots,i_s)$ under the mapping $y$. In this case we can define $\overline{\sigma}: \cup_{s=s_1}^{s_2}[k]^s\rightarrow (0,\infty)$, such that
\begin{eqnarray}
\sigma_{(i_1,\ldots,i_s)}=\overline{\sigma}(y(i_1),\ldots,y(i_s)),\ \forall (i_1,\ldots,i_s)\in[n]^s\label{dfsb}
\end{eqnarray}

 Then for any $a\in[n]$,
\begin{eqnarray}
&&L_{\Phi}(y^{(a)},y)=\|\Phi*(\bA_{y^{(a)}}-\bA_y)\|^2\label{lsay}\\
&=&\sum_{s=s_1}^{s_2}\sum_{(i_1,\ldots,i_s)\in[n]^s}\frac{(\phi(y^{(a)}(i_1),\ldots,y^{(a)}(i_s))-\phi(y(i_1),\ldots,y(i_s)))^2}{(\ol{\sigma}(y(i_1),\ldots,y(i_s)))^2}\notag
\end{eqnarray}
Moreover, for any $a,b\in[n]$ such that 
\begin{eqnarray*}
y(a)=y(b);\ y^{(a)}(a)=y^{(b)}(b)
\end{eqnarray*}
we have
\begin{eqnarray*}
L_{\Phi}(y^{(a)},y)=L_{\Phi}(y^{(b)},y).
\end{eqnarray*}
We consider
\begin{eqnarray}
\min_{(y^{(a)}(a),y(a))\in [k]^2,y^{(a)}(a)\neq y(a)}L_{\Phi}(y^{(a)},y)\label{ddt}
\end{eqnarray} 
Assume that when $y(a)=r_0$, $y^{(a)}(a)=r_1$, $L_{\Phi}(y^{(a)},y)$ achieves its minimum. Let $H\subset y^{-1}(r_0)$, then $h=|H|\leq n_{r_0}$. Assume
\begin{eqnarray*}
\lim_{n\rightarrow\infty}\frac{\log n_{r_0}}{\log n}=1.
\end{eqnarray*}
Then we may choose $h=\frac{n_{r_0}}{\log n}$ such that Assumptions (1)(2) in Theorem \ref{p31} hold. Moreover, Assumption (4) in Theorem \ref{p31} holds because if for all $a\in H$, let $y^{(a)}(a)=r_1$, then $L_{\Phi}(y^{(a)},y)$ takes the same value for all $a\in H$. There are many mappings $\phi: \cup_{s=s_1}^{s_1}[k]^s\rightarrow \RR$ to guarantee Assumption (3) in Theorem \ref{p31}. For example, one may choose
\begin{eqnarray}
\phi(b_1,\ldots,b_s)=\begin{cases}2^s&\mathrm{if}\ b_1=\ldots=b_s\\0&\mathrm{otherwise}.\end{cases}\label{dph}
\end{eqnarray}
for $s\in\{s_1,s_1+1,\ldots,s_2\}$ and $b_1,\ldots,b_s\in[k]$. Then from (\ref{lsay}) we obtain
\begin{eqnarray}
&&L_{\Phi}(y^{(a)},y)\label{lshg}=\sum_{s=s_1}^{s_2}\sum_{(b_1,\ldots,b_s)\in[k]^s}\sum_{(d_1,\ldots,d_s)\in[k]^s}\\
&&\frac{(\phi(d_1,\ldots,d_s)-\phi(b_1,\ldots,b_s))^2}{(\overline{\sigma}(b_1,\ldots,b_s))^2}\left(\prod_{j=1}^{s}t_{d_j,b_j}(y^{(a)},y)\right)\notag
\end{eqnarray}
From (\ref{t1})-(\ref{t4}) and (\ref{dph}) we obtain that the terms actually contributing to the sum must satisfy
\begin{eqnarray*}
\{(d_1,b_1),\dots,(d_s,b_s)\}\subseteq\{(r_1,r_1),(r_1,r_0)\}
\end{eqnarray*}
or 
\begin{eqnarray*}
\{(d_1,b_1),\dots,(d_s,b_s)\}\subseteq\{(r_0,r_0),(r_1,r_0)\}
\end{eqnarray*}
Then we obtain
\begin{eqnarray*}
L_{\Phi}(y^{(a)},y)&=&\sum_{s=s_1}^{s_2}2^{2s}(L_{0,s}+L_{1,s})
\end{eqnarray*}
where
\begin{eqnarray*}
L_{0,s}&=&\sum_{(b_1,\ldots,b_s)\in[k]^s, (d_1,\ldots,d_s)\in[k]^s, (d_1,b_1),\dots,(d_s,b_s)\subseteq\{(r_0,r_0),(r_1,r_0)\}}\frac{\left(\prod_{j=1}^{s}t_{d_j,b_j}(y^{(a)},y)\right)}{(\overline{\sigma}(r_0,\ldots,r_0))^2}\\
L_{1,s}&=&\sum_{(b_1,\ldots,b_s)\in[k]^s, (d_1,\ldots,d_s)\in[k]^s, (d_1,b_1),\dots,(d_s,b_s)\subseteq\{(r_1,r_1),(r_1,r_0)\}}\frac{\left(\prod_{j=1}^{s}t_{d_j,b_j}(y^{(a)},y)\right)}{(\overline{\sigma}(b_1,\ldots,b_s))^2}
\end{eqnarray*}
Assume 
\begin{eqnarray*}
\lim_{n\rightarrow\infty}\min\{n_{r_0},n_{r_1}\}=\infty.
\end{eqnarray*}
If in $\{(d_1,b_1),\ldots,(d_s,b_s)\}$, there exist more than one $g\in[s]$, such that $(d_g,b_g)=(r_1,r_0)$, then the sum of such terms will be of order $o((n_{r_0})^{s-1})$ (resp.\ $o((n_{r_1})^{s-1})$) in $L_{0,s}$ (resp.\ $L_{1,s}$). Therefore we obtain
\begin{eqnarray*}
L_{0,s}&=&\frac{s\left(n_{r_0}\right)^{s-1}(1+o(1))}{(\overline{\sigma}(r_0,\ldots,r_0))^2}
\end{eqnarray*}
To analyze $L_{1,s}$, assume that there exists a positive constant $C>0$ independent of $n$, such that 
\begin{eqnarray}
0<C<\frac{\min_{b_1,\ldots,b_s\in\{r_0,r_1\}}\overline{\si}(b_1,\ldots,b_s)}{\max_{b_1,\ldots,b_s\in\{r_0,r_1\}}\overline{\si}(b_1,\ldots,b_s)},\ \forall n\in \NN, s\in\{s_1,s_1+1,\ldots s_2\}.\label{qlb}
\end{eqnarray}
Then we obtain
\begin{eqnarray*}
L_{1,s}&=&\sum_{j=1}^{s}\frac{\left(n_{r_1}\right)^{s-1}(1+o(1))}{(\overline{\sigma}(r_1,\ldots,r_1,r_0,r_1,\ldots,r_1))^2}
\end{eqnarray*}
Then Assumption (3) of Theorem \ref{p31} follows from the fact that $|H|=\frac{n_{r_0}}{\log n}=o(n_{r_0})$.

In the special case when all the communities have equal size, we may obtain a sufficient condition that the exact recovery of MLE does not occur in the hypergraph case when the number of communites $k=e^{o(\log n)}$. Since in this case we have $n_1=n_2=\ldots=n_k\geq e^{\log n-o(\log n)}$, then (\ref{nii}) holds. Choose $h=\frac{n_1}{\log n}$, then Assumptions (1) and (2) of Theorem \ref{p31} hold.
\end{example}

\subsection{Example of Theorem \ref{p215}}\label{at42}

\begin{example}We can also apply Theorem \ref{p215} to the case of exact recovery of community detection on hypergraphs. Again we consider the case when $\sigma_{(i_1,\ldots,i_s)}$ depends only on $(y(i_1),\ldots,y(i_s))$. Hence we may define $\overline{\si}$ as in (\ref{dfsb}). 


 To check Assumption \ref{ap214}, let $y_m,y_{m+1},x\in\Omega$ be given as in the proof of Proposition \ref{p215}. For the simplicity of notation, we use $y$ instead of $y^*$. By (\ref{lshg}) we obtain
\begin{eqnarray*}
&&L_{\Phi}(x,y_m)-L_{\Phi}(x,y_{m+1})\\
&=&\sum_{s=s_1}^{s_2}\sum_{(i_1,\ldots,i_s)\in[n]^s}\frac{1}{(\ol{\sigma}(y(i_1),\ldots,y(i_s)))^2}\\
&&\left[(\phi(x(i_1),\ldots,x(i_s))-\phi(y_m(i_1),\ldots,y_m(i_s)))^2-(\phi(x(i_1),\ldots,x(i_s))-\phi(y_{m+1}(i_1),\ldots,y_{m+1}(i_s)))^2\right]\\
&=&\sum_{s=s_1}^{s_2}\sum_{(i_1,\ldots,i_s)\in[n]^s}\frac{1}{(\ol{\sigma}(y(i_1),\ldots,y(i_s)))^2}\left\{(\phi(y_m(i_1),\ldots,y_m(i_s))^2-(\phi(y_{m+1}(i_1),\ldots,y_{m+1}(i_s))^2\right.\\
&&\left.-2\phi(x(i_1),\ldots,x(i_s))\left[\phi(y_m(i_1),\ldots,y_m(i_s))-\phi(y_{m+1}(i_1),\ldots,y_{m+1}(i_s))\right]\right\}
\end{eqnarray*}
For $j,p,q\in[k]$, and $x,y,z\in\Omega$, define
\begin{eqnarray*}
t_{j,p,q}(x,y,z)=|\{i\in[n]: x(i)=j,y(i)=p,z(i=q)\}|=|x^{-1}(j)\cap y^{-1}(p)\cap z^{-1}(q)|.
\end{eqnarray*}
Then
\begin{eqnarray}
&&L_{\Phi}(x,y_m)-L_{\Phi}(x,y_{m+1})=\sum_{s=s_1}^{s_2}\sum_{(b_1,\ldots,b_s)\in[k]^s}\frac{1}{(\ol{\si}(b_1,\ldots,b_s))^2}\sum_{(d_1,\ldots,d_s)\in[k]^s}\label{sls}\\
&&\left\{(\phi(d_1,\ldots,d_s))^2 \left(\prod_{r=1}^{s}t_{b_r,d_r}(y,y_m)-\prod_{r=1}^{s}t_{b_r,d_r}(y,y_{m+1})\right)-2\sum_{(l_1,\ldots,l_s)\in[k]^s}\right.\notag\\
&&\left.\phi(l_1,\ldots,l_s)\phi(d_1,\ldots,d_s) \left(\prod_{r=1}^{s}t_{b_r,d_r,l_r}(y,y_m,x)-\prod_{r=1}^{s}t_{b_r,d_r,l_r}(y,y_{m+1},x)\right)\right\}\notag
\end{eqnarray}
Recall that $D_{\Omega}(y_m,y_{m+1})=1$, and there exists $u\in [n]$ such that 
\begin{eqnarray*}
x(u)=j=y_{m+1}(u)\neq y_m (u)=i=y(u).
\end{eqnarray*}
 where $i,j\in[k]$ and $i\neq j$; while $y_m(v)=y_{m+1}(v)$ for all the $v\in[n]\setminus \{u\}$. This implies that if $\{d_1,\ldots,d_s\}\cap \{i,j\}=\emptyset$, then the corresponding summand in (\ref{sls}) is 0 and does not contribute to the sum. Under the assumption that  
 \begin{enumerate}
 \item $(t_{1,1}(x,y),t_{1,2}(x,y),\ldots,t_{k,k}(x,y))\in \mathcal{B}_{\epsilon}$
with $w:[k]\rightarrow[k]$ the identity map; and
\item $n_1\geq n_2\geq\ldots\geq n_k$; and
\item $\min_{(b_1,\ldots,b_s)\in[k]^s}|\ol{\si}(b_1,\ldots,b_s)|\geq B_3>0$; and
\item $\lim_{n\rightarrow\infty}\frac{n\epsilon}{n_1}=0$.
\end{enumerate}
we obtain
\begin{eqnarray*}
&&L_{\Phi}(x,y_m)-L_{\Phi}(x,y_{m+1})=\sum_{s=s_1}^{s_2}\sum_{g=1}^{s}\sum_{(b_1,\ldots,\widehat{b}_g,\ldots,b_s)\in[k]^s}\frac{1}{(\ol{\si}(b_1,\ldots,i,\ldots,b_s))^2}\sum_{(d_1,\ldots,\widehat{d}_g,\ldots,d_s)\in[k]^s}\\
&&\left\{(\phi(d_1,\ldots,i,\ldots,d_s))^2-(\phi(d_1,\ldots,j,\ldots,d_s))^2) \prod_{r\in[s]\setminus\{g\}}t_{b_r,d_r}(y,y_m)-2\sum_{(l_1,\ldots,\widehat{l}_g,\ldots,l_s)\in[k]^s}\right.\\
&&\phi(l_1,\ldots,j,\ldots,l_s)\left(\phi(d_1,\ldots,i,\ldots,d_s)-\phi(d_1,\ldots,j,\ldots,d_s) \right)\\
&&\left.\left(\prod_{r\in[s]\setminus\{g\}}t_{b_r,d_r,l_r}(y,y_m,x)\right)\right\}+O\left(\frac{n_1^{k-2}}{B_3^2}\right)
\end{eqnarray*}
The identity above can be interpreted as follows. We can classify the terms satisfying $\{d_1,\ldots,d_s\}\cap \{i,j\}\neq \emptyset$ by the number 
\begin{eqnarray*}
N_{i,j}=\{l\in[s]: d_l\in\{i,j\}\},
\end{eqnarray*}
and obtain that the leading term of $L_{\Phi}(x,y_m)-L_{\Phi}(x,y_{m+1})$ is given by the terms when $N_{i,j}=1$. Moreover, by Assumption (1) we have
\begin{eqnarray*}
&&L_{\Phi}(x,y_m)-L_{\Phi}(x,y_{m+1})=\sum_{s=s_1}^{s_2}\sum_{g=1}^{s}\sum_{(b_1,\ldots,\widehat{b}_g,\ldots,b_s)\in[k]^s}\frac{1}{(\ol{\si}(b_1,\ldots,i,\ldots,b_s))^2}\\
&&\left\{(\phi(b_1,\ldots,i,\ldots,b_s))^2-(\phi(b_1,\ldots,j,\ldots,b_s))^2) \prod_{r\in[s]\setminus\{g\}}n_r-2\sum_{(l_1,\ldots,\widehat{l}_g,\ldots,l_s)\in[k]^s}\right.\\
&&\phi(b_1,\ldots,j,\ldots,b_s)\left(\phi(b_1,\ldots,i,\ldots,b_s)-\phi(b_1,\ldots,j,\ldots,b_s) \right)\\
&&\left.\left(\prod_{r\in[s]\setminus\{g\}}n_r\right)\right\}+O\left(\frac{n_1^{s-2}}{B_3^2}\right)+O\left(\frac{\epsilon n n_1^{s-2}}{B_3^2}\right)\\
&&=\sum_{s=s_1}^{s_2}\sum_{g=1}^{s}\sum_{(b_1,\ldots,\widehat{b}_g,\ldots,b_s)\in[k]^s}\frac{1}{(\ol{\si}(b_1,\ldots,i,\ldots,b_s))^2}\\
&&(\phi(b_1,\ldots,i,\ldots,b_s))-(\phi(b_1,\ldots,j,\ldots,b_s)))^2 \prod_{r\in[s]\setminus\{g\}}n_{b_r}\\
&&+O\left(\frac{n_1^{s-2}}{B_3^2}\right)+O\left(\frac{\epsilon n n_1^{s-2}}{B_3^2}\right)
\end{eqnarray*} 
Define
 \begin{eqnarray*}
\Delta:&=&\min_{i,j\in[k],i\neq j}\sum_{s=s_1}^{s_2}\sum_{g=1}^{s}\sum_{(b_1,\ldots,\widehat{b}_g,\ldots,b_s)\in[k]^s}\frac{1}{(\ol{\si}(b_1,\ldots,i,\ldots,b_s))^2}\\
&&\times(\phi(b_1,\ldots,i,\ldots,b_s))-(\phi(b_1,\ldots,j,\ldots,b_s)))^2 \prod_{r\in[s]\setminus\{g\}}n_{b_r}
\end{eqnarray*}
We further make the assumptions below:
\begin{eqnarray}
\lim_{n\rightarrow\infty}\frac{n_1^{s-2}+\epsilon  n n_1^{s-2}}{B_3^2\Delta}=0.\label{eec1}
\end{eqnarray}
Then
\begin{eqnarray*}
L_{\Phi}(x,y_m)-L_{\Phi}(x,y_{m+1})
&\geq &\Delta(1+o(1))
\end{eqnarray*}

Then by Assumption \ref{ap27}(1), the exact recovery occurs with probability 1 when $n\rightarrow \infty$ if (\ref{ld1}), (\ref{ld2}) hold, and 
\begin{eqnarray}
&&\sum_{s=s_1}^{s_2}\sum_{(i_1,\ldots,i_s)\in [k]^s}\sum_{(j_1,\ldots,j_s\in[k]^s)}\left(\phi(i_1,\ldots,i_s)-\phi(j_1,\ldots,j_s)\right)^2\label{esc2}\\
&&\times\left(\prod_{r=1}^s t_{i_r,j_r}(x,y)\right)\geq T(n)\notag
\end{eqnarray}
when (\ref{tcd}) holds.

There are a lot of functions $\phi:\cup_{s=s_1}^{s_2}[k]^s\rightarrow\RR$ satisfying (\ref{eec1}), (\ref{esc2})
and Assumption \ref{ap24}. For example, we may consider the function $\phi$ as defined in (\ref{dph}). Assume 
\begin{eqnarray*}
\Delta&=&\sum_{s=s_1}^{s_2}\sum_{g=1}^{s}\sum_{(b_1,\ldots,\widehat{b}_g,\ldots,b_s)\in[k]^s}\frac{1}{(\ol{\si}(b_1,\ldots,r_0,\ldots,b_s))^2}\\
&&\times(\phi(b_1,\ldots,r_0,\ldots,b_s))-(\phi(b_1,\ldots,r_1,\ldots,b_s)))^2 \prod_{r\in[s]\setminus\{g\}}n_{b_r}
\end{eqnarray*}
where $r_0,r_1\in[k]$ and $r_0\neq r_1$. As in Example \ref{ep45}, we obtain that
\begin{eqnarray*}
\Delta=\sum_{s=s_1}^{s_2}2^{2s}(\Delta_{0,s}+\Delta_{1,s}),
\end{eqnarray*}
where 
\begin{eqnarray*}
\Delta_{0,s}&=&\frac{s(n_{r_0})^{s-1}}{(\ol{\si}(r_0,\ldots,r_0))^2}\geq \frac{s((n_{r_0}))^{s-1}}{B_1^2}\\
\Delta_{1,s}&=&\sum_{j=1}^s\frac{(n_{r_1})^{s-1}}{(\ol{\si}(r_1,\ldots,r_1,r_0,r_1,\ldots,r_1))^2}
\geq \frac{s((n_{r_1}))^{s-1}}{B_1^2}
\end{eqnarray*}
where the inequality follows from Assumption \ref{ap27}(2). It is straightforward to check that when $\phi$ is given by (\ref{dph}), (\ref{eec1}) holds if (\ref{qlb}) holds.

To check (\ref{esc2}), note that
\begin{eqnarray*}
&&\sum_{s=s_1}^{s_2}\sum_{(i_1,\ldots,i_s)\in [k]^s}\sum_{(j_1,\ldots,j_s\in[k]^s)}\left(\phi(i_1,\ldots,i_s)-\phi(j_1,\ldots,j_s)\right)^2\times\left(\prod_{r=1}^s t_{i_s,j_s}(x,y)\right)\\
&\geq & \sum_{s=s_1}^{s_2} \sum_{g\in[s]}\sum_{j\in [s]}\sum_{i\in[k],i\neq j}
(\phi(w(i),\ldots,w(i))-\phi(i,\ldots,i,j,i,\ldots,i))^2\\
&&\times t_{w(i),j}(x,y)\prod_{r=[s]\setminus\{g\}} t_{w(i),i}(x,y)
\end{eqnarray*}
When (\ref{tcd})holds, the following cases might occur
\begin{itemize}
    \item $w$ is not a bijection from $[k]$ to $[k]$. In this case, there exists $i,j\in[k]$, such that $w(i)=w(j)$, then when (\ref{tcd}) holds, we obtain
    \begin{eqnarray*}
    t_{w(i),j}=t_{w(j),j}\geq \frac{n_{j}}{k}
    \end{eqnarray*}
    \item $w$ is a bijection from $[k]$ to $[k]$. However, there exists $i\in[k]^2$, such that
    \begin{eqnarray*}
    t_{w(j),j}\leq n_i-\epsilon n.
    \end{eqnarray*}
    Let
    \begin{eqnarray*}
    i:=w^{-1}(\mathrm{argmax}_{l\in[k]\setminus\{w(j)\}}t_{l,j}),
    \end{eqnarray*}
    then $i\neq j$ and 
    \begin{eqnarray*}
    t_{w(i),j}(x,y)\geq \frac{\epsilon n}{k-1}
    \end{eqnarray*}
\end{itemize}
When (\ref{tcd}) holds and $y\in \Omega_c$ we have
\begin{eqnarray*}
&&\sum_{s=s_1}^{s_2}\sum_{(i_1,\ldots,i_s)\in [k]^s}\sum_{(j_1,\ldots,j_s\in[k]^s)}\left(\phi(i_1,\ldots,i_s)-\phi(j_1,\ldots,j_s)\right)^2\times\left(\prod_{r=1}^s t_{i_s,j_s}(x,y)\right)\\
&\geq &\sum_{s=s_1}^{s_2} s 2^{2s}\left(\frac{n_i}{k}\right)^{s-1}\min\left\{\frac{n_j}{k},\frac{\epsilon n}{k-1}\right\}
\geq \sum_{s=s_1}^{s_2} \frac{2^{2s}sn^s}{(ck)^{s-1}\max\left\{ck,\frac{k-1}{\epsilon}\right\}}
\end{eqnarray*}
Let 
\begin{eqnarray*}
T(n):=\sum_{s=s_1}^{s_2} \frac{2^{2s}sn^s}{(ck)^{s-1}\max\left\{ck,\frac{k-1}{\epsilon}\right\}}
\end{eqnarray*}
Then we obtain (\ref{esc2}). 
\end{example}

\section{Community Detection on Gaussian Mixture Models with Fixed Number of Vertices in Each Community}\label{gm2}

In this section, we consider the MLE restricted to the sample space consisting of all the mappings satisfying the condition that the number of vertices in each community is the same as that of the true community assignment mapping $y\in \Omega_{n_1,\ldots,n_k}$. Again we shall prove a sufficient condition for the occurrences of exact recovery.

Let $x\in\Omega_{n_1,\ldots,n_k}$. By (\ref{kymx}),
\begin{eqnarray*}
\check{y}:=\mathrm{argmin}_{x\in \Omega_{n_1,\ldots,n_k}}\|\Phi*(\bK_y-\bA_x)\|^2=\mathrm{argmin}_{x\in \Omega_{n_1,\ldots,n_k}}f(x)
\end{eqnarray*}

Recall that $f(x)$ is defined as in (\ref{df}). 
Recall also that $f(x)-f(y)$ is a Gaussian random variable with mean value $L_{\Phi}(x,y)$ and variance $4L_{\Phi}(x,y)$.

For each $x\in \Omega_{n_1,\ldots,n_k}$, let
\begin{eqnarray*}
C^*(x):=C(x)\cap \Omega_{n_1,\ldots,n_k};
\end{eqnarray*}
i.e. $C^*(x)$ consists of all the community assignment mappings in $\Omega_{n_1,\ldots,n_k}$ that are equivalent to $x$ in the sense of Definition \ref{dfeq}.
Let
\begin{eqnarray*}
\ol{\Omega}_{n_1,\ldots,n_k}:=\{C^*(x):x\in\Omega_{n_1,\ldots,n_k}\};
\end{eqnarray*}
i.e. $\ol{\Omega}_{n_1,\ldots,n_k}$ consists of all the equivalence classes in $\Omega_{n_1,\ldots,n_k}$.

\begin{lemma}For $x,z\in \Omega_{n_1,\ldots,n_k}$. If $x\in C^*(z)$, then
\begin{eqnarray*}
f(x)=f(z).
\end{eqnarray*}
\end{lemma}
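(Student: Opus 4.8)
The plan is to reduce this statement immediately to Lemma~\ref{lfe}. By the definition $C^*(z):=C(z)\cap\Omega_{n_1,\ldots,n_k}$, the hypothesis $x\in C^*(z)$ in particular gives $x\in C(z)$. Hence Lemma~\ref{lfe} applies verbatim and yields $f(x)=f(z)$. There is no real obstacle here; the lemma is a direct restriction of Lemma~\ref{lfe} to the subspace $\Omega_{n_1,\ldots,n_k}$, recorded separately only because the subsequent arguments for $\check y$ work with the equivalence classes $C^*(\cdot)$ rather than $C(\cdot)$.

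If one prefers a self-contained argument rather than a one-line appeal, I would unfold the definitions exactly as in the proof of Lemma~\ref{lfe}: since $x\in C(z)$, Definition~\ref{dfeq} gives, for all $i\in[p]$ and $j\in[n]$, that $\theta(x,i,x(j))=\theta(z,i,z(j))$, so $(\bA_x)_{i,j}=(\bA_z)_{i,j}$ by~(\ref{dA}), and therefore $\Phi*\bA_x=\Phi*\bA_z$ after dividing entrywise by $\sigma_{i,j}>0$. Plugging this into the definition~(\ref{df}) of $f$ gives $f(x)=-2\langle\Phi*\bK_y,\Phi*\bA_x\rangle+\|\Phi*\bA_x\|^2=-2\langle\Phi*\bK_y,\Phi*\bA_z\rangle+\|\Phi*\bA_z\|^2=f(z)$.

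Either way the content is the same; I would simply write ``Since $C^*(z)\subseteq C(z)$, this follows from Lemma~\ref{lfe}.'' The only point worth noting is that one should not need the extra constraint $x,z\in\Omega_{n_1,\ldots,n_k}$ at all for the conclusion — it is harmless but unused — so the statement is genuinely weaker than Lemma~\ref{lfe}.
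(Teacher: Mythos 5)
Your proposal is correct and matches the paper's proof, which likewise disposes of the statement by noting it follows from Lemma~\ref{lfe} (since $C^*(z)\subseteq C(z)$). The unfolded argument you include is just the proof of Lemma~\ref{lfe} restated, so nothing further is needed.
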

\begin{proof}The lemma follows from Lemma \ref{lfe}.
\end{proof}

Define
\begin{eqnarray*}
p(\check{y};\sigma):=\Pr(\check{y}\in C(y))=\Pr\left(f(\check{y})<\min_{C^*(x)\in \left(\ol{\Omega}_{n_1,\ldots,n_k}\setminus \{C^*(y)\}\right)}f(x)\right)
\end{eqnarray*}
Then
\begin{eqnarray}
1-p(\check{y};\sigma)&\leq& \sum_{C^*(x)\in \left(\ol{\Omega}_{n_1,\ldots,n_k}\setminus \{C^*(y)\}\right)}\Pr(f(x)-f(y)\leq 0)\label{wmc}\\
&=&\sum_{C^*(x)\in \left(\ol{\Omega}_{n_1,\ldots,n_k}\setminus \{C^*(y)\}\right)}\Pr_{\xi\in \sN(0,1)}\left(\xi\geq \frac{L_{\Phi}(x,y)}{2}\right)\notag\\
&\leq&\sum_{C^*(x)\in \left(\ol{\Omega}_{n_1,\ldots,n_k}\setminus \{C^*(y)\}\right)} e^{-\frac{L_{\Phi}(x,y)}{8}}\notag
\end{eqnarray}

\begin{lemma}\label{ll65}Let $y\in \Omega_{n_1,\ldots,n_k}\cap \Omega_c$ be the true community assignment mapping.  Let $x\in\Omega_{n_1,\ldots,n_k}$ 
For $i\in [k]$,  let $w(i)\in[k]$ be defined as in (\ref{twi}). Then
\begin{enumerate}
\item when $\epsilon\in\left(0,\frac{c}{k}\right)$ and $(t_{1,1}(x,y),\ldots, t_{k,k}(x,y))\in \mathcal{B}_{\epsilon}$, $w$ is a bijection from $[k]$ to $[k]$.
\item Assume there exist $i,j\in[k]$, such that $n_i\neq n_j$. If
\begin{eqnarray}
\epsilon<\min_{i,j\in[k]:n_i\neq n_j}\left|\frac{n_i-n_j}{n}\right|\label{eu1}
\end{eqnarray}
Then for any $i\in[k]$,
\begin{eqnarray}
n_i=|y^{-1}(i)|=|y^{-1}(w(i))|=n_{w(i)}.\label{esc}
\end{eqnarray}
\end{enumerate}
\end{lemma}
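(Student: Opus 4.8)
The plan is to deduce Part (1) from the already-established Lemma \ref{ll56} (or Lemma 5.6 of \cite{ZL19}), and then to prove Part (2) by a counting argument based on the defining relations \eqref{tn} for the $t_{i,j}$'s. For Part (1), observe that when $y\in\Omega_c$ we have $n_i/n\geq c$ for all $i\in[k]$, so $n_i-n\epsilon\geq n(c-\epsilon)$; since $\epsilon<c/k<c$, the hypothesis that $(t_{1,1}(x,y),\ldots,t_{k,k}(x,y))\in\sB_\epsilon$ gives, by condition (1) in the definition of $\sB_\epsilon$, that $\max_{j\in[k]}t_{j,i}(x,y)\geq n_i-n\epsilon$ for every $i$. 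Condition (2) in the definition of $\sB_\epsilon$ then already asserts that $w$ is a bijection; alternatively, this is exactly the conclusion of Lemma \ref{ll56}, whose hypotheses $\epsilon\in(0,\frac{2c}{3k})$ and $\max_{j}t_{j,i}(x,y)\geq n_i-n\epsilon$ are implied here (note $\frac{c}{k}$ may need to be compared with $\frac{2c}{3k}$; if the statement really intends $\epsilon\in(0,\frac{2c}{3k})$ the same argument applies verbatim, and in any case $\epsilon<\frac{c}{k}$ together with $y\in\Omega_c$ forces each community to have a dominant preimage under $x$, which is the bijectivity). So Part (1) is essentially a restatement.

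For Part (2), fix $i\in[k]$ and write $a:=w(i)$, so that $t_{a,i}(x,y)=\max_{j\in[k]}t_{j,i}(x,y)\geq n_i-n\epsilon$. The first relation in \eqref{tn} gives $\sum_{j\in[k]}t_{a,j}(x,y)=n_a(x)=n_a$ (using $x\in\Omega_{n_1,\ldots,n_k}$), hence $n_a\geq t_{a,i}(x,y)\geq n_i-n\epsilon$. Since $w$ is a bijection by Part (1), the inverse map $w^{-1}$ satisfies $w^{-1}(a)=i$, and running the same argument with the roles of the two mappings exchanged — i.e. using that $t_{a,i}(x,y)$ is also large relative to row $a$, together with the column relation $\sum_{i'\in[k]}t_{i',i}(x,y)=n_i(y)=n_i$ — yields $n_i\geq t_{a,i}(x,y)\geq n_a-n\epsilon$ as well, because $a=w(i)$ means $t_{a,i}(x,y)$ is the column-$i$ maximum and the other entries of column $i$ sum to at most $n\epsilon$... more carefully: from $\sum_{i'}t_{i',i}(x,y)=n_i$ and $t_{a,i}(x,y)\geq n_i-n\epsilon$ we get $\sum_{i'\neq a}t_{i',i}(x,y)\leq n\epsilon$, and since $w$ is a bijection there is a unique $i_0$ with $w(i_0)=a'$ for each $a'$; applying $\sum_j t_{a,j}(x,y)=n_a$ together with $t_{a,i}(x,y)\geq n_i-n\epsilon$ gives $\sum_{j\neq i}t_{a,j}(x,y)\leq n_a-n_i+n\epsilon$, which must be $\geq 0$, so $n_a\geq n_i-n\epsilon$. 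Combining the two bounds, $|n_a-n_i|\leq n\epsilon$, i.e. $|n_{w(i)}-n_i|\leq n\epsilon$.

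Now suppose toward a contradiction that $n_{w(i)}\neq n_i$ for some $i$. Then $\left|\frac{n_{w(i)}-n_i}{n}\right|\geq \min_{p,q\in[k]:n_p\neq n_q}\left|\frac{n_p-n_q}{n}\right|>\epsilon$ by \eqref{eu1}, contradicting $|n_{w(i)}-n_i|\leq n\epsilon$ established above. Hence $n_{w(i)}=n_i$ for all $i\in[k]$, which is \eqref{esc} since $n_i=|y^{-1}(i)|$ and $n_{w(i)}=|y^{-1}(w(i))|$. The only delicate point, and the one I expect to require the most care in the write-up, is the bookkeeping that turns ``$t_{w(i),i}(x,y)$ is the column-$i$ maximum and the off-diagonal mass in both the relevant row and the relevant column is at most $n\epsilon$'' into the clean two-sided bound $|n_{w(i)}-n_i|\leq n\epsilon$; this is where the constraints \eqref{tn}, the bijectivity of $w$, and the size of $\epsilon$ all have to be used together, but it is elementary counting once Part (1) is in hand.
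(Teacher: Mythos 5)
Part (1) of your proposal is fine: bijectivity of $w$ is literally condition (2) in the definition of $\mathcal{B}_\epsilon$, and if one instead wants to derive it from condition (1) alone, the counting argument behind Lemma \ref{ll56} goes through for the wider range $\epsilon<\frac{c}{k}$ precisely because here $x\in\Omega_{n_1,\ldots,n_k}$ forces $n_i(x)=n_i\geq cn$ rather than just $\frac{2c}{3}n$. (For the record, the paper gives no argument at all and simply cites Lemma 6.6 of \cite{ZL19}.)

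Part (2), however, has a genuine gap. Your plan is to establish the two-sided bound $|n_{w(i)}-n_i|\leq n\epsilon$ \emph{before} invoking (\ref{eu1}), but you only ever prove the one-sided bound $n_{w(i)}\geq n_i-n\epsilon$: the asserted inequality $t_{w(i),i}(x,y)\geq n_{w(i)}-n\epsilon$ is never justified --- membership in $\mathcal{B}_\epsilon$ makes $t_{w(i),i}$ large relative to the \emph{column} sum $n_i$, not the \emph{row} sum $n_{w(i)}$ --- and your ``more carefully'' computation merely re-derives $n_{w(i)}\geq n_i-n\epsilon$. In fact the two-sided bound is false under the hypotheses you use to derive it. Take $k=3$, $n=300$, $(n_1,n_2,n_3)=(90,110,100)$, $c=\frac{3}{10}$, $\epsilon=\frac{1}{30}<\frac{c}{k}$, $\theta$ chosen so that every permutation of $[k]$ is $\theta$-preserving, and
\begin{eqnarray*}
\left(t_{j,i}(x,y)\right)_{j,i\in[3]}=\begin{pmatrix}0&0&90\\ 90&10&10\\ 0&100&0\end{pmatrix},
\end{eqnarray*}
which has the correct row and column sums and lies in $\mathcal{B}_\epsilon$ with $w(1)=2$, $w(2)=3$, $w(3)=1$; here $n_{w(1)}-n_1=20>n\epsilon=10$, and indeed $t_{w(1),1}=90<n_{w(1)}-n\epsilon=100$. (This does not contradict the lemma, because (\ref{eu1}) fails there: $\min_{n_i\neq n_j}|n_i-n_j|/n=\epsilon$ exactly.) The repair is simpler than what you attempted and needs only the bound you did prove: from $n_{w(i)}=\sum_{j\in[k]} t_{w(i),j}(x,y)\geq t_{w(i),i}(x,y)\geq n_i-n\epsilon$ together with (\ref{eu1}), for each $i$ either $n_{w(i)}=n_i$ or $n_{w(i)}-n_i>n\epsilon>0$, so $n_{w(i)}\geq n_i$ for every $i\in[k]$; since $w$ is a bijection, $\sum_{i\in[k]} n_{w(i)}=\sum_{i\in[k]} n_i$, which forces $n_{w(i)}=n_i$ for all $i$, i.e.\ (\ref{esc}).
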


\begin{proof}See Lemma 6.6 of \cite{ZL19}.
\end{proof}

\begin{definition}Let $l\geq 2$ be a positive integer. Let $x,y\in \Omega_{n_1,\ldots,n_k}$. We say
$l$ distinct communities $(i_1,\ldots,i_{l})\in[k]^l$ is an $l$-cycle for $(x,y)$, if $t_{i_{s-1},i_s}(x,y)>0$ for all $2\leq s\leq l+1$, where $i_{l+1}:=i_1$.
\end{definition}

\begin{lemma}\label{lm33}Let $x,y\in \Omega_{n_1,\ldots,n_k}$ and $x\neq y$. Then there exists an $l$-cycle for $(x,y)$ with $2\leq l\leq k$. 
\end{lemma}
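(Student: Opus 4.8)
The plan is to build a cycle by following a chain of ``misplaced'' vertices through the communities, using the counting identities \eqref{tn} together with the fact that $x,y\in\Omega_{n_1,\ldots,n_k}$ share the same community sizes. First I would observe that since $x\neq y$, the distance $D_\Omega(x,y)=n-\sum_{i\in[k]}t_{i,i}(x,y)$ is strictly positive, so there exist $i_1\neq i_2$ with $t_{i_1,i_2}(x,y)>0$; that is, some vertex lies in community $i_1$ under $x$ and in community $i_2$ under $y$. This gives the first edge $i_1\to i_2$ of a directed walk on the vertex set $[k]$, where we draw an arrow $a\to b$ whenever $t_{a,b}(x,y)>0$.

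The key step is to argue that from any community $b$ that is the head of such an arrow there is an outgoing arrow to some community $c\neq b$. Suppose $i_1\to i_2$ with $i_1\neq i_2$. By \eqref{tn}, $\sum_{j\in[k]}t_{i_2,j}(x,y)=n_{i_2}(x)=n_{i_2}$ and $\sum_{a\in[k]}t_{a,i_2}(x,y)=n_{i_2}(y)=n_{i_2}$, so the $i_2$-th row sum and the $i_2$-th column sum of the matrix $(t_{a,b}(x,y))$ are equal. Since $t_{i_1,i_2}(x,y)>0$ with $i_1\neq i_2$ contributes to the column sum but not to the diagonal entry $t_{i_2,i_2}(x,y)$, the off-diagonal part of column $i_2$ is positive; hence the off-diagonal part of row $i_2$ must also be positive, i.e.\ there exists $i_3\neq i_2$ with $t_{i_2,i_3}(x,y)>0$. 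Thus the directed walk $i_1\to i_2\to i_3\to\cdots$ never gets stuck at a vertex it just entered from a different vertex, and in particular it can be continued indefinitely along distinct consecutive communities.

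Since $[k]$ is finite, this walk must eventually revisit a community: there are indices $r<s$ with $i_r=i_s$ and all of $i_r,i_{r+1},\ldots,i_{s-1}$ distinct. Setting $l:=s-r$, the communities $(i_r,i_{r+1},\ldots,i_{s-1})$ form an $l$-cycle for $(x,y)$, since $t_{i_{t-1},i_t}(x,y)>0$ along the whole loop and $i_s=i_r$ closes it. We have $l\geq 2$ because consecutive communities in the walk are distinct (so no loop of length $1$), and $l\leq k$ because the cycle uses distinct communities drawn from $[k]$. This proves the lemma.

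I expect the main obstacle to be the bookkeeping in the middle step: making precise that equality of the $i$-th row and column sums of $(t_{a,b})$ forces an outgoing off-diagonal edge from every community that has an incoming off-diagonal edge, and being careful that ``distinct consecutive communities'' genuinely yields a cycle of length at least $2$ rather than a degenerate repetition. Everything else is a finiteness/pigeonhole argument once that local continuation property is established; the shared sizes $n_i(x)=n_i(y)=n_i$ coming from $x,y\in\Omega_{n_1,\ldots,n_k}$ are exactly what makes the row and column sums match, which is essential and would fail in the general space $\Omega$.
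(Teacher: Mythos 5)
Your proof is correct: since $x,y\in\Omega_{n_1,\ldots,n_k}$, the matrix $(t_{a,b}(x,y))_{a,b\in[k]}$ has equal $i$-th row and column sums by \eqref{tn}, so every community with a positive off-diagonal column entry has a positive off-diagonal row entry, and the resulting directed walk on $[k]$ closes up by pigeonhole into a cycle of $l$ distinct communities with $2\leq l\leq k$. The paper itself gives no argument here but merely cites Lemma 3.3 of \cite{ZL19}; your balanced row/column-sum argument is the standard proof one would expect behind that citation, so nothing further is needed.
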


\begin{proof}See Lemma 3.3 of \cite{ZL19}.
\end{proof}

\begin{lemma}\label{lm63}For any $x,y\in\Omega_{n_1,\ldots,n_k}$, $L_{\Phi}(x,y)\geq 0$, where the equality holds if and only if $x\in C^*(y)$.
\end{lemma}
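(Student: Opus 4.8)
The plan is to obtain Lemma \ref{lm63} as a direct specialization of Lemma \ref{lgz} to the subspace $\Omega_{n_1,\ldots,n_k}$, keeping track of how the fixed-size constraint interacts with the equivalence classes. Nonnegativity needs no work: $L_{\Phi}(x,y)=\|\Phi*(\bA_x-\bA_y)\|^2$ is a squared norm, and this is already recorded in Lemma \ref{lgz}.

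For the ``if'' direction I would argue as follows. If $x\in C^*(y)=C(y)\cap\Omega_{n_1,\ldots,n_k}$, then in particular $x\in C(y)$, so Lemma \ref{lgz}(1) gives $L_{\Phi}(x,y)=0$. This uses only the definition of $C^*$.

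For the converse, suppose $L_{\Phi}(x,y)=0$ with $x,y\in\Omega_{n_1,\ldots,n_k}$. Expanding $L_{\Phi}(x,y)=\sum_{i\in[p],j\in[n]}\sigma_{i,j}^{-2}\,(\theta(x,i,x(j))-\theta(y,i,y(j)))^2$ and using that every $\sigma_{i,j}>0$, the vanishing of $L_{\Phi}(x,y)$ forces $\theta(x,i,x(j))=\theta(y,i,y(j))$ for all $i\in[p]$, $j\in[n]$; Assumption \ref{ap24} (equivalently, Lemma \ref{lgz}(2)) then yields $x\in C(y)$, and since $x\in\Omega_{n_1,\ldots,n_k}$ by hypothesis, $x\in C(y)\cap\Omega_{n_1,\ldots,n_k}=C^*(y)$, as required.

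The one point that deserves attention — and the only substantive input — is that the ``only if'' direction genuinely relies on the separation property of $\theta$ (Assumption \ref{ap24}): without it one could have $\bA_x=\bA_y$, hence $L_{\Phi}(x,y)=0$ and $f(x)=f(y)$, for inequivalent $x,y$ with the same community profile, which would by itself obstruct exact recovery. So there is no real obstacle here; the cycle lemma (Lemma \ref{lm33}) plays no role in Lemma \ref{lm63} itself (it enters later, when a general element of $\Omega_{n_1,\ldots,n_k}$ is decomposed into elementary transpositions). In effect, Lemma \ref{lm63} is a corollary of Lemma \ref{lgz} together with the identity $C^*(y)=C(y)\cap\Omega_{n_1,\ldots,n_k}$.
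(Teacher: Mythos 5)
Your proposal is correct and follows essentially the same route as the paper, which simply derives Lemma \ref{lm63} from Lemma \ref{lgz} (your added bookkeeping that $x\in C(y)$ together with $x\in\Omega_{n_1,\ldots,n_k}$ gives $x\in C^*(y)$, and that the ``only if'' direction uses Assumption \ref{ap24}, just makes explicit what the paper leaves implicit).
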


\begin{proof}The lemma follows from Lemma \ref{lgz}.
\end{proof}

\begin{lemma}\label{lm46}Suppose that Assumption \ref{ap214} holds.  Then Assumption \ref{ap46} holds.
\end{lemma}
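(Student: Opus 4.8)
The plan is to deduce Assumption \ref{ap46} from Assumption \ref{ap214} by decomposing the path from $y_m$ to $y_h$ into $j$ single-vertex recolorings and applying the hypothesis of Assumption \ref{ap214} at each step, summing the resulting gaps. Concretely, write $y_h = y^{(0)}$ and build a chain $y^{(0)}, y^{(1)}, \ldots, y^{(j)} = y_m$ where $y^{(r)}$ is obtained from $y^{(r-1)}$ by recoloring a single vertex among $\{u_1,\ldots,u_j\}$; the conditions in Assumption \ref{ap46}(2)(b), namely $y_m(u_i)\neq y_h(u_i)=x(u_i)=y_m(u_{i-1})$, are exactly what is needed to present the move at stage $i$ as a change at vertex $u_i$ from the $y_h$-color back toward the $y_m$-color, which is the type of single-vertex move controlled by Assumption \ref{ap214}. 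At each stage one verifies that the intermediate mapping lies in $\Omega_{\frac{2c}{3}}$ (which holds since we only move $o(n)$ vertices, using $j \le D_{\Omega}(y_m,y_h)$ and that $y_m \in \Omega_c$ via the $\mathcal{B}_\epsilon$ hypothesis), and that the relevant $t_{i,i}$ statistics against $x$ stay within the $\mathcal{B}_\epsilon$ window $n_i - n\epsilon$, so that the hypotheses of Assumption \ref{ap214} are met with $w = \mathrm{id}$.

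First I would fix notation: order the vertices $u_1,\ldots,u_j$ as in Assumption \ref{ap46}, and define the intermediate mappings so that $y^{(r)}$ agrees with $y_m$ on $\{u_1,\ldots,u_r\}$ and with $y_h$ on $\{u_{r+1},\ldots,u_j\}$ (and with both elsewhere). Then $y^{(r)}$ and $y^{(r-1)}$ differ only at $u_r$, with $y^{(r-1)}(u_r) = y_h(u_r)$ and $y^{(r)}(u_r) = y_m(u_r)$. I would then check that the pair $(x, y^{(r-1)})$ plays the role of $(x, y_1)$ in Assumption \ref{ap214} and $(x, y^{(r)})$ the role of $(x, y_2)$: this requires $y^{(r-1)} \notin C(x)$, which follows because along the whole chain the vertices $u_1, \ldots, u_j$ form a structure (by the condition $y_h(u_i) = x(u_i) = y_m(u_{i-1})$, a cyclic pattern of disagreements with $x$) preventing $y^{(r-1)}$ from being a relabeling of $x$; and it requires $(t_{i,i}(x, y^{(r-1)}))_i$ to sit in the right regime. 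Since $(t(x,y_m)) \in \mathcal{B}_\epsilon$ with $w = \mathrm{id}$, and $y^{(r-1)}$ differs from $y_m$ in at most $j \le k$ coordinates, the diagonal counts change by at most $k$, which does not disturb membership of the relevant inequalities $t_{i,i} \ge n_i - n\epsilon$ for $n$ large (here one may need to slightly shrink $\epsilon$, or absorb the $O(k)$ discrepancy into the $o(1)$).

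Having set this up, each application of Assumption \ref{ap214} yields
\begin{eqnarray*}
L_{\Phi}(x, y^{(r-1)}) - L_{\Phi}(x, y^{(r)}) \geq \Delta(1 + o(1)),
\end{eqnarray*}
and summing over $r = 1, \ldots, j$ telescopes to
\begin{eqnarray*}
L_{\Phi}(x, y_h) - L_{\Phi}(x, y_m) \geq j\Delta(1 + o(1)),
\end{eqnarray*}
wait — I must be careful with the sign and direction. In Assumption \ref{ap214} the inequality reads $L_{\Phi}(x, y_1) - L_{\Phi}(x, y_2) \ge \Delta(1+o(1))$ where $y_2$ is the one-step modification of $y_1$ moving a vertex toward agreement with $x$. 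Since $y_m(u_i) = x(u_{i-1})$ rather than $x(u_i)$, the move from $y_h$ toward $y_m$ at $u_r$ is actually a move toward a color that $x$ does \emph{not} take at $u_r$; so in fact it is the reverse move (from $y_m$ toward $y_h$) that decreases $L_\Phi$, giving $L_\Phi(x, y^{(r)}) - L_\Phi(x, y^{(r-1)}) \ge \Delta(1+o(1))$ with $y^{(r-1)}$ the vertex-at-$u_r$-agrees-with-$x$ configuration. Telescoping then gives $L_\Phi(x, y_m) - L_\Phi(x, y_h) \ge j\Delta(1+o(1))$, which is precisely \eqref{l46d}.

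The main obstacle, and the step I would spend the most care on, is verifying that every intermediate mapping $y^{(r)}$ genuinely satisfies the hypotheses of Assumption \ref{ap214} — in particular that $(t(x, y^{(r)})) \in \mathcal{B}_\epsilon$ with the diagonal being the row-maxima and $w = \mathrm{id}$, and that $y^{(r)} \notin C(x)$. The first is a counting bookkeeping argument: one tracks how $t_{a,b}(x, \cdot)$ changes under each single-vertex recoloring (incrementing one off-diagonal and decrementing another, or adjusting a diagonal), and shows the $\mathcal{B}_\epsilon$ defining inequalities survive a bounded number of such perturbations. The second uses the cyclic structure encoded in Assumption \ref{ap46}(2)(b): the vertices $u_1, \ldots, u_j$ witness that $y_m$ and $y_h$ disagree with $x$ in a way that cannot be undone by a global relabeling, so no $y^{(r)}$ strictly between them is equivalent to $x$. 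The $o(1)$ error terms compound over $j$ steps but $j \le k$ is bounded relative to the scale at which $\Delta \to \infty$, so $j \cdot o(1) = o(1)$ relative to $j\Delta$, keeping the final bound in the stated form.
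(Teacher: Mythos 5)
Your proposal is correct and follows essentially the same route as the paper: decompose the passage between $y_m$ and $y_h$ into $j$ single-vertex recolorings at $u_1,\ldots,u_j$, apply Assumption \ref{ap214} to each step, and telescope to get (\ref{l46d}). The paper simply builds the chain starting from $y_m$ and moving each $u_i$ to $x(u_i)$, so each step is already in the $(y_1,y_2)$ format of Assumption \ref{ap214} and the sign issue you corrected never arises; your extra remarks on verifying the hypotheses for the intermediate mappings go beyond what the paper's own (terse) proof records.
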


\begin{proof}Let $z_0=y_m$ and $z_j=y_h$. For $i\in[j-1]$, define $z_i\in \Omega$ by
\begin{eqnarray*}
z_i(v)=\begin{cases}z_{i-1}(v)&\mathrm{if}\ v\in[n]\setminus \{u_i\}\\ x(u_i)&\mathrm{if}\ v=u_i \end{cases}
\end{eqnarray*}
Then for any $i\in [j]$,
\begin{eqnarray*}
D_{\Omega}(z_i,z_{i-1})=1.
\end{eqnarray*}
 by Assumption \ref{ap214}, we obtain
\begin{eqnarray}
L_{\Phi}(x,z_{i-1})- L_{\Phi}(x,z_i)\geq \Delta(1+o(1)),\ \forall i\in [j]\label{itd}
\end{eqnarray}
summing over all the $i\in [j]$, we obtain (\ref{l46d}).
\end{proof}

\bigskip

\noindent{\textbf{Proof of Theorem \ref{m27}.}}
Let 
\begin{eqnarray*}
\Gamma:=\sum_{C^*(x)\in \left(\ol{\Omega}_{n_1,\ldots,n_k}\setminus \{C^*(y)\}\right)}e^{-\frac{L_{\Phi}(x,y)}{8}}.
\end{eqnarray*}
By (\ref{wmc}), it suffices to show that $\lim_{n\rightarrow\infty}\Gamma=0$.

Let
\begin{eqnarray*}
0<\epsilon<\min\left(\frac{2c}{3k},\min_{i,j\in[k],n_i\neq n_j}\left|\frac{n_i-n_j}{n}\right|\right)
\end{eqnarray*}
Note that
\begin{eqnarray*}
\Gamma\leq \Gamma_1+\Gamma_2;
\end{eqnarray*}
where 
\begin{eqnarray*}
\Gamma_1&=&\sum_{C^*(x)\in\ol{\Omega}_{n_1,\ldots,n_k}:\left(t_{1,1}(x,y),\ldots,t_{k,k}(x,y)\right)\in\left(\mathcal{B}\setminus \mathcal{B}_{\epsilon}\right), C^*(x)\neq C^*(y)}e^{\frac{-L_{\Phi}(x,y)}{8}}
\end{eqnarray*}
and
\begin{eqnarray}
\Gamma_2=\sum_{C^*(x)\in\ol{\Omega}_{n_1,\ldots,n_k}:\left(t_{1,1}(x,y),\ldots, t_{k,k}(x,y)\right)\in \mathcal{B}_{\epsilon}, C(x)\neq C(y)}e^{\frac{-L_{\Phi}(x,y)}{8}}.\label{dg2}
\end{eqnarray} 

Under Assumption \ref{ap27}, by Lemma \ref{l13} we have
\begin{eqnarray*}
0\leq \Gamma_1&\leq& k^n e^{-\frac{T(n)}{B_1^2}}
\end{eqnarray*}
By (\ref{ld1}), we have
\begin{eqnarray}
\lim_{n\rightarrow\infty}\Gamma_1=0.\label{g1z}
\end{eqnarray}

Now let us consider $\Gamma_2$. Recall that $y\in \Omega_{n_1,\ldots,n_k}\cap \Omega_c$ is the true community assignment mapping.  Let $w$ be the bijection from $[k]$ to $[k]$ as defined in (\ref{twi}). Let $y^*\in \Omega$ be defined by
\begin{eqnarray*}
y^*(z)=w(y(z)),\ \forall z\in [n].
\end{eqnarray*}
Then $y^*\in C(y)$. By Part (2) of Lemma \ref{ll65}, we obtain that for $i\in[k]$
\begin{eqnarray*}
\left|(y^*)^{-1}(i)\right|=\left|y^{-1}(w^{-1}(i))\right|=\left|y^{-1}(i)\right|;
\end{eqnarray*}
therefore $y^*\in \Omega_{n_1,\ldots,n_k}$.
 Moreover, $x$ and $y^*$ satisfies
\begin{eqnarray}
t_{i,i}(x,y^*)\geq n_i(y^*)-n\epsilon,\ \forall i\in[k].\label{dxys1}
\end{eqnarray}

If $x\neq y^*$, by Lemma \ref{lm33},
 there exists an $l$-cycle $(i_1,\ldots,i_{l})$ for $(x,y^*)$ with $2\leq l\leq k$. Then for each $2\leq a\leq (l+1)$, choose an arbitrary vertex $u_m$ in $S_{i_{m-1},i_m}(x,y^*)$, and let $y_1(u_m)=i_{m-1}$, where $i_{l+1}:=i_1$. For any vertex $z\in[n]\setminus\{u_{2},\ldots,u_{l+1}\}$, let $y_1(z)=y^*(z)$.

 Note that $y_1\in \Omega_{n_1,\ldots,n_k}$. Moreover, for $1\leq m\leq l$, we have
\begin{eqnarray*}
t_{i_m,i_m}(x,y^*)+1=t_{i_m,i_m}(x,y_1);\\
t_{i_m,i_{m+1}}(x,y^*)-1=t_{i_m,i_{m+1}}(x,y_1)
\end{eqnarray*}
and
\begin{eqnarray*}
t_{a,b}(x,y^*)=t_{a,b}(x,y_1),\ \forall (a,b)\notin\{(i_m,i_m),(i_m,i_{m+1})\}_{s=1}^{l}.
\end{eqnarray*}
When $\left(t_{1,1}(x,y),\ldots, t_{k,k}(x,y)\right)\in\mathcal{B}_{\epsilon}$, From Assumption \ref{ap214} and Lemma \ref{lm46} we obtain
\begin{eqnarray*}
L_{\Phi}(x,y_1)-L_{\Phi}(x,y)&=&L_{\Phi}(x,y_1)-L_{\Phi}(x,y^*)\leq -l\Delta(1+o(1))
\end{eqnarray*}
Therefore 
\begin{eqnarray}
e^{-\frac{L_{\Phi}(x,y)}{8}}\leq e^{-\frac{L_{\Phi}(x,y_1)}{8}}e^{-\frac{l\Delta(1+o(1))}{8}}\label{mst1}
\end{eqnarray}

If $y_1\neq x$, we find an $l_2$-cycle ($2\leq l_2\leq k$) for $(x,y_1)$, change community assignments along the $l_2$-cycle  as above, and obtain another community assignment mapping $y_2\in \Omega_{n_1,\ldots,n_k}$, and so on. 
Let $y_0:=y$, and note that for each $r\geq 1$, if $y_r$ is obtained from $y_{r-1}$ by changing colors along an $l_r$ cycle for $(x,y_{r-1})$, we have
\begin{eqnarray*}
D_{\Omega}(x,y_r)= D_{\Omega}(x,y_{r-1})-l_r
\end{eqnarray*}
Therefore finally we can obtain $x$ from $y$ by changing colors along at most $\left\lfloor \frac{n}{2} \right\rfloor$ cycles. By similar arguments as those used to derive (\ref{mst1}), we obtain that for each $r$
\begin{eqnarray*}
e^{-\frac{L_{\Phi}(x,y_{r-1})}{8}}\leq e^{-\frac{L_{\Phi}(x,y_r)}{8}}e^{-\frac{l_r\Delta(1+o(1))}{8}}
\end{eqnarray*}
Therefore if $y_h=x$ for some $1\leq h\leq \left\lfloor \frac{n}{2} \right\rfloor$, we have
\begin{eqnarray*}
e^{-\frac{L_{\Phi}(x,y)}{8}}\leq e^{-\frac{L_{\Phi}(x,y_{h-1})}{8}}e^{-\frac{\left(\sum_{r=1}^{h-1}l_r\right)\Delta(1+o(1)))}{8}}.
\end{eqnarray*}
By Assumption \ref{ap214} and Lemma \ref{lm46} we obtain
\begin{eqnarray*}
L_{\Phi}(x,y_{h-1}))^2\geq l_h\Delta(1+o(1))
\end{eqnarray*}
Therefore
\begin{eqnarray*}
e^{-\frac{L_{\Phi}(x,y)}{8}}\leq \prod_{i\in[h]}e^{-\frac{l_i\Delta(1+o(1))}{8}}.
\end{eqnarray*}

Note also that for any $r_1\neq r_2$, in the process of obtaining $y_{r_1}$ from $y_{r_1-1}$ and the process of obtaining $y_{r_2}$ from $y_{r_2-1}$, we change community assignments on disjoint sets of vertices. Hence the order of these steps of changing community assignments along cycles does not affect the final community assignment mapping we obtain.  From (\ref{dg2}) we have
\begin{eqnarray}
\Gamma_2\leq \prod_{l=2}^k \left(\sum_{m_l=0}^{\infty} (nk)^{m_l\ell} e^{-\frac{(1+o(1))\Delta l m_l}{8}}\right)-1.\label{iup}
\end{eqnarray}
On the right hand side of (\ref{iup}), when expanding the product, each summand has the form
\begin{eqnarray*}
\left[(nk)^{2m_2}e^{-\frac{(1+o(1))\Delta2 m_2}{8}}\right]\cdot\left[ (nk)^{3m_3}e^{-\frac{(1+o(1))\Delta3 m_3}{8}}\right]\cdot\ldots\cdot\left[(nk)^{km_k} e^{-\frac{(1+o(1))\Delta k m_k}{8}}\right]
\end{eqnarray*}
where the factor $\left[(nk)^{2m_2}e^{-\frac{(1+o(1))\Delta 2 m_2}{8}}\right]$ represents that we changed along 2-cycles $m_2$ times, the factor $\left[ (nk)^{3m_3}e^{-\frac{(1+o(1))\Delta 3 m_3}{8}}\right]$ represents that we changed along 3-cycles $m_3$ times, and so on. Moreover, each time we changed along an $l$-cycle, we need to first determine the $l$ different colors involved in the $l$-cycle, and there are at most $k^l$ different $l$-cycles;  we then need to choose $l$ vertices to change colors, and there are at most $n^{l}$ choices.
It is straightforward to check that if $\sigma$ satisfies (\ref{ld2}), then
\begin{eqnarray*}
\lim_{n\rightarrow\infty}nke^{-\frac{(1+o(1))\Delta}{8}}=0.
\end{eqnarray*}

Therefore we have
\begin{eqnarray*}
\sum_{m_l=0}^{\infty} (nk)^{m_l\ell} e^{-\frac{(1+o(1))\Delta l m_l}{8}}\leq \frac{1}{1-e^{\log k+\log n-\frac{(1+o(1))\Delta}{8}}};
\end{eqnarray*}
when $n$ is sufficiently large and $\epsilon$ is sufficiently small.
Let 
\begin{eqnarray*}
\Psi:= \prod_{l=2}^k \left(\sum_{m_l=0}^{\infty} (nk)^{m_l\ell} e^{-\frac{(1+o(1))m_l l\Delta}{8}}\right).
\end{eqnarray*}
Since $\log(1+x)\leq x$ for $x\geq 0$, we have
\begin{eqnarray*}
0\leq \log \Psi&=&\sum_{l=2}^{k}\log \left(1+\sum_{m_l=1}^{\infty} (nk)^{m_l\ell} e^{-\frac{(1+o(1))\Delta l m_l}{8}}\right)\\
&\leq &\sum_{l=2}^{k}\sum_{m_l=1}^{\infty} (nk)^{m_l\ell} e^{-\frac{(1-\delta)\Delta l m_l}{8}}\\
&\leq &\sum_{l=2}^{k}\frac{\left(nke^{-\frac{(1-\delta)\Delta}{8}}\right)^{l}}{1-\left(nke^{-\frac{(1-\delta)\Delta}{8}}\right)^l}\\
&\leq &\frac{\left(nke^{-\frac{(1-\delta)\Delta}{8}}\right)^{2}}{\left[1-\left(nke^{-\frac{(1-\delta)\Delta}{8}}\right)^{2}\right]\left[1-\left(nke^{-\frac{(1-\delta)\Delta}{8}}\right)\right]}\rightarrow 0,
\end{eqnarray*}
as $n\rightarrow\infty$. Then
\begin{eqnarray}
0\leq \lim_{n\rightarrow\infty}\Gamma_2\leq \lim_{n\rightarrow\infty}e^{\log \Sigma}-1=0.\label{g2z}
\end{eqnarray}

Then the proposition follows from (\ref{g1z}) and (\ref{g2z}).
$\hfill\Box$

\section{Community Detection on Hypergraphs with Fixed Number of Vertices in Each Community}\label{hg2}

In this section, we study community detection on hypergraphs under the assumption that the number of vertices in each community is known and fixed. We shall prove a condition when exact recovery does not occur. 

Recall that $y\in \Omega_{n_1,\ldots,n_k}$ is the true community assignment mapping.

\noindent\textbf{Proof of Theorem \ref{p210}}.
When $y^{(ab)}\in \Omega_{n_1,\ldots,n_k}$ is defined by (\ref{yab1}),
\begin{eqnarray*}
&&t_{y^{(ab)}(a),y(a)}(y^{(ab)},y)-1=t_{y^{(ab)}(a),y(a)}(y,y)=t_{y(b),y(a)}(y,y)=0\\
&&t_{y(b),y(b)}(y^{(ab)},y)+1=t_{y(b),y(b)}(y,y)=n_{y(b)}\\
&&t_{y^{(ab)}(b),y(b)}(y^{(ab)},y)-1=t_{y(a),y(b)}(y^{(ab)},y)-1=t_{y(a),y(b)}(y,y)=0\\
&&t_{y(a),y(a)}(y^{(ab)},y)+1=t_{y(a),y(a)}(y,y)=n_{y(a)}.
\end{eqnarray*}
and
\begin{eqnarray*}
t_{i,j}(y^{(ab)},y)=t_{i,j}(y),\ \forall\ (i,j)\in \left([k]^2\setminus\{(y(a),y(a)),(y(a),y(b)),(y(b),y(a)),(y(b),y(b))\}\right)
\end{eqnarray*}
Note that 
\begin{eqnarray*}
1-p(\check{y};\sigma)\geq \mathrm{Pr}\left(\cup_{a,b\in[n],y(a)\neq y(b)}(f(y^{(ab)})-f(y)<0)\right),
\end{eqnarray*}
since any of the event $(f(y^{(ab)})-f(y)<0)$ implies $\check{y}\neq y$. By (\ref{fhxy}) we obtain that $f(y^{(ab)})-f(y)$ is a Gaussian random variable with mean value $\|\Phi*(\bA_{y^{(ab)}}-\bA_y)\|^2$ and variance $4\|\Phi*(\bA_{y^{(ab)}}-\bA_y)\|^2$.
So $1-p(\check{y};\sigma)$ is at least
\begin{eqnarray*}
&&\mathrm{Pr}\left(\cup_{a,b\in[n],y(a)\neq y(b)}(f(y^{(ab)})-f(y)<0)\right)\\
&\geq &\mathrm{Pr}\left(\mathrm{max}_{a,b\in[n], y(a)\neq y(b)}\frac{2\langle\mathbf{W},\Phi*(\mathbf{A}_{y^{(ab)}}-\mathbf{A}_{y}) \rangle}{\|\Phi*(\bA_{y^{(ab)}}-\bA_y)\|^2}>1\right)
\end{eqnarray*}
Let $H_1$, $H_2$ be given as in the assumptions of the proposition.
Then
\begin{eqnarray*}
1-p(\check{y};\sigma)\geq \mathrm{Pr}\left(\mathrm{max}_{a\in H_1,b\in H_2}\frac{2\langle\mathbf{W},\Phi*(\mathbf{A}_{y^{(ab)}}-\mathbf{A}_{y})\rangle}{\|\Phi*(\bA_{y^{(ab)}}-\bA_y)\|^2}>1\right)
\end{eqnarray*}
Let $(\mathcal{X},\mathcal{Y},\mathcal{Z})$ be a partition of $[n]^s$ defined by
\begin{eqnarray*}
&&\mathcal{X}=\{\alpha=(\alpha_1,\alpha_2,\ldots,\alpha_s)\in [n]^s: s\in\{s_1,s_1+1,\ldots,s_2\}, \{\alpha_1,\ldots,\alpha_s\}\cap (H_1\cup H_2)=\emptyset\}\\
&&\mathcal{Y}=\{\alpha=(\alpha_1,\alpha_2,\ldots,\alpha_s)\in [n]^s: s\in\{s_1,s_1+1,\ldots,s_2\}, |r\in [s]:\alpha_r\in (H_1\cup H_2)|=1\}\\
&&\mathcal{Z}=\{\alpha=(\alpha_1,\alpha_2,\ldots,\alpha_s)\in [n]^s: s\in\{s_1,s_1+1,\ldots,s_2\}, |r\in[s]:\alpha_r\in (H_1\cup H_2)|\geq2\}
\end{eqnarray*}
For $\eta\in\{\mathcal{X},\mathcal{Y},\mathcal{Z}\}$, define a random tensor $\mathbf{W}_{\eta}$ from the entries of $\mathbf{W}$ as follows
\begin{eqnarray*}
(\mathbf{W}_{\eta})_{(a_1,\ldots,a_s)}=\begin{cases}0&\mathrm{if}\ (a_1,\ldots,a_s)\notin \eta\\ \mathbf{W}_{(a_1,\ldots,a_s)},&\mathrm{if}\ (a_1,\ldots,a_s)\in \eta\end{cases}
\end{eqnarray*}
For each $u\in H_{1}$ and $v\in H_{2}$, let
\begin{eqnarray*}
\mathcal{X}_{uv}=\langle\mathbf{W}_{\mathcal{X}},\Phi*(\mathbf{A}_{y^{(uv)}}-\mathbf{A}_{y}) \rangle\\
\mathcal{Y}_{uv}=\langle\mathbf{W}_{\mathcal{Y}},\Phi*(\mathbf{A}_{y^{(uv)}}-\mathbf{A}_{y}) \rangle\\
\mathcal{Z}_{uv}=\langle\mathbf{W}_{\mathcal{Z}},\Phi*(\mathbf{A}_{y^{(uv)}}-\mathbf{A}_{y}) \rangle
\end{eqnarray*}

\begin{lemma}\label{l69}The followings are true:
\begin{enumerate}
\item $\mathcal{X}_{uv}=0$ for $u\in H_{1}$ and $v\in H_{2}$.
\item For each $u\in H_{1}$ and $v\in H_{2}$, the variables $\mathcal{Y}_{uv}$ and $\mathcal{Z}_{uv}$ are independent.
\item Each $\mathcal{Y}_{uv}$ can be decomposed into $Y_u+Y_v$ where $\{Y_u\}_{u\in H_{1}}\cup \{Y_v\}_{v\in H_{2}}$ is a collection of i.i.d.~Gaussian random variables.
\end{enumerate}
\end{lemma}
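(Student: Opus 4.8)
The plan is to reduce all three parts to two structural facts. First, since $y^{(uv)}$ agrees with $y$ off $\{u,v\}$ (see (\ref{yab1})), the tensor $\mathbf{A}_{y^{(uv)}}-\mathbf{A}_y$ is supported precisely on those hyperedges that contain $u$ or $v$. Second, $\mathcal{X},\mathcal{Y},\mathcal{Z}$ partition $\cup_{s=s_1}^{s_2}[n]^s$, every member of $\mathcal{Y}$ meets $H_1\cup H_2$ in exactly one coordinate, and the Gaussian entries $\mathbf{W}_\alpha$ attached to distinct tuples $\alpha$ are independent.

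Part (1) is then immediate: $\mathbf{W}_{\mathcal{X}}$ is supported on tuples disjoint from $H_1\cup H_2$, hence disjoint from $\{u,v\}$, so each summand of $\mathcal{X}_{uv}=\langle\mathbf{W}_{\mathcal{X}},\Phi*(\mathbf{A}_{y^{(uv)}}-\mathbf{A}_y)\rangle$ is zero. For part (2), I would observe that $\mathcal{Y}_{uv}$ is a fixed linear combination of $\{\mathbf{W}_\alpha:\alpha\in\mathcal{Y}\}$ and $\mathcal{Z}_{uv}$ a fixed linear combination of $\{\mathbf{W}_\alpha:\alpha\in\mathcal{Z}\}$; since $\mathcal{Y}\cap\mathcal{Z}=\emptyset$ and the $\mathbf{W}_\alpha$ are independent, $\mathcal{Y}_{uv}$ and $\mathcal{Z}_{uv}$ are independent.

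Part (3) is the substantive step. For $g\in H_1\cup H_2$ I would set $\mathcal{Y}^{(g)}:=\{\alpha\in\mathcal{Y}:g\in\{\alpha_1,\ldots,\alpha_s\}\}$; since each $\alpha\in\mathcal{Y}$ has exactly one coordinate in $H_1\cup H_2$ (and so contains such a vertex only once), the sets $\mathcal{Y}^{(g)}$ are pairwise disjoint, and any $\alpha\in\mathcal{Y}$ with $(\mathbf{A}_{y^{(uv)}}-\mathbf{A}_y)_\alpha\ne0$ lies in exactly one of $\mathcal{Y}^{(u)},\mathcal{Y}^{(v)}$. This yields $\mathcal{Y}_{uv}=Y_u+Y_v$ with $Y_g:=\sum_{\alpha\in\mathcal{Y}^{(g)}}\mathbf{W}_\alpha(\mathbf{A}_{y^{(uv)}}-\mathbf{A}_y)_\alpha/\sigma_\alpha$. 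Next I would invoke the hypothesis that $y$ is constant on $H_1$ and constant on $H_2$ (condition (3) of Theorem~\ref{p210}): for $\alpha\in\mathcal{Y}^{(u)}$ every coordinate other than $u$ lies outside $H_1\cup H_2$, so $y^{(uv)}$ and $y$ agree there and $(\mathbf{A}_{y^{(uv)}}-\mathbf{A}_y)_\alpha=\phi(\ldots,y(v),\ldots)-\phi(\ldots,y(u),\ldots)$, which depends only on the common $y$-values on $H_1$ and $H_2$ and not on the particular $v$; hence $Y_u$ does not depend on the choice of $v\in H_2$, and symmetrically $Y_v$ does not depend on $u$, which legitimizes the notation. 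Independence of the family $\{Y_u\}_{u\in H_1}\cup\{Y_v\}_{v\in H_2}$ then follows from the pairwise disjointness of the $\mathcal{Y}^{(g)}$ together with the independence of the $\mathbf{W}_\alpha$, and each $Y_g$ is a centered Gaussian whose variance, after expanding $\mathcal{Y}^{(g)}$ as a sum over $s\in\{s_1,\ldots,s_2\}$, the position $j$ of $g$, and the remaining $s-1$ coordinates ranging over $([n]\setminus(H_1\cup H_2))^{s-1}$, is exactly the quantity that condition (5) of Theorem~\ref{p210} assumes to be constant in $g$; hence the $Y_g$ are i.i.d.

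I expect the main obstacle to be the bookkeeping in part (3): one must verify that the splitting of $\mathcal{Y}_{uv}$ over $\mathcal{Y}^{(u)}$ and $\mathcal{Y}^{(v)}$ is simultaneously disjoint and exhaustive, and that the resulting variance matches condition (5) of Theorem~\ref{p210} term by term — including that the squared $\phi$-difference is symmetric in the two communities, so the variance formula is the same whether $g\in H_1$ or $g\in H_2$. Both reductions rest only on the defining property of $\mathcal{Y}$, namely that a member of $\mathcal{Y}$ meets $H_1\cup H_2$ in exactly one coordinate, together with the constancy of $y$ on each of $H_1$ and $H_2$.
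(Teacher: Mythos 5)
Your proposal is correct and follows essentially the same route as the paper: zero support off $H_1\cup H_2$ for part (1), disjointness of $\mathcal{Y}$ and $\mathcal{Z}$ for part (2), and for part (3) the same decomposition $\mathcal{Y}_{uv}=Y_u+Y_v$ over the pairwise disjoint sets $\{\alpha\in\mathcal{Y}:g\in\alpha\}$, with independence from disjoint supports and identical distribution from condition (5) of Theorem \ref{p210}. Your extra remark that condition (3) (constancy of $y$ on $H_1$ and on $H_2$) makes $Y_u$ independent of the choice of $v$, so the notation is legitimate, is a point the paper uses only implicitly, but it is the same argument.
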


\begin{proof}
Note that for $J_s=(j_1,j_2,\ldots,j_s)\in[n]^s$,
\begin{small}
\begin{eqnarray}
&&(\mathbf{A}_{y^{(uv)}}-\mathbf{A}_{y})_{J_s}\label{kab1}\\
&=&\begin{cases}\phi(y^{(uv)}(j_1),y^{(uv)}(j_2),\ldots,y^{(uv)}(j_s))-\phi(y(j_1),y(j_2),\ldots,y(j_s))& \mathrm{if}\ \{a,b\}\cap\{j_1,\ldots,j_s\}\neq \emptyset \\0&\mathrm{otherwise}.\end{cases}\notag
\end{eqnarray}
\end{small}

It is straightforward to check (1). (2) holds because $\mathcal{Y}\cap\mathcal{Z}=\emptyset$.

For $g\in H_{1}\cup H_{2}$, let $\mathcal{Y}_g\subseteq \mathcal{Y}$ be defined by
\begin{eqnarray*}
\mathcal{Y}_g=\{\alpha=(\alpha_1,\alpha_2,\ldots,\alpha_s)\in \mathcal{Y}: g\in\{\alpha_1,\ldots,\alpha_s\}\}.
\end{eqnarray*}
Note that for $g_1,g_2\in H_{1}\cup H_{2}$ and $g_1\neq g_2$, $\mathcal{Y}_{g_1}\cap \mathcal{Y}_{g_2}=\emptyset$. Moreover, $\mathcal{Y}=\cup_{g\in H_{1}\cup H_{2}}\mathcal{Y}_g$. Therefore
\begin{eqnarray*}
\mathcal{Y}_{ab}=\sum_{g\in H_{1}\cup H_{2}}\langle\mathbf{W}_{\mathcal{Y}_g},\Phi*(\mathbf{A}_{y^{(ab)}}-\mathbf{A}_{y}) \rangle
\end{eqnarray*}
Note also that $\langle\mathbf{W}_{\mathcal{Y}_g},\Phi*(\mathbf{A}_{y^{(ab)}}-\mathbf{A}_{y}) \rangle=0$, if $g\notin \{a,b\}$. Hence
\begin{eqnarray*}
\mathcal{Y}_{ab}=\sum_{\alpha\in\mathcal{Y}_a\cup \mathcal{Y}_b}\frac{(\mathbf{W})_{\alpha}\cdot\{(\mathbf{A}_{y^{(ab)}}-\mathbf{A}_{y})_{\alpha}\}}{\sigma_{\alpha}}
\end{eqnarray*}

So, we can define
\begin{eqnarray*}
Y_a:=&&\sum_{\alpha\in \mathcal{Y}_a}\frac{(\mathbf{W})_{\alpha}\cdot\{(\mathbf{A}_{y^{(ab)}}-\mathbf{A}_{y})_{\alpha}\}}{\sigma_{\alpha}}
\end{eqnarray*}
By (\ref{kab1}) we obtain
\begin{eqnarray*}
Y_a&=&\sum_{\alpha\in \mathcal{Y}_a}\frac{(\mathbf{W})_{\alpha}\cdot\{(\mathbf{A}_{y^{(ab)}}-\mathbf{A}_{y})_{\alpha}\}}{\si_{\alpha}}\\
&=&\sum_{s=s_1}^{s_2}\sum_{j=1}^{s}\sum_{(i_1,\ldots,\widehat{i}_j,\ldots,i_s)\in([n]\setminus(H_{1}\cup H_2))^{s-1}}\frac{1}{\sigma_{(i_1,\ldots,i_{j-1},a,i_{j+1},\ldots,i_s)}}\\
&&\left\{(\phi(y(i_1),\ldots,y^{(ab)}(a),\ldots,y(i_s))-\phi(y(i_1),\ldots,y(a),\ldots,y(i_s)))(\bW)_{(i_1,\ldots,i_{j-1},a,i_{j+1},\ldots,i_s)}\right\}
\end{eqnarray*}
Similarly, define
\begin{eqnarray*}
Y_b:&=&\sum_{\alpha\in \mathcal{Y}_b}\frac{(\mathbf{W})_{\alpha}\cdot\{(\mathbf{A}_{y^{(ab)}}-\mathbf{A}_{y})_{\alpha}\}}{\sigma_{\alpha}}\\
&=&\sum_{s=s_1}^{s_2}\sum_{j=1}^{s}\sum_{(i_1,\ldots,\widehat{i}_j,\ldots,i_s)\in([n]\setminus(H_{1}\cup H_{2}))^{s-1}}\frac{1}{\sigma_{(i_1,\ldots,i_{j-1},b,i_{j+1},\ldots,i_s)}}\\
&&\left\{(\phi(y(i_1),\ldots,y^{(ab)}(b),\ldots,y(i_s))-\phi(y(i_1),\ldots,y(b),\ldots,y(i_s)))(\bW)_{(i_1,\ldots,i_{j-1},b,i_{j+1},\ldots,i_s)}\right\}
\end{eqnarray*}
Then $\mathcal{Y}_{ab}=Y_a+Y_b$ and $\{Y_g\}_{g\in H_{1}\cup H_{2}}$ is a collection of independent Gaussian random variables. Moreover, the variance of $Y_g$ is  
\begin{eqnarray*}
&&\sum_{s=s_1}^{s_2}\sum_{j=1}^{s}\sum_{(i_1,\ldots,\widehat{i}_j,\ldots,i_s)\in([n]\setminus (H_1\cup H_2))^{s-1}}\frac{1}{\sigma_{(i_1,\ldots,i_{j-1},g,i_{j+1},\ldots,i_s)}^2}\\
&&(\phi(y(i_1),\ldots,y(b),\ldots,y(i_s))-\phi(y(i_1),\ldots,y(a),\ldots,y(i_s)))^2\\
\end{eqnarray*}
By Assumption (6) of the proposition,this is independent of $g$.
\end{proof}

By the Lemma \ref{l69}, we obtain
\begin{eqnarray*}
\langle\mathbf{W},\Phi*(\mathbf{A}_{y^{(ab)}}-\mathbf{A}_{y}) \rangle=Y_a+Y_b+\mathcal{Z}_{ab}
\end{eqnarray*}
Moreover,
\begin{eqnarray*}
\max_{u\in H_{1},v\in H_{2}}Y_u+Y_v+\mathcal{Z}_{uv}&\geq& \max_{u\in H_{1},v\in H_{2}}(Y_u+Y_v)-\max_{u\in H_{1},v\in H_{2}}(-\mathcal{Z}_{uv})\\
&=&\max_{u\in H_{1}} Y_u+\max_{v\in H_{2}}Y_v-\max_{u\in H_{1},v\in H_{2}}(-\mathcal{Z}_{uv})
\end{eqnarray*}

By Lemma \ref{mg} we obtain that  when $\epsilon, h$ satisfy (\ref{epn}) with $N$ replaced by $h$, each one of the following two events
\begin{eqnarray*}
F_1:=\left\{\max_{u\in H_{1}}\frac{Y_u}{\|\Phi*(\bA_{y^{(uv)}}-\bA_y)\|^2}\geq (1-\epsilon)\sqrt{2\log h\cdot \min_{u\in H_{1}}\frac{\mathrm{Var}(Y_u)}{(L_{\Phi}(y^{(u,v)},y))^2}}\right\}\\
F_2:=\left\{\max_{v\in H_{2}}\frac{Y_v}{\|\Phi*(\bA_{y^{(uv)}}-\bA_y)\|^2}\geq (1-\epsilon)\sqrt{{2\log h\cdot \min_{v\in H_{2}}\frac{\mathrm{Var}(Y_v)}{(L_{\Phi}(y^{(u,v)},y))^2}}}\right\}
\end{eqnarray*}
has probability at least $1-e^{-h^{\epsilon}}$. Moreover, the event 
\begin{eqnarray*}
F_3:=\left\{\max_{u\in H_{1},v\in H_{2}}\frac{\mathcal{Z}_{uv}}{{\|\Phi*(\bA_{y^{(uv)}}-\bA_y)\|^2}}\leq (1+\epsilon)\sqrt{2\log( 2h)\cdot \max_{u\in H_{1},v\in H_{2}} \frac{\mathrm{Var}(Z_{uv})}{(L_{\Phi}(y^{(uv)},y))^2}}\right\}
\end{eqnarray*}
occurs with probability at least $1-h^{-2\epsilon}$. Then by Assumption (4) of the proposition we have
\begin{eqnarray*}
\mathrm{Var} \mathcal{Z}_{uv}
&=&\|\Phi*(\mathbf{A}_{y^{(uv)}}-\mathbf{A}_{y})\|^2-\mathrm{Var}(Y_u)-\mathrm{Var}(Y_v)\\
&=&L_{\Phi}(y^{(uv)},y)-(1+o(1))L_{\Phi}(y^{(uv)},y)\\
&=&o(1)L_{\Phi}(y^{(uv)},y).
\end{eqnarray*}

By Assumption (5) of the proposition, for any $u\in H_1$ and $v\in H_2$, we have
\begin{eqnarray*}
\mathrm{Var}(Y_u)=\mathrm{Var}(Y_v).
\end{eqnarray*}
Moreover, by Assumption (4) of the proposition,
\begin{eqnarray*}
\mathrm{Var}(Y_u)+\mathrm{Var}(Y_v)=(1+o(1))L_{\Phi}(y^{(uv)},y).
\end{eqnarray*}

Hence the probability of the event
\begin{eqnarray*}
&&F:=\left\{\max_{a\in H_1,b\in H_{2}}\frac{\langle\mathbf{W},\Phi*(\mathbf{A}_{y^{(ab)}}-\mathbf{A}_{y}) \rangle}{\|\Phi*(\bA_{y^{(ab)}}-\bA_y)\|^2}\geq \frac{(1-\epsilon)\sqrt{2\log h}}{\max_{u\in H_1,v\in H_2}L_{\Phi}(y^{(u,v)},y)} \right.\\
&&\left.\left(\sqrt{\min_{u\in H_1}\mathrm{Var}(Y_u)}+\sqrt{\min_{v\in H_2}\mathrm{Var}(Y_v)}-(1+o(1))\sqrt{\max_{u\in H_1,v\in H_2}\mathrm{Var}(Z_{uv})}\right)\right\}\\
&=&\left\{\max_{a\in H_1,b\in H_{2}}\frac{\langle\mathbf{W},\Phi*(\mathbf{A}_{y^{(ab)}}-\mathbf{A}_{y}) \rangle}{\|\Phi*(\bA_{y^{(ab)}}-\bA_y)\|^2}\geq \frac{2(1-\epsilon)\sqrt{\log h }}{\sqrt{\max_{u\in H_1,v\in H_2}L_{\Phi}(y^{(u,v)},y)}} (1+o(1))\right\}
\end{eqnarray*}
is at least 
\begin{eqnarray*}
\mathrm{Pr}(F_1\cap F_2\cap F_3)&=&1-\mathrm{Pr}((F_1)^c\cup (F_2)^c\cup (F_3)^c)\\
&\geq &1- \mathrm{Pr}((F_1)^c)-\mathrm{Pr}((F_2)^c)-\mathrm{Pr}((F_3)^c)\\
&\geq &1-2e^{-h^{\epsilon}}-h^{-2\epsilon}.
\end{eqnarray*}

When (\ref{as51}) holds, we have
\begin{eqnarray*}
&&\mathrm{Pr}\left(\mathrm{max}_{a,b\in[n],y(a)\neq y(b)}\frac{2\langle\mathbf{W},\Phi*(\mathbf{A}_{y^{(ab)}}-\mathbf{A}_{y} )\rangle}{\|\Phi*(\bA_{y^{(ab)}}-\bA_y)\|^2}>1)\right)
\geq \mathrm{Pr}(F)\rightarrow 1,
\end{eqnarray*}
as $n\rightarrow \infty$. Then the proposition follows.
$\hfill\Box$

\appendix

\section{Maximum of Gaussian Random Variables}\label{adl}

\begin{lemma}\label{mg}Let $G_1,\ldots, G_N$ be Gaussian random variables with mean $0$. Let $\epsilon\in (0,1)$. Then 
\begin{eqnarray*}
\mathrm{Pr}\left(\max_{i=1,\ldots,N}G_i>(1+\epsilon)\sqrt{2\max_{i\in[N]}\mathrm{Var}(G_i)\log N}\right)\leq N^{-\epsilon}
\end{eqnarray*}
and moreover, if $G_i$'s are independent, and $\epsilon, N$ satisfy
\begin{eqnarray}
\frac{N^{\epsilon-\epsilon^2}(1-\epsilon)\sqrt{2\log N}}{\sqrt{2\pi}(1+2(1-\epsilon)^2\log N)}>1\label{epn}
\end{eqnarray}
Then
\begin{eqnarray*}
\mathrm{Pr}\left(\max_{i=1,\ldots,N}G_i<(1-\epsilon)\sqrt{2\min_{j\in[N]}\mathrm{Var}(G_j)\log N}\right)\leq \exp(-N^{\epsilon})
\end{eqnarray*}
\end{lemma}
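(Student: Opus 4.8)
The plan is to establish the two inequalities separately. The first is a union bound combined with the standard sub-Gaussian (Chernoff) upper tail estimate; the second uses independence to pass to a product of marginal probabilities and then inserts the classical Mills-ratio lower bound for the Gaussian tail.

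For the upper bound, write $\sigma_i^2:=\mathrm{Var}(G_i)$, $\sigma_{\max}^2:=\max_{i\in[N]}\sigma_i^2$, and set $t:=(1+\epsilon)\sqrt{2\sigma_{\max}^2\log N}$. Since $G_i/\sigma_i\sim\sN(0,1)$, $\sigma_i\le\sigma_{\max}$, and $\mathrm{Pr}(\xi>u)\le e^{-u^2/2}$ for $\xi\sim\sN(0,1)$ and $u\ge 0$, I would estimate
\begin{eqnarray*}
\mathrm{Pr}(G_i>t)=\mathrm{Pr}\!\left(\frac{G_i}{\sigma_i}>\frac{t}{\sigma_i}\right)\le\mathrm{Pr}\!\left(\frac{G_i}{\sigma_i}>\frac{t}{\sigma_{\max}}\right)\le e^{-(1+\epsilon)^2\log N}=N^{-(1+\epsilon)^2}.
\end{eqnarray*}
A union bound over $i\in[N]$ then yields $\mathrm{Pr}(\max_i G_i>t)\le N^{1-(1+\epsilon)^2}$, and since $1-(1+\epsilon)^2=-2\epsilon-\epsilon^2\le-\epsilon$ whenever $\epsilon\ge 0$, this is at most $N^{-\epsilon}$, which is the first claim.

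For the lower bound, set $\sigma_{\min}^2:=\min_{j\in[N]}\mathrm{Var}(G_j)$ and $t:=(1-\epsilon)\sqrt{2\sigma_{\min}^2\log N}$; we may assume $N\ge 2$ (otherwise the left-hand side of (\ref{epn}) vanishes and the hypothesis fails) and $\sigma_{\min}>0$ (otherwise $t=0$, the relevant $G_i$ is identically $0$, and $\mathrm{Pr}(\max_i G_i<0)=0$ trivially satisfies the bound). Then every $\sigma_i\ge\sigma_{\min}>0$, so $\mathrm{Pr}(G_i\ge t)=\mathrm{Pr}(G_i/\sigma_i\ge t/\sigma_i)\ge\mathrm{Pr}(\xi\ge u_0)$ with $u_0:=t/\sigma_{\min}=(1-\epsilon)\sqrt{2\log N}>0$. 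Using the Mills-ratio lower bound $\mathrm{Pr}(\xi>u)\ge\frac{u}{\sqrt{2\pi}(1+u^2)}e^{-u^2/2}$ for $u>0$, together with $e^{-u_0^2/2}=N^{-(1-\epsilon)^2}$ and $1+u_0^2=1+2(1-\epsilon)^2\log N$, I obtain $\mathrm{Pr}(G_i\ge t)\ge q$ for every $i$, where
\begin{eqnarray*}
q:=\frac{(1-\epsilon)\sqrt{2\log N}}{\sqrt{2\pi}\left(1+2(1-\epsilon)^2\log N\right)}\,N^{-(1-\epsilon)^2}.
\end{eqnarray*}
By independence, $\mathrm{Pr}(\max_i G_i<t)=\prod_{i\in[N]}\left(1-\mathrm{Pr}(G_i\ge t)\right)\le(1-q)^N\le e^{-qN}$. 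Finally, $1-(1-\epsilon)^2=2\epsilon-\epsilon^2$, so
\begin{eqnarray*}
qN=\frac{(1-\epsilon)\sqrt{2\log N}}{\sqrt{2\pi}\left(1+2(1-\epsilon)^2\log N\right)}\,N^{2\epsilon-\epsilon^2},
\end{eqnarray*}
and hypothesis (\ref{epn}) states precisely that the displayed fraction multiplied by $N^{\epsilon-\epsilon^2}$ exceeds $1$; hence $qN>N^{(2\epsilon-\epsilon^2)-(\epsilon-\epsilon^2)}=N^{\epsilon}$, and therefore $\mathrm{Pr}(\max_i G_i<t)\le e^{-qN}<\exp(-N^{\epsilon})$.

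No step is a serious obstacle: beyond the union bound and independence, the only inputs are the two-sided Gaussian tail estimates, and the only real care is in tracking exponents of $N$ so that the rather opaque hypothesis (\ref{epn}) lines up exactly with the requirement $qN>N^{\epsilon}$. The degenerate cases $\sigma_{\min}=0$ and $N=1$ are disposed of as indicated above, so I would state them in a single sentence rather than belaboring them.
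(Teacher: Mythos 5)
Your proof is correct and follows essentially the same route as the paper: a union bound with the Gaussian upper tail for the first inequality, and independence plus the Mills-ratio lower bound for the second, with condition (\ref{epn}) entering exactly to guarantee that the per-variable success probability $q$ satisfies $qN>N^{\epsilon}$ (equivalently, the paper's $q>N^{-(1-\epsilon)}$). The only differences are cosmetic — you use the Chernoff bound $e^{-u^2/2}$ instead of the sharper tail $\frac{e^{-u^2/2}}{u\sqrt{2\pi}}$, and $(1-q)^N\le e^{-qN}$ in place of the paper's exponent-splitting step — and your brief handling of the degenerate cases is a harmless addition.
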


\begin{proof}It is known that for a Gaussian random variable $G_i$ and $x>0$,
\begin{eqnarray}
\frac{xe^{-\frac{x^2}{2}}}{\sqrt{2\pi}(1+x^2)}\leq \mathbf{Pr}\left(\frac{G_i}{\sqrt{\mathrm{Var}(G_i)}}>x\right)\leq \frac{e^{-\frac{x^2}{2}}}{x\sqrt{2\pi}}\label{gd}
\end{eqnarray}

Let $G_1,\ldots, G_N$ be $N$ Gaussian random variables. Then by (\ref{gd}) we have
\begin{eqnarray*}
&&\mathrm{Pr}\left(\max_{i\in[N]}G_i\geq (1+\epsilon)\sqrt{2\max_{i\in[N]}\mathrm{Var}(G_i)\log N}\right)\\
&\leq& \sum_{i\in[N]}\mathrm{Pr}\left(\frac{G_i}{\sqrt{\mathrm{Var}(G_i)}}\geq (1+\epsilon)\sqrt{2\log N}\right)\\
&\leq&\frac{N e^{-(1+\epsilon)^2\log N}}{2(1+\epsilon)\sqrt{\pi\log N}}\\
&\leq & N^{-\epsilon}
\end{eqnarray*}
If we further assume that $G_i$'s are independent, then
\begin{eqnarray*}
&&\mathrm{Pr}\left(\max_{i\in[N]}G_i< (1-\epsilon)\sqrt{2\min_{j\in[N]}\mathrm{Var}(G_j)\log N}\right)\\
&=&\prod_{i\in[N]}\mathrm{Pr}\left(G_i<(1-\epsilon)\sqrt{2\min_{j\in[N]}\mathrm{Var}(G_j)\log N}\right)\\
&=&\prod_{i\in[N]}\left[1-\mathrm{Pr}\left(G_i>(1-\epsilon)\sqrt{2\min_{j\in[N]}\mathrm{Var(G_j)}\log N}\right)\right]\\
&\leq &\prod_{i\in[N]}\left[1-\mathrm{Pr}\left(\frac{G_i}{\sqrt{\mathrm{Var}(G_i)}}>(1-\epsilon)\sqrt{2\log N}\right)\right]
\end{eqnarray*}
By (\ref{gd}) we obtain
\begin{eqnarray*}
\mathrm{Pr}\left(\max_{i\in[N]}G_i< (1-\epsilon)\sqrt{2\min_{j\in[N]}\mathrm{Var}(G_j)\log N}\right)&\leq& \left(1-\frac{(1-\epsilon)\sqrt{2\log N}}{\sqrt{2\pi}(1+2(1-\epsilon)^2\log N)}\frac{1}{N^{(1-\epsilon)^2}}\right)^N
\end{eqnarray*}
When (\ref{epn}) holds, we have 
\begin{eqnarray*}
\mathrm{Pr}\left(\max_{i\in[N]}G_i< (1-\epsilon)\sqrt{2\min_{j\in[N]}\mathrm{Var}(G_j)\log N}\right)\leq \left(1-\frac{1}{N^{1-\epsilon}}\right)^{N^{1-\epsilon}\cdot N^{\epsilon}}\leq  e^{-N^{\epsilon}}
\end{eqnarray*}
Then the lemma follows.
\end{proof}

\bigskip
\bigskip
\noindent{\textbf{Acknowledgements.}} ZL's research is supported by National Science Foundation grant 1608896 and Simons Foundation grant 638143. 
\bibliography{cdh}
\bibliographystyle{amsplain}

\end{document}